\theoremstyle{plain}
\newtheorem{theorem}{Theorem}
\newtheorem{lemma}[theorem]{Lemma}
\newtheorem{claim}[theorem]{Claim}
\newtheorem{proposition}[theorem]{Proposition}
\newtheorem{corollary}[theorem]{Corollary}
\theoremstyle{definition}
\newtheorem{example}{Example}
\title{Weighted Triangle-free 2-matching Problem \\ with Edge-disjoint Forbidden Triangles}
\author{Yusuke Kobayashi\thanks{Research Institute for Mathematical Sciences, Kyoto University, Japan. Supported by JSPS KAKENHI Grant Numbers JP16K16010, 16H03118, and JP18H05291, Japan. Email: yusuke@kurims.kyoto-u.ac.jp}}
\begin{document}

\maketitle

\begin{abstract}
The weighted $\mathcal{T}$-free $2$-matching problem is the following problem: 
given an undirected graph $G$, a weight function on its edge set, and a set $\mathcal{T}$ of triangles in $G$, 
find a maximum weight $2$-matching containing no triangle in $\mathcal{T}$. 
When $\mathcal{T}$ is the set of all triangles in $G$, 
this problem is known as the weighted triangle-free $2$-matching problem, 
which is a long-standing open problem.  
A main contribution of this paper is to give a first polynomial-time algorithm for the 
weighted $\mathcal{T}$-free $2$-matching problem 
under the assumption that $\mathcal{T}$ is a set of edge-disjoint triangles.
In our algorithm, 
a key ingredient is to give an extended formulation representing the solution set, that is, 
we introduce new variables and represent 
the convex hull of the feasible solutions as a projection of another polytope in a higher dimensional space. 
Although our extended formulation has exponentially many inequalities, 
we show that the separation problem can be solved in polynomial time,
which leads to a polynomial-time algorithm for the weighted $\mathcal{T}$-free $2$-matching problem. 
\end{abstract}

\section{Introduction}\label{sec:intro}

\subsection{2-matchings without Short Cycles}

In an undirected graph, 
an edge set $M$ is said to be a {\em $2$-matching}\footnote{Although such an edge set is 
often called a {\em simple 2-matching} in the literature, we call it a {\em $2$-matching} to simplify the description.} 
if each vertex is incident to at most two edges in $M$.
Finding a $2$-matching of maximum size is a classical combinatorial optimization problem, which 
can be solved efficiently by using a matching algorithm.  
By imposing restrictions on $2$-matchings, 
various extensions have been introduced and studied in the literature.  
Among them,   
the problem of finding a maximum $2$-matching without short cycles has attracted attentions, 
because it has applications to approximation algorithms for TSP and its variants.  
We say that a $2$-matching $M$ is {\em $C_{\le k}$-free}
if $M$ contains no cycle of length $k$ or less, and 
the {\em $C_{\le k}$-free $2$-matching problem} is to find
a $C_{\le k}$-free $2$-matching of maximum size in a given graph. 
When $k \le 2$, every $2$-matching without self-loops and parallel edges is $C_{\le k}$-free, 
and hence the $C_{\le k}$-free $2$-matching problem can be solved in polynomial time. 
On the other hand, when $n/2 \le k \le n-1$, where $n$ is the number of vertices in the input graph, 
the $C_{\le k}$-free $2$-matching problem is NP-hard, because it decides the existence of a Hamiltonian cycle. 
These facts motivate us to investigate the borderline between polynomially solvable cases and NP-hard cases of the problem. 
Hartvigsen~\cite{HartD} gave
a polynomial-time algorithm for the $C_{\le 3}$-free $2$-matching problem, and
Papadimitriou showed that 
the problem is NP-hard when $k \ge 5$ (see \cite{CP80}). 
The polynomial solvability of the $C_{\le 4}$-free $2$-matching problem is still open, whereas 
some positive results are known for special cases. 
For the case when the input graph is restricted to be bipartite, 
Hartvigsen~\cite{Hart99}, Kir\'{a}ly~\cite{KiralyTR}, and Frank~\cite{Fra03} gave
min-max theorems, 
Hartvigsen~\cite{Hart06} and Pap~\cite{Pap07} designed polynomial-time algorithms, 
Babenko~\cite{Bab12} improved the running time, and 
Takazawa~\cite{TakDAM17} showed decomposition theorems. 
Recently, Takazawa~\cite{TakDO17,TakIPCO17} extended these results to a generalized problem. 
When the input graph is restricted to be subcubic, i.e., the maximum degree is at most three, 
B\'{e}rczi and V\'{e}gh~\cite{BV10} gave a polynomial-time algorithm for the $C_{\le 4}$-free $2$-matching problem. 
Relationship between $C_{\le k}$-free $2$-matchings and jump systems is studied in~\cite{BK0,Cun02,KST12}.

There are a lot of studies also on the weighted version of the $C_{\le k}$-free $2$-matching problem. 
In the weighted problem, an input consists of a graph and a weight function on the edge set, and 
the objective is to find a $C_{\le k}$-free $2$-matching of maximum total weight.  
Kir\'{a}ly proved
that the weighted $C_{\le 4}$-free $2$-matching problem
is NP-hard even if the input graph is restricted to be bipartite (see \cite{Fra03}), 
and a stronger NP-hardness result was shown in \cite{BK0}.
Under the assumption that the weight function satisfies a certain property called {\em vertex-induced on every square}, 
Makai~\cite{Mak07} gave a polyhedral description and 
Takazawa~\cite{Tak0} designed a combinatorial polynomial-time algorithm 
for the weighted $C_{\le 4}$-free $2$-matching problem in bipartite graphs. 
The case of $k=3$, which we call the {\em weighted triangle-free $2$-matching problem}, is a long-standing open problem. 
For the weighted triangle-free $2$-matching problem in subcubic graphs, 
Hartvigsen and Li~\cite{HLIPCO} gave a polyhedral description and 
a polynomial-time algorithm, followed by a slight generalized polyhedral description by B\'{e}rczi~\cite{Berczi0} 
and another polynomial-time algorithm by Kobayashi~\cite{Kob0}. 
Relationship between $C_{\le k}$-free $2$-matchings and discrete convexity is studied in~\cite{Kob0,KobDAM14,KST12}.

\subsection{Our Results}

The previous papers on the weighted triangle-free $2$-matching problem~\cite{Berczi0,HLIPCO,Kob0} 
deal with a generalized problem in which 
we are given a set $\mathcal{T}$ of forbidden triangles as an input in addition to a graph and a weight function. 
The objective is to find a maximum weight $2$-matching that contains no triangle in $\mathcal{T}$, 
which we call the {\em weighted $\mathcal{T}$-free $2$-matching problem}. 
In this paper, we focus on the case when $\mathcal{T}$ is a set of edge-disjoint triangles, i.e., 
no pair of triangles in $\mathcal{T}$ shares an edge in common.  
A main contribution of this paper is to give a first polynomial-time algorithm for the 
weighted $\mathcal{T}$-free $2$-matching problem 
under the assumption that $\mathcal{T}$ is a set of edge-disjoint triangles.
Note that we impose an assumption only on $\mathcal{T}$, and 
no restriction is required for the input graph. 
We now describe the formal statement of our result.

Let $G=(V, E)$ be an undirected graph with vertex set $V$ and edge set $E$, 
which might have self-loops and parallel edges. 
For a vertex set $X \subseteq V$, 
let $\delta_G(X)$ denote the set of edges between $X$ and $V \setminus X$. 
For $v \in V$, $\delta_G(\{v\})$ is simply denoted by $\delta_G(v)$.  
For $v \in V$, 
let $\dot\delta_G (v)$ denote the multiset of edges incident to $v\in V$, 
that is, a self-loop incident to $v$ is counted twice. 
We omit the subscript $G$ if no confusion may arise.  
For $b \in \mathbf{Z}_{\ge 0}^V$, an edge set $M \subseteq E$ is said to be a {\em $b$-matching} (resp.~{\em $b$-factor}) if
$|M \cap \dot\delta(v)| \le b(v)$ (resp.~$|M \cap \dot\delta(v)| = b(v)$) for every $v \in V$.
If $b(v)=2$ for every $v \in V$, a $b$-matching and a $b$-factor are called 
a {\em $2$-matching} and a {\em $2$-factor}, respectively. 
Let $\mathcal{T}$ be a set of triangles in $G$, where a {\em triangle} is a cycle of length three. 
For a triangle $T$, let $V(T)$ and $E(T)$ denote the vertex set and the edge set of $T$, respectively. 
An edge set $M \subseteq E$ is said to be {\em $\mathcal{T}$-free} 
 if $E(T) \not\subseteq M$ for every $T \in \mathcal{T}$. 
For a vertex set $S \subseteq V$, 
let $E[S]$ denote the set of all edges with both endpoints in $S$. 
For an edge weight vector $w \in \mathbf{R}^E$, 
we consider the problem of finding a $\mathcal{T}$-free $b$-matching (resp.~$b$-factor)
maximizing $w(M)$, 
which we call the {\em weighted $\mathcal{T}$-free $b$-matching (resp.~$b$-factor) problem}. 
Note that, for a set $A$ and a vector $c \in \mathbf{R}^A$, 
we denote $c(A) = \sum_{a \in A} c(a)$.

Our main result is formally stated as follows. 

\begin{theorem}\label{thm:algo}
There exists a polynomial-time algorithm for the following problem: 
given a graph $G=(V, E)$, $b(v) \in \mathbf{Z}_{\ge 0}$ for each $v \in V$,  
a set $\mathcal{T}$ of edge-disjoint triangles, and a weight $w(e) \in \mathbf{R}$ for each $e \in E$, 
find a $\mathcal{T}$-free $b$-factor $M \subseteq E$ that maximizes the total weight $w(M)$.
\end{theorem}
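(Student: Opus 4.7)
}
The plan is to follow the extended-formulation strategy announced in the abstract: lift the $\mathcal{T}$-free $b$-factor polytope
$$P_{\mathcal{T}} := \operatorname{conv}\bigl\{\chi^M \in \{0,1\}^E : M \text{ is a } \mathcal{T}\text{-free } b\text{-factor of } G\bigr\}$$
into a higher-dimensional polytope $Q$ whose natural projection is $P_{\mathcal{T}}$, establish that $Q$ is integral, give a polynomial-time separation oracle for $Q$, and then appeal to the ellipsoid method to optimize $w^\top x$ over $P_{\mathcal{T}}$ in polynomial time. The reason we cannot simply augment Edmonds' $b$-factor LP with the inequalities $x(E(T)) \le 2$ for $T\in\mathcal{T}$ is that this relaxation is not integral in general; the auxiliary variables are introduced precisely to encode, for each forbidden triangle, \emph{how} it is used by the $b$-factor (which vertex of the triangle is the ``apex'' of the two chosen edges, or whether at most one edge is chosen at all).

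Concretely, for every $T\in\mathcal{T}$ with $V(T)=\{u_1,u_2,u_3\}$ and $E(T)=\{e_1,e_2,e_3\}$ (where $e_i$ is the edge opposite to $u_i$), I would introduce three new variables $y^T_{u_1}, y^T_{u_2}, y^T_{u_3}$, with the intended meaning $y^T_{u_i}=1$ iff exactly the two edges of $T$ incident to $u_i$ are selected (so $T$-freeness becomes $y^T_{u_1}+y^T_{u_2}+y^T_{u_3}\le 1$ together with linking inequalities such as $x(e_i)\ge y^T_{u_j}+y^T_{u_k}$ for $\{i,j,k\}=\{1,2,3\}$). To this I would add the degree equalities $x(\dot\delta(v))=b(v)$ and then the essential ingredient: a family of generalized blossom/odd-set inequalities, indexed by pairs $(S,F)$ of a vertex set and an edge set as in Edmonds--Pulleyblank, but corrected by terms involving the $y^T$-variables whenever $T$ straddles the cut $\delta(S)$ or lies inside $S$. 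The use of edge-disjointness is that each edge belongs to at most one triangle, so the contributions of different $T\in\mathcal{T}$ to any such inequality decouple cleanly.

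For integrality, I would take a vertex $(x^*,y^*)$ of $Q$ and uncross the family of tight generalized blossom constraints exactly as in the classical proof for $b$-matchings: show that the tight sets can be assumed laminar, then work by induction on $|V|+|E|$, contracting a tight odd set or a ``tight'' triangle (replacing $T$ by a single gadget vertex using the $y^T$-variables) to reduce to a smaller instance whose extended polytope is integral by induction. Edge-disjointness of $\mathcal{T}$ is again what makes the contraction well-defined: contracting one triangle does not destroy the structure of any other.

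The hardest step, and the one I would spend the most effort on, is the separation algorithm for the exponentially many generalized blossom inequalities. Given a candidate $(x^*,y^*)$, the standard approach is to mimic Padberg--Rao: build an auxiliary capacitated graph in which each triangle $T\in\mathcal{T}$ is replaced by a small gadget whose capacities depend on $y^*_T$ and $x^*|_{E(T)}$ so that odd-cut values in the gadgeted graph correspond exactly to slacks of the generalized blossom inequalities, and then find a minimum odd cut by $T$-cut / Gomory--Hu techniques. Showing that such a gadget exists, correctly accounting for which triangles are cut versus entirely on one side, is the main technical obstacle; once it is in place, polynomial separation of $Q$ follows, Gr\"otschel--Lov\'asz--Schrijver yields polynomial-time optimization over $Q$, projection gives the optimum over $P_{\mathcal{T}}$, and integrality of $Q$ shows that the optimum is attained at a $\mathcal{T}$-free $b$-factor, proving Theorem~\ref{thm:algo}.
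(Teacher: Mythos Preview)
Your high-level strategy---extended formulation, polynomial separation via a Padberg--Rao odd-cut reduction with per-triangle gadgets, then ellipsoid---is exactly the paper's, and the separation step matches almost verbatim: the paper replaces each $T$ by a degree-$0$ vertex $r_T$ joined to the three vertices of $T$ with capacities computed from $x$ and $y$, proves that violated generalized blossom inequalities correspond to odd cuts of value $<1$ in this auxiliary graph, and applies Padberg--Rao.

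Two points of divergence are worth flagging. First, the paper's extended formulation uses seven variables $y(T,J)$ per triangle, one for each proper subset $J\subsetneq E(T)$, with equalities $\sum_J y(T,J)=1$ and $\sum_{J\ni e} y(T,J)=x(e)$. Your three ``apex'' variables correspond only to the $|J|=2$ subsets; the paper's correction terms in the generalized blossom inequality explicitly use the other variables (e.g.\ when both crossing edges $\alpha,\beta$ of $T$ lie in $F_1$, the subtracted term is $2(y_\emptyset+y_\gamma)$), so it is not obvious your smaller variable set carries enough information.

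Second, the integrality argument in the paper is \emph{not} an uncrossing-to-laminar proof. The paper introduces a second polytope $Q\supseteq P$ obtained by relaxing the blossom correction (replacing $q^*(T)$ by a smaller $q(T)$), and proves the stronger statement that $\mathrm{proj}_E(Q)$ lies in the $\mathcal{T}$-free $b$-factor polytope, by induction on $|\mathcal{T}|$. At an extreme point either some $x(E(T))\le 2$ is tight (replace $T$ by a claw, drop it from $\mathcal{T}$, apply induction), or some relaxed blossom inequality is tight with a triangle crossing the cut in $F_1$ (split the instance along the tight cut into two instances, each with fewer triangles, and recombine their $b$-factors via an explicit gadget construction). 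The authors explicitly remark that they do not know how to prove containment for $\mathrm{proj}_E(P)$ directly---introducing the looser $Q$ is what makes the induction go through---so your uncrossing plan for $P$ alone may run into real difficulty, and in any case is a genuinely different route from the paper's.
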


A proof of this theorem is given in Section~\ref{sec:algo}. 
Since finding a maximum weight $\mathcal{T}$-free $b$-matching can be reduced to 
finding a maximum weight  $\mathcal{T}$-free $b$-factor 
by adding dummy vertices and zero-weight edges, 
Theorem~\ref{thm:algo} implies the following corollary. 

\begin{corollary}\label{cor:algo1}
There exists a polynomial-time algorithm for the following problem: 
given a graph $G=(V, E)$, $b(v) \in \mathbf{Z}_{\ge 0}$ for each $v \in V$,  
a set $\mathcal{T}$ of edge-disjoint triangles, and a weight $w(e) \in \mathbf{R}$ for each $e \in E$, 
find a $\mathcal{T}$-free $b$-matching $M \subseteq E$ that maximizes the total weight $w(M)$.
\end{corollary}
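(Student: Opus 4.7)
The plan is to derive this corollary from Theorem~\ref{thm:algo} via a standard reduction of the weighted $\mathcal{T}$-free $b$-matching problem to the weighted $\mathcal{T}$-free $b$-factor problem by appending dummy vertices and zero-weight edges.

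Given an instance $(G,b,\mathcal{T},w)$, I would construct an auxiliary graph $G'=(V',E')$ as follows. For every $v\in V$, introduce $b(v)$ fresh pendant vertices $\ell^{v}_{1},\ldots,\ell^{v}_{b(v)}$, join each pendant to $v$ by a zero-weight edge, and add a zero-weight edge between every pair of distinct pendants. Extend $b$ to $b'$ by setting $b'(v)=b(v)$ on original vertices and $b'(\ell^{v}_{i})=1$ on pendants, and extend $w$ to $w'$ by assigning weight $0$ to every new edge. Because every new edge has at least one pendant endpoint and pendants do not occur in $G$, no edge of any triangle in $\mathcal{T}$ is touched by the augmentation; consequently, viewed as a family of triangles in $G'$, the set $\mathcal{T}$ is still a set of edge-disjoint triangles.

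The key step is to establish a weight-preserving correspondence between $\mathcal{T}$-free $b$-matchings in $G$ and $\mathcal{T}$-free $b'$-factors in $G'$. In one direction, given a $\mathcal{T}$-free $b$-matching $M$ in $G$, I extend it by keeping $M$, matching $b(v)-d_{M}(v)$ pendants of each $v$ to $v$ via the pendant-to-$v$ edges, and then perfectly pairing the remaining pendants among themselves via the pendant-to-pendant edges. The number of leftover pendants is $\sum_{v}d_{M}(v)=2|M|$, which is even, so the pairing exists, and any choice works since every pair of pendants is joined by an edge. In the other direction, $M'\cap E$ is a $\mathcal{T}$-free $b$-matching of the same weight for any $\mathcal{T}$-free $b'$-factor $M'$, since dummy edges have weight $0$ and no edge of a triangle in $\mathcal{T}$ is a dummy.

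With the correspondence in hand, applying the polynomial-time algorithm of Theorem~\ref{thm:algo} to $(G',b',\mathcal{T},w')$ and restricting the output to $E$ yields the desired maximum-weight $\mathcal{T}$-free $b$-matching in $G$. There is no real technical obstacle beyond this bookkeeping; the only point worth checking explicitly is the existence of the pendant pairing in the forward direction, which is immediate from the evenness of $\sum_{v}d_{M}(v)$ together with the fact that every pair of pendants is joined by an edge.
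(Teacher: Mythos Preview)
Your proposal is correct and follows exactly the approach the paper indicates in one sentence: reduce the $\mathcal{T}$-free $b$-matching problem to the $\mathcal{T}$-free $b$-factor problem by adding dummy vertices and zero-weight edges, then invoke Theorem~\ref{thm:algo}. The only detail you might add for completeness is that one may assume $b(v)\le |\dot\delta_G(v)|$ without loss of generality, so the auxiliary graph $G'$ has polynomial size.
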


In particular, 
we can find 
a $\mathcal{T}$-free $2$-matching (or $2$-factor) $M \subseteq E$ that maximizes the total weight $w(M)$ in polynomial time
if $\mathcal{T}$ is a set of edge-disjoint triangles.

\subsection{Key Ingredient: Extended Formulation}
\label{sec:ef}

A natural strategy to solve the maximum weight $\mathcal{T}$-free $b$-factor problem is 
to give a polyhedral description of the $\mathcal{T}$-free $b$-factor polytope
as Hartvigsen and Li~\cite{HLIPCO} did for the subcubic case. 
However, as we will see in Example~\ref{exam:01}, 
giving a system of inequalities that represents the $\mathcal{T}$-free $b$-factor polytope seems to be quite difficult
even when $\mathcal{T}$ is a set of edge-disjoint triangles. 
A key idea of this paper is to give an extended formulation of the $\mathcal{T}$-free $b$-factor polytope, that is, 
we introduce new variables and represent 
the $\mathcal{T}$-free $b$-factor polytope as a projection of another polytope in a higher dimensional space. 

Extended formulations of polytopes arising from various combinatorial optimization problems have been intensively studied in the literature, and  
the main focus in this area is on
the number of inequalities that are required to represent the polytope. 
If a polytope has an extended formulation with polynomially many inequalities, then 
we can optimize a linear function in the original polytope by the ellipsoid method (see e.g.~\cite{GLSbook}). 
On the other hand, even if a linear function on a polytope can be optimized in polynomial time, 
the polytope does not necessarily have an extended formulation of polynomial size. 
In this context, the existence of a polynomial size extended formulation has been attracted attentions. 
See survey papers~\cite{CCZ10,Kai11} for previous work on extended formulations. 

In this paper, under the assumption that $\mathcal{T}$ is a set of edge-disjoint triangles, 
we give an extended formulation of the $\mathcal{T}$-free $b$-factor polytope 
that has exponentially many inequalities (Theorem~\ref{thm:main01}). 
In addition, we show that the separation problem for the extended formulation is solvable in polynomial time, and hence
we can optimize a linear function on the $\mathcal{T}$-free $b$-factor polytope by the ellipsoid method in polynomial time. 
This yields a first polynomial-time algorithm for the weighted $\mathcal{T}$-free $b$-factor (or $b$-matching) problem.  
Note that it is rare that the first polynomial-time algorithm was designed
with the aid of an extended formulation. 
To the best of our knowledge, 
the {\em weighted linear matroid parity problem}  
was the only such problem before this paper (see~\cite{IK2017}).

\subsection{Organization of the Paper}
The rest of this paper is organized as follows. 
In Section~\ref{sec:EF}, we introduce an extended formulation of the $\mathcal{T}$-free $b$-factor polytope, 
whose correctness proof is given in Section~\ref{sec:proof}. 
In Section~\ref{sec:propertyQ}, we show a few technical lemmas that will be used in the proof. 
In Section~\ref{sec:algo}, we give a polynomial-time algorithm for the weighted $\mathcal{T}$-factor problem and prove Theorem~\ref{thm:algo}. 
Finally, we conclude this paper with remarks in Section~\ref{sec:conclusion}. 
Some of the proofs are postponed to the appendix.


\section{Extended Formulation of the $\mathcal{T}$-free $b$-factor Polytope}
\label{sec:EF}

Let $G=(V, E)$ be a graph, $b \in \mathbf{Z}^V_{\ge 0}$ be a vector, and $\mathcal T$ be a set of forbidden triangles. 
Throughout this paper, we only consider the case when triangles in $\mathcal T$ are mutually edge-disjoint. 

For an edge set $M \subseteq E$, 
define its characteristic vector $x_M \in \mathbf{R}^E$ by 
\begin{equation}
x_M(e)=
\begin{cases}
1 & \mbox{if $e \in M$,} \\
0 & \mbox{otherwise}. 
\end{cases} \label{eq:chara}
\end{equation}
The {\em $\mathcal{T}$-free $b$-factor polytope} is defined as 
${\rm conv} \{x_M \mid \mbox{$M$ is a $\mathcal{T}$-free $b$-factor in $G$} \}$, 
where ${\rm conv}$ denotes the convex hull of vectors, 
and the {\em $b$-factor polytope} is defined similarly. 
Edmonds~\cite{Edm65} shows that the $b$-factor polytope is determined by the following inequalities. 
\begin{align}
&x(\dot\delta(v)) = b(v) & &(v \in V) \label{eq:01}\\
&0 \le x(e) \le 1 & & (e \in E) \label{eq:02} \\
&\sum_{e \in F_0} x(e) + \sum_{e \in F_1} (1- x(e)) \ge 1  & & ((S, F_0, F_1) \in \mathcal{F}) \label{eq:03}
\end{align}
Here, $\mathcal{F}$ is the set of all triples $(S, F_0, F_1)$ such that 
$S \subseteq V$, $(F_0, F_1)$ is a partition of $\delta(S)$, and 
$b(S) + |F_1|$ is odd. 
Note that 
$x(\dot\delta(v)) = \sum_{e \in \dot\delta(v)} x(e)$ and
$x(e)$ is added twice if $e$ is a self-loop incident to $v$.

In order to deal with $\mathcal{T}$-free $b$-factors, 
we consider the following constraint in addition to (\ref{eq:01})--(\ref{eq:03}). 
\begin{align}
&x(E(T)) \le 2 & & (T \in \mathcal{T}) \label{eq:triangle}
\end{align}
However, as we will see in Example~\ref{exam:01}, 
the system of inequalities (\ref{eq:01})--(\ref{eq:triangle}) does not 
represent the $\mathcal{T}$-free $b$-factor polytope. 
Note that when we consider uncapacitated $2$-factors, i.e., 
we are allowed to use two copies of the same edge, 
it is shown by Cornuejols and Pulleyblank~\cite{CP80uncapacitated} that 
the $\mathcal{T}$-free uncapacitated $2$-factor polytope is represented by 
$x(e) \ge 0$ for $e \in E$, $x(\dot \delta(v)) = 2$ for $v \in V$, and (\ref{eq:triangle}).

\begin{example}\label{exam:01}
Consider the graph $G=(V, E)$ in Figure~\ref{fig:01}. 
Let $b(v)=2$ for every $v \in V$ and $\mathcal{T}$ be the set of all triangles in $G$. 
Then, $G$ has no $\mathcal{T}$-free $b$-factor, i.e., the $\mathcal{T}$-free $b$-factor polytope is empty. 
For $e \in E$, let $x(e)=1$ if $e$ is drawn as a blue line in Figure~\ref{fig:01}
and let $x(e)=\frac{1}{2}$ otherwise. 
Then, we can easily check that $x$ satisfies (\ref{eq:01}), (\ref{eq:02}), and (\ref{eq:triangle}). 
Furthermore, since $x$ is represented as a linear combination of two $b$-factors $M_1$ and $M_2$ shown in Figures~\ref{fig:02} and~\ref{fig:03}, 
$x$ satisfies (\ref{eq:03}). 
\end{example}

\begin{figure}
 \begin{minipage}{0.32\hsize}
\begin{center}
\includegraphics[clip,width=3.8cm]{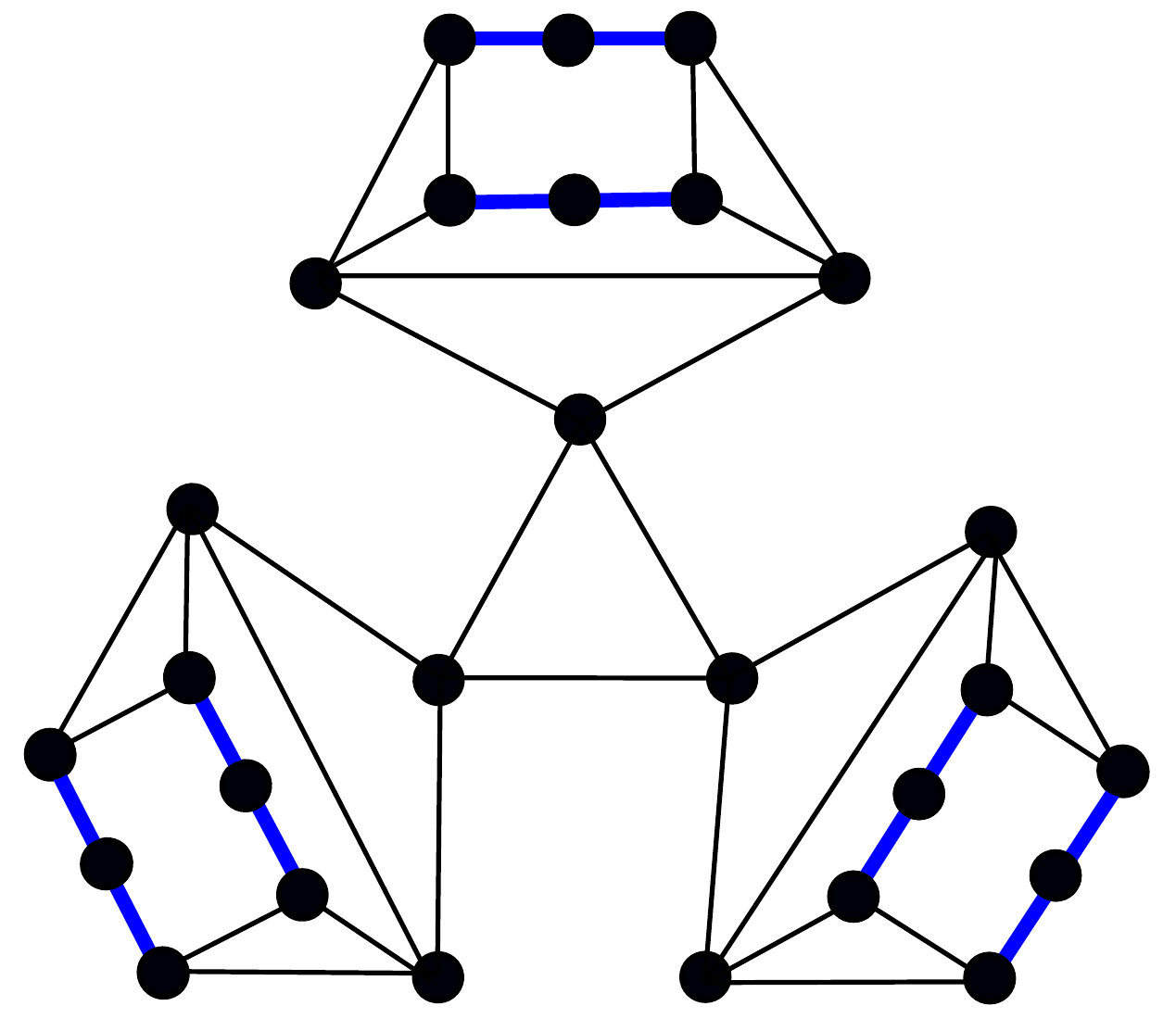}
\caption{Graph $G=(V, E)$}
\label{fig:01}
\end{center}
 \end{minipage}
 \begin{minipage}{0.32\hsize}
\begin{center}
\includegraphics[clip,width=3.8cm]{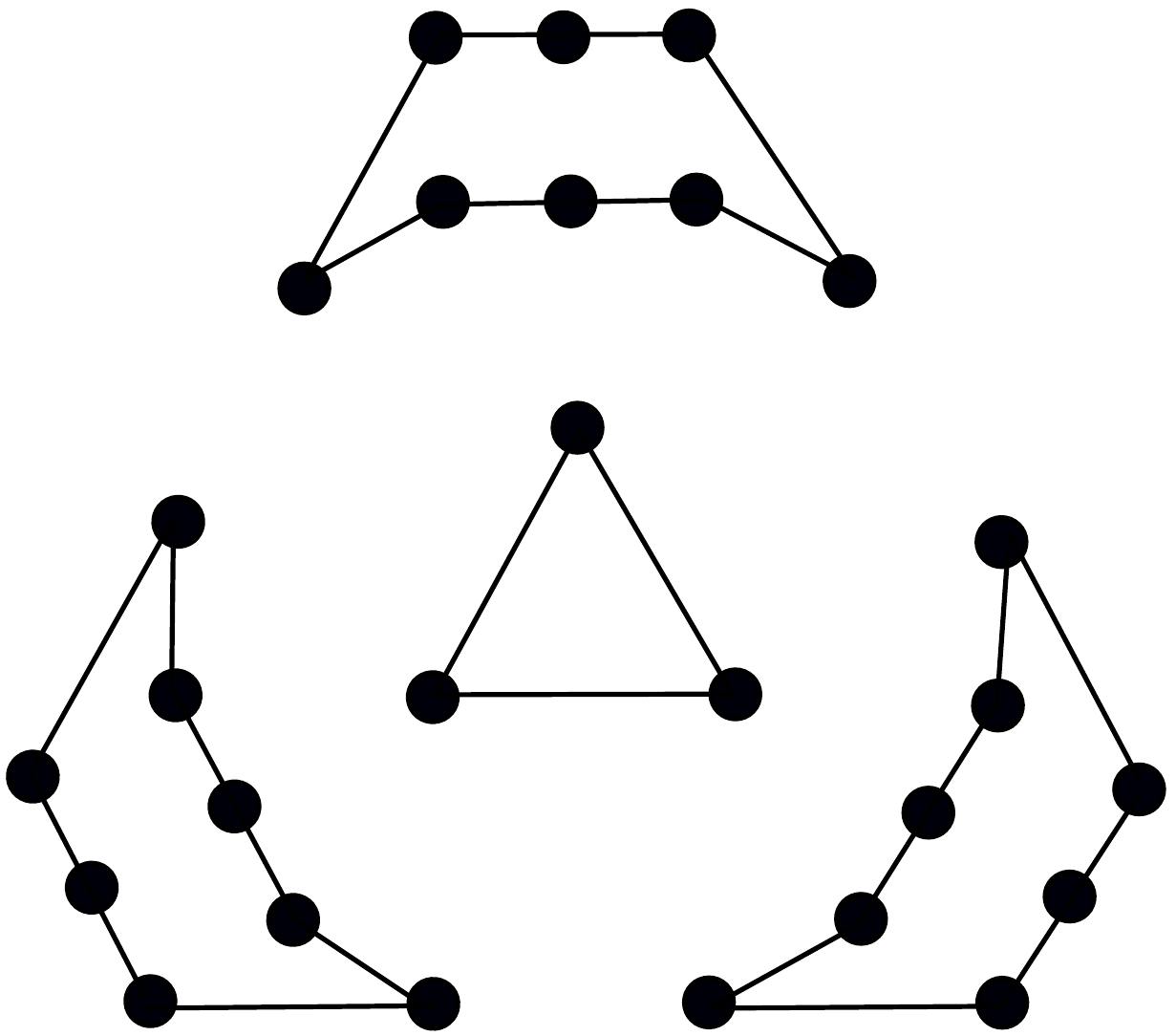}
\caption{$b$-factor $M_1$}
\label{fig:02}
\end{center}
 \end{minipage}
 \begin{minipage}{0.32\hsize}
\begin{center}
\includegraphics[clip,width=3.8cm]{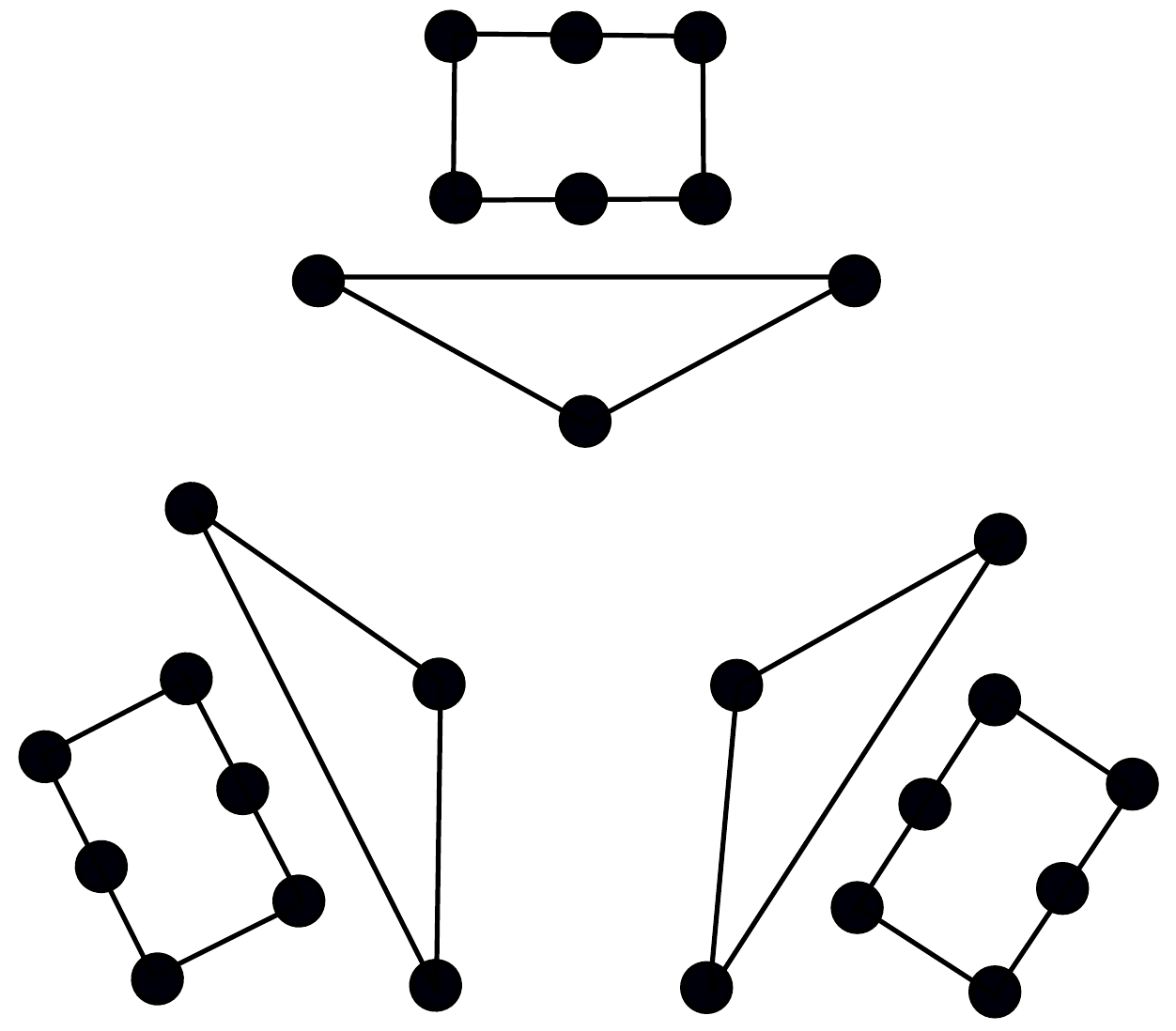}
\caption{$b$-factor $M_2$}
\label{fig:03}
\end{center}
\end{minipage}
\end{figure}

In what follows in this section, we introduce new variables and 
give an extended formulation of the $\mathcal{T}$-free $b$-factor polytope.
For $T \in \mathcal{T}$, 
we denote $\mathcal{E}_T = \{ J \subseteq E(T) \mid J \not = E(T)\}$. 
For $T \in \mathcal{T}$
 and $J \in \mathcal{E}_T$, 
we introduce a new variable $y(T, J)$. 
Roughly, $y(T, J)$ denotes the fraction of $b$-factors $M$ satisfying $M \cap E(T) = J$. 
In particular, when $x$ and $y$ are integral, 
$y(T, J) = 1$ if and only if the $b$-factor $M$ corresponding to $(x, y)$ satisfies $M \cap E(T) = J$.
We consider the following inequalities. 
\begin{align}
&\sum_{J \in \mathcal{E}_T} y(T, J) = 1 & &(T \in \mathcal{T}) \label{eq:04} \\
&\sum_{e \in J \in \mathcal{E}_T} y(T, J) = x(e)                   & & (T \in \mathcal{T},\ e \in E(T)) \label{eq:05} \\
&y(T, J) \ge 0 & & (T \in \mathcal{T},\ J \in \mathcal{E}_T) \label{eq:06}
\end{align}
If $T$ is clear from the context, $y(T, J)$ is simply denoted by $y(J)$.  
Since triangles in $\mathcal{T}$ are edge-disjoint, this causes no ambiguity unless $J=\emptyset$. 
In addition, for $\alpha, \beta \in E(T)$, $y(\{\alpha\})$, $y(\{\alpha, \beta\})$, and $y(\emptyset)$ are 
simply denoted by $y_{\alpha}$, $y_{\alpha \beta}$, and $y_{\emptyset}$, respectively.

We now strengthen (\ref{eq:03}) by using $y$. 
For $(S, F_0, F_1) \in \mathcal{F}$, 
let $\mathcal{T}_S = \{T \in \mathcal{T} \mid E(T) \cap \delta(S) \not= \emptyset \}$. 
For $T \in \mathcal{T}_S$ with
$E(T)=\{\alpha, \beta, \gamma\}$ and $E(T) \cap \delta(S) = \{\alpha, \beta\}$, 
we define 
$$
q^*(T) =
\begin{cases}
y_{\alpha} + y_{\alpha \gamma} & \mbox{if $\alpha \in F_0$ and $\beta \in F_1$,} \\
y_{\beta} + y_{\beta \gamma} & \mbox{if $\beta \in F_0$ and $\alpha \in F_1$,} \\
y_\emptyset + y_{\gamma}  & \mbox{if $\alpha, \beta \in F_1$, } \\
y_{\alpha \beta} & \mbox{if $\alpha, \beta \in F_0$.}
\end{cases}
$$ 
Note that this value depends on $(S, F_0, F_1) \in \mathcal{F}$ and $y$, but it is simply denoted by $q^*(T)$ for a notational convenience. 
We consider the following inequality. 
\begin{align}
\sum_{e \in F_0} x(e) + \sum_{e \in F_1} (1- x(e)) - \sum_{T \in \mathcal{T}_S} 2 q^*(T) \ge 1 
   \qquad ((S, F_0, F_1) \in \mathcal{F})   \label{eq:07} 
\end{align}
For $T \in \mathcal{T}_S$ with
$E(T)=\{\alpha, \beta, \gamma\}$ and $E(T) \cap \delta(S) = \{\alpha, \beta\}$, 
the contribution of $\alpha, \beta$, and $T$ to the left-hand side of (\ref{eq:07}) 
is equal to the fraction of $b$-factors $M$ such that $|M \cap \{ \alpha, \beta \}| \not\equiv |F_1 \cap \{\alpha, \beta\}| \pmod 2$ 
by the following observations. 
\begin{itemize}
\item
If $\alpha \in F_0$ and $\beta \in F_1$, then (\ref{eq:04}) and (\ref{eq:05}) show that 
$x(\alpha) = y_{\alpha} + y_{\alpha \beta} + y_{\alpha \gamma}$ and 
$1-x(\beta) = 1 - (y_{\beta} + y_{\alpha \beta} + y_{\beta \gamma}) = y_{\emptyset} + y_{\alpha} + y_{\gamma} + y_{\alpha \gamma}$.
Therefore, 
$x(\alpha) + (1-x(\beta)) - 2 q^*(T) = y_\emptyset + y_{\gamma} + y_{\alpha \beta}$, 
which denotes the fraction of $b$-factors $M$ such that $|M \cap \{ \alpha, \beta \}|$ is even.

\item
If $\beta \in F_0$ and $\alpha \in F_1$, then (\ref{eq:04}) and (\ref{eq:05}) show that
$(1-x(\alpha)) + x(\beta) - 2 q^*(T) = y_\emptyset + y_{\gamma} + y_{\alpha \beta}$, 
which denotes the fraction of $b$-factors $M$ such that $|M \cap \{ \alpha, \beta \}|$ is even.

\item
If $\alpha, \beta \in F_1$, then (\ref{eq:04}) and (\ref{eq:05}) show that
$(1-x(\alpha)) + (1-x(\beta)) - 2 q^*(T) = y_\alpha + y_\beta + y_{\alpha \gamma} + y_{\beta \gamma}$, 
which denotes the fraction of $b$-factors $M$ such that $|M \cap \{ \alpha, \beta \}|$ is odd.

\item
If $\alpha, \beta \in F_0$, then (\ref{eq:04}) and (\ref{eq:05}) show that
$x(\alpha) + x(\beta) - 2 q^*(T) = y_\alpha + y_\beta + y_{\alpha \gamma} + y_{\beta \gamma}$, 
which denotes the fraction of $b$-factors $M$ such that $|M \cap \{ \alpha, \beta \}|$ is odd.
\end{itemize}

Let $P$ be the polytope defined by 
$$
P = \{ (x, y) \in \mathbf{R}^E \times \mathbf{R}^Y \mid \mbox{$x$ and $y$ satisfy (\ref{eq:01}), (\ref{eq:02}), and (\ref{eq:triangle})--(\ref{eq:07})} \}, 
$$
where $Y = \{ (T, F) \mid T \in \mathcal{T},\ F \in \mathcal{E}_T\}$. 
Note that we do not need (\ref{eq:03}), because it is implied by (\ref{eq:07}). 
Define the projection of $P$ onto $E$ as 
$$
{\rm proj}_E (P) = \{ x \in \mathbf{R}^E \mid \mbox{There exists $y \in \mathbf{R}^Y$ such that $(x, y) \in P$} \}. 
$$
Our aim is to show that   
${\rm proj}_E (P)$ is equal to the $\mathcal{T}$-free $b$-factor polytope. 
It is not difficult to see that the $\mathcal{T}$-free $b$-factor polytope is contained in ${\rm proj}_E (P)$. 

\begin{lemma}\label{lem:easydir}
The $\mathcal{T}$-free $b$-factor polytope is contained in ${\rm proj}_E (P)$. 
\end{lemma}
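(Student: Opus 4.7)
The plan is to show that every characteristic vector $x_M$ of a $\mathcal{T}$-free $b$-factor $M$ admits a lift $y_M$ so that $(x_M, y_M) \in P$, and then obtain a lift for an arbitrary element of the $\mathcal{T}$-free $b$-factor polytope by convex combination: if $x = \sum_i \lambda_i x_{M_i}$, then $(x, \sum_i \lambda_i y_{M_i})$ lies in $P$ since $P$ is defined by linear inequalities.

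To construct the lift, given a $\mathcal{T}$-free $b$-factor $M$, I would set $y_M(T, J) = 1$ if $M \cap E(T) = J$ and $0$ otherwise. This is a well-defined element of $\mathbf{R}^Y$ precisely because $\mathcal{T}$-freeness guarantees $M \cap E(T) \neq E(T)$, so $M \cap E(T) \in \mathcal{E}_T$. The degree constraints (\ref{eq:01}), the box constraints (\ref{eq:02}), and the triangle bound (\ref{eq:triangle}) follow immediately from $M$ being a $\mathcal{T}$-free $b$-factor. The partition constraints (\ref{eq:04}) and the nonnegativity (\ref{eq:06}) are direct from the definition, while (\ref{eq:05}) reduces to the identity $\sum_{e \in J \in \mathcal{E}_T} y_M(T, J) = [e \in M \cap E(T)] = x_M(e)$ for each $e \in E(T)$.

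The main work is verifying the strengthened blossom inequality (\ref{eq:07}) at $(x_M, y_M)$. For each $T \in \mathcal{T}_S$ with $E(T) \cap \delta(S) = \{\alpha, \beta\}$, the four cases itemized in the excerpt show that the per-triangle quantity
$\sum_{e \in \{\alpha,\beta\} \cap F_0} x_M(e) + \sum_{e \in \{\alpha,\beta\} \cap F_1} (1 - x_M(e)) - 2 q^*(T)$
is a $\{0,1\}$-valued integer, equal to $1$ precisely when $|M \cap \{\alpha,\beta\}|$ and $|F_1 \cap \{\alpha,\beta\}|$ have opposite parities. Each edge $e \in \delta(S) \setminus \bigcup_{T \in \mathcal{T}_S} E(T)$ likewise contributes a $\{0,1\}$-integer. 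Summing, the left-hand side of (\ref{eq:07}) is a nonnegative integer whose parity equals $|M \cap \delta(S)| + |F_1| \equiv b(S) + |F_1| \equiv 1 \pmod 2$ by Edmonds' standard blossom parity observation for $b$-factors. A nonnegative odd integer is at least $1$, establishing (\ref{eq:07}).

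The main obstacle is the bookkeeping in this last step: one must check that the correction $2 q^*(T)$ is tailored so as to convert the integer contribution of a triangle's cut edges (which can be $0$, $1$, or $2$) into an element of $\{0,1\}$ without changing its parity, so that the classical parity argument for blossom inequalities survives unchanged in the presence of the triangle terms. Once that case analysis is complete, the convex combination argument at the beginning closes the proof.
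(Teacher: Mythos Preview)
Your proposal is correct and takes essentially the same approach as the paper: define $y_M(T,J)=[M\cap E(T)=J]$, note that the easy constraints hold, and verify~(\ref{eq:07}) via the parity observation that the left-hand side at $(x_M,y_M)$ is a nonnegative integer congruent to $b(S)+|F_1|\equiv 1\pmod 2$. The only cosmetic differences are that the paper argues~(\ref{eq:07}) by contradiction (assuming the left-hand side is zero and reaching a parity conflict) rather than directly, and leaves the convex-combination closure implicit since $\mathrm{proj}_E(P)$ is convex.
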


\begin{proof}
Suppose that $M \subseteq E$ is a $\mathcal{T}$-free $b$-factor in $G$ and
define $x_M \in \mathbf{R}^E$ by (\ref{eq:chara}). 
For $T \in \mathcal T$ and $J \in \mathcal{E}_T$, 
define 
$$
y_M(T, J)=
\begin{cases}
1 & \mbox{if $M \cap E(T) = J$,} \\
0 & \mbox{otherwise}. 
\end{cases}
$$
We can easily see that 
$(x_M, y_M)$ satisfies (\ref{eq:01}), (\ref{eq:02}), and (\ref{eq:triangle})--(\ref{eq:06}).
Thus, it suffices to show that $(x_M, y_M)$ satisfies (\ref{eq:07}). 
Assume to the contrary that (\ref{eq:07}) does not hold for $(S, F_0, F_1) \in \mathcal{F}$. 
Then, $x_M(e) = 0$ for every $e \in F_0 \setminus \bigcup_{T \in \mathcal{T}_S} E(T)$ and 
$x_M(e)=1$ for every $e \in F_1 \setminus \bigcup_{T \in \mathcal{T}_S} E(T)$. 
Furthermore, since 
the contribution of $E(T) \cap \delta(S)$ and $T$ to the left-hand side of (\ref{eq:07}) 
is equal to $1$ if and only if 
$|M \cap E(T) \cap \delta(S)| \not\equiv |F_1 \cap E(T)| \pmod 2$, 
we obtain 
$|M \cap E(T) \cap \delta(S)| \equiv |F_1 \cap E(T)| \pmod 2$ for every $T \in \mathcal{T}_S$. 
Then, 
\begin{align*}
|M \cap \delta(S)| 
&= |(M \cap \delta(S)) \setminus \bigcup_{T \in \mathcal{T}_S} E(T)| + \sum_{T \in \mathcal{T}_S} |M \cap E(T) \cap \delta(S)| \\
&\equiv |F_1 \setminus \bigcup_{T \in \mathcal{T}_S} E(T)| + \sum_{T \in \mathcal{T}_S} |F_1 \cap E(T)| = |F_1|.  
\end{align*}
Since $M$ is a $b$-factor, 
it holds that $|M \cap \delta(S)| \equiv b(S) \pmod{2}$, 
which contradicts that $b(S) +  |F_1|$ is odd. 
\end{proof}

To prove the opposite inclusion (i.e., ${\rm proj}_E (P)$ is contained in the $\mathcal{T}$-free $b$-factor polytope),  
we consider a relaxation of (\ref{eq:07}). 
For $T \in \mathcal{T}_S$ with
$E(T)=\{\alpha, \beta, \gamma\}$ and $E(T) \cap \delta(S) = \{\alpha, \beta\}$, 
we define 
$$
q(T) =
\begin{cases}
y_{\alpha} + y_{\alpha \gamma} & \mbox{if $\alpha \in F_0$ and $\beta \in F_1$,} \\
y_{\beta} + y_{\beta \gamma} & \mbox{if $\beta \in F_0$ and $\alpha \in F_1$,} \\
y_{\gamma}  & \mbox{if $\alpha, \beta \in F_1$, } \\
0                & \mbox{if $\alpha, \beta \in F_0$.}
\end{cases}
$$ 
Since $q(T) \le q^*(T)$ for every $T \in \mathcal{T}_S$, 
the following inequality is a relaxation of (\ref{eq:07}). 
\begin{align}
\sum_{e \in F_0} x(e) + \sum_{e \in F_1} (1- x(e)) - \sum_{T \in \mathcal{T}_S} 2 q(T) \ge 1 
   \qquad ((S, F_0, F_1) \in \mathcal{F})   \label{eq:100} 
\end{align}
Note that there is a difference between (\ref{eq:07}) and (\ref{eq:100}) in the following cases. 
\begin{itemize}
\item
If $\alpha, \beta \in F_1$, then the contribution of $\alpha, \beta$, and $T$ to the left-hand side of (\ref{eq:100}) is
$(1-x(\alpha)) + (1-x(\beta)) - 2 q(T) = y_\alpha + y_\beta + y_{\alpha \gamma} + y_{\beta \gamma} + 2 y_\emptyset$. 
\item
If $\alpha, \beta \in F_0$, then the contribution of $\alpha, \beta$, and $T$ to the left-hand side of (\ref{eq:100}) is
$x(\alpha) + x(\beta) - 2 q(T) = y_\alpha + y_\beta + y_{\alpha \gamma} + y_{\beta \gamma} + 2 y_{\alpha \beta}$. 
\end{itemize}
Define a polytope $Q$ and its projection onto $E$ as 
\begin{align*}
& Q = \{ (x, y) \in \mathbf{R}^E \times \mathbf{R}^Y  \mid \mbox{$x$ and $y$ satisfy (\ref{eq:01}), (\ref{eq:02}), (\ref{eq:triangle})--(\ref{eq:06}), and (\ref{eq:100})} \}, \\
& {\rm proj}_E(Q) = \{ x \in \mathbf{R}^E \mid \mbox{There exists $y \in \mathbf{R}^Y$  such that $(x, y) \in Q$} \}. 
\end{align*}
Since (\ref{eq:100}) is implied by (\ref{eq:07}), 
we have that $P \subseteq Q$ and ${\rm proj}_E(P) \subseteq {\rm proj}_E(Q)$. 
In what follows in Sections~\ref{sec:propertyQ} and~\ref{sec:proof}, 
we show the following proposition. 

\begin{proposition}\label{prop:main}
${\rm proj}_E(Q)$ is contained in the $\mathcal{T}$-free $b$-factor polytope. 
\end{proposition}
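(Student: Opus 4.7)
The plan is to show that every vertex $(x^\ast, y^\ast)$ of $Q$ has $x^\ast \in \{0,1\}^E$. Once this is established, $M = \{e \in E : x^\ast(e) = 1\}$ is a $b$-factor by~(\ref{eq:01}) and satisfies $|M \cap E(T)| \le 2$ for every $T \in \mathcal{T}$ by~(\ref{eq:triangle}), so $M$ is a $\mathcal{T}$-free $b$-factor and $x^\ast = x_M$; taking convex combinations of vertices then yields Proposition~\ref{prop:main}.

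First, I would observe that~(\ref{eq:100}) implies the classical Edmonds blossom inequality~(\ref{eq:03}), because $q(T) \ge 0$ by~(\ref{eq:06}). Hence any $x \in {\rm proj}_E(Q)$ already belongs to Edmonds' $b$-factor polytope of $G$; the substance of Proposition~\ref{prop:main} is that the auxiliary $y$-variables together with the strengthening given by the $q(T)$ term in~(\ref{eq:100}) kill off exactly the half-integral extreme points that would place positive weight on a forbidden triangle, such as the obstruction in Example~\ref{exam:01}.

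To prove that no fractional vertex survives, I would argue by contradiction and perturb. Because the triangles in $\mathcal{T}$ are pairwise edge-disjoint, the local system~(\ref{eq:04})--(\ref{eq:05}) attached to each $T$ decouples: on each triangle the four equalities relate the three values $x(e)$ for $e \in E(T)$ to the seven coordinates $y(T, J)$ via a rank-$4$ system, leaving three local degrees of freedom. Using these together with the degree equations~(\ref{eq:01}) on non-triangle edges, I would describe the kernel of the equality constraints at $(x^\ast, y^\ast)$ in terms of signed Eulerian-type perturbations on $G$, and then argue that unless $(x^\ast, y^\ast)$ is integral and encodes a $\mathcal{T}$-free $b$-factor, one can extract a non-zero perturbation $(h, g)$ that respects all tight box constraints~(\ref{eq:02}), (\ref{eq:06}), (\ref{eq:triangle}) and all tight cut inequalities~(\ref{eq:100}), contradicting extremality.

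The hard part will be controlling the tight inequalities~(\ref{eq:100}). In the classical Edmonds proof, overlapping tight odd-cut inequalities are uncrossed into a laminar family using the parity identity on $b(S)+|F_1|$; here the correction $-2\sum_{T \in \mathcal{T}_S} q(T)$ couples the cut $\delta(S)$ to its incident triangles in a way that depends on how $(F_0, F_1)$ partitions $E(T) \cap \delta(S)$, and the four cases in the definition of $q(T)$ are not symmetric. Setting up the correct uncrossing lemma --- showing that two simultaneously tight~(\ref{eq:100})-inequalities at $(S, F_0, F_1)$ and $(S', F_0', F_1')$ can be replaced by tight~(\ref{eq:100})-inequalities at $S \cap S'$ and $S \cup S'$ with suitably chosen triangle partitions, while preserving the parity of $b(S)+|F_1|$ --- is, I expect, what Section~\ref{sec:propertyQ} is designed to supply, and it is the principal technical obstacle of the proof.
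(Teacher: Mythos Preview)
Your plan differs substantially from the paper's, and the place where you defer the real work is exactly the place where the argument, as you have sketched it, is not justified.

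\medskip
\textbf{What the paper actually does.} The proof is \emph{not} a direct uncrossing-and-perturbation argument on a single vertex of $Q$. It is an induction on $|\mathcal{T}|$. Lemma~\ref{lem:pre04} (the only content of Section~\ref{sec:propertyQ}) is a trichotomy for an extreme point $x$ of ${\rm proj}_E(Q)$: either (i) $x$ is already the characteristic vector of a $\mathcal{T}$-free $b$-factor, or (ii) some triangle inequality~(\ref{eq:triangle}) is tight, or (iii) some strengthened blossom inequality~(\ref{eq:100}) is tight for a triple $(S,F_0,F_1)$ with $\mathcal{T}^+_S\neq\emptyset$. In case (ii) the tight triangle is replaced by a small gadget (a new degree-$1$ vertex $r$ attached to the three corners), yielding an instance with one fewer forbidden triangle; in case (iii) the graph is split along the tight cut into two instances $(G_1,b_1,\mathcal{T}_1)$ and $(G_2,b_2,\mathcal{T}_2)$, each equipped with elaborate local gadgets for the triangles in $\mathcal{T}^+_{S^*}$ (four types (A), (A'), (B), (B')), and each with strictly fewer forbidden triangles. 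Feasibility of the restricted $(x_j,y_j)$ in $Q_j$ is checked by a lengthy case analysis (Claim~\ref{clm:key01}), the induction hypothesis gives convex combinations of $\mathcal{T}_j$-free $b_j$-factors, and these are paired up across the common boundary and glued back into $\mathcal{T}$-free $b$-factors of $G$ (Claims~\ref{clm:key03} and~\ref{clm:key04}). No uncrossing of tight~(\ref{eq:100})-constraints is ever performed.

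\medskip
\textbf{Where your sketch has a gap.} You write that the uncrossing lemma for two tight~(\ref{eq:100})-inequalities ``is, I expect, what Section~\ref{sec:propertyQ} is designed to supply.'' It is not: that section contains only Lemma~\ref{lem:pre04}, which is a short reduction to the classical $b$-factor polytope when neither a triangle nor a strengthened blossom is tight. The uncrossing you need would have to show that if~(\ref{eq:100}) is tight at $(S,F_0,F_1)$ and at $(S',F'_0,F'_1)$, one can produce tight inequalities at $S\cap S'$ and $S\cup S'$ \emph{with the correct $q(T)$ correction terms}. Because the definition of $q(T)$ depends on how each of the two edges $E(T)\cap\delta(S)$ falls into $F_0$ versus $F_1$, and these assignments change in a non-submodular way when you pass to $S\cap S'$ and $S\cup S'$, it is far from clear that such an identity or inequality holds; you have not supplied it, and the paper does not either. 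Until that lemma is written down and verified, your plan is only a hope, not a proof.
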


By Lemma~\ref{lem:easydir}, Proposition~\ref{prop:main}, and ${\rm proj}_E(P) \subseteq {\rm proj}_E(Q)$, 
we obtain the following theorem.

\begin{theorem}\label{thm:main01}
Let $G=(V, E)$ be a graph, $b(v) \in \mathbf{Z}_{\ge 0}$ for each $v \in V$, and 
let $\mathcal{T}$ be a set of edge-disjoint triangles. 
Then, both ${\rm proj}_E(P)$ and ${\rm proj}_E(Q)$ are equal to 
the $\mathcal{T}$-free $b$-factor polytope. 
\end{theorem}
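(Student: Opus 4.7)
The plan is to prove the theorem by a three-way sandwich. Writing $\Pi$ for the $\mathcal{T}$-free $b$-factor polytope, I aim to establish the chain
$$
\Pi \;\subseteq\; {\rm proj}_E(P) \;\subseteq\; {\rm proj}_E(Q) \;\subseteq\; \Pi,
$$
from which $\Pi = {\rm proj}_E(P) = {\rm proj}_E(Q)$ follows at once.

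The first inclusion is exactly Lemma \ref{lem:easydir}. For the middle inclusion, I would verify that $P \subseteq Q$ by checking that (\ref{eq:07}) implies (\ref{eq:100}); this amounts to the pointwise inequality $q(T) \le q^*(T)$, which is immediate from the case-by-case definitions (the difference is the omission of the nonnegative term $y_\emptyset$ when $\alpha, \beta \in F_1$ and of $y_{\alpha\beta}$ when $\alpha, \beta \in F_0$, while the two mixed cases already agree). Taking projections preserves containment, so ${\rm proj}_E(P) \subseteq {\rm proj}_E(Q)$. The third inclusion is Proposition \ref{prop:main}, which we are entitled to invoke.

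The substantive work therefore lies entirely in Proposition \ref{prop:main}, and this is where I expect the main obstacle to be. To prove ${\rm proj}_E(Q) \subseteq \Pi$ one must take an arbitrary $(x, y) \in Q$ and decompose $x$ as a convex combination of characteristic vectors of $\mathcal{T}$-free $b$-factors. A natural route is first to confirm that (\ref{eq:100}), combined with (\ref{eq:04})--(\ref{eq:06}), implies the classical Edmonds inequality (\ref{eq:03}), so that $x$ already lies in the ordinary $b$-factor polytope and may be written as $\sum_i \lambda_i x_{M_i}$ for $b$-factors $M_i$; the difficulty is then to use the auxiliary values $y(T, J)$, together with the technical preparations promised in Section \ref{sec:propertyQ}, to reshuffle this decomposition so that no $M_i$ contains a full triangle from $\mathcal{T}$. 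I anticipate that the edge-disjointness hypothesis on $\mathcal{T}$ will be essential here, since it is what lets the contributions of distinct triangles to (\ref{eq:100}) decouple and permits the $y$-variables to be manipulated one triangle at a time.
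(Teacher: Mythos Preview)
Your proof of Theorem~\ref{thm:main01} is correct and is exactly the paper's argument: the theorem is deduced in one line from Lemma~\ref{lem:easydir}, the inclusion $P\subseteq Q$ (via $q(T)\le q^*(T)$), and Proposition~\ref{prop:main}. Your closing paragraph speculating on how Proposition~\ref{prop:main} itself is proved is not part of the proof and diverges from the paper's actual method (induction on $|\mathcal{T}|$, splitting on whether a triangle constraint~(\ref{eq:triangle}) or an odd-cut constraint~(\ref{eq:100}) is tight via Lemma~\ref{lem:pre04}), but since you explicitly invoke the proposition as a black box this is not a gap.
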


We remark here that we do not know how to prove directly that ${\rm proj}_E(P)$ is contained in the $\mathcal{T}$-free $b$-factor polytope. 
Introducing ${\rm proj}_E(Q)$ and considering Proposition~\ref{prop:main}, which is a stronger statement, is a key idea in our proof. 
We also note that our algorithm in Section~\ref{sec:algo}
is based on the fact that the $\mathcal{T}$-free $b$-factor polytope is equal to ${\rm proj}_E(P)$. 
In this sense, both ${\rm proj}_E(P)$ and ${\rm proj}_E(Q)$ play important roles in this paper.

\begin{example}\label{exam:02}
Suppose that $G=(V, E)$, $b \in \mathbf{Z}^V_{\ge 0}$, and $x \in \mathbf{R}^E$ are as in Example~\ref{exam:01}. 
Let $T$ be the central triangle in $G$ and let $E(T) = \{\alpha, \beta, \gamma\}$. 
If $y \in \mathbf{R}^Y$ satisfies (\ref{eq:04}) and (\ref{eq:06}), then 
$y_{\alpha \beta} + y_{\beta \gamma} + y_{\alpha \gamma} \le  1$. 
Thus, without loss of generality, we may assume that $y_{\alpha \beta} \le \frac{1}{3}$ by symmetry. 
Let $S$ be a vertex set with $\delta(S) = \{\alpha, \beta\}$. 
Then, (\ref{eq:100}) does not hold for 
$(S, \{ \alpha \}, \{ \beta \}) \in \mathcal{F}$, 
because 
$x(\alpha)+(1-x(\beta)) - 2 q(T) = 1-x(\alpha)-x(\beta)+2 y_{\alpha \beta} \le \frac{2}{3} < 1$. 
Therefore, $x$ is not in ${\rm proj}_E(Q)$. 
\end{example}

\section{Extreme Points of the Projection of $Q$}
\label{sec:propertyQ}

In this section, we show a property of 
extreme points of ${\rm proj}_E (Q)$, which will be used in Section~\ref{sec:proof}. 
We begin with the following easy lemma. 

\begin{lemma}\label{lem:pre03}
Suppose that $x \in \mathbf{R}^E$ satisfies (\ref{eq:02}) and (\ref{eq:triangle}). 
Then, there exists $y \in \mathbf{R}^Y$ that satisfies (\ref{eq:04})--(\ref{eq:06}). 
\end{lemma}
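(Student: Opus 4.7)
The plan is to build $y$ independently for each triangle in $\mathcal{T}$. Because the triangles in $\mathcal{T}$ are mutually edge-disjoint, for a fixed triangle $T$ with $E(T) = \{\alpha, \beta, \gamma\}$ the variables $\{y(T, J) : J \in \mathcal{E}_T\}$ and the values $x(\alpha), x(\beta), x(\gamma)$ are the only quantities appearing in the instances of (\ref{eq:04})--(\ref{eq:06}) indexed by $T$. Hence the whole system decouples over $\mathcal{T}$, and it suffices to construct, for each triangle $T$ separately, nonnegative numbers $(y(T, J))_{J \in \mathcal{E}_T}$ that sum to $1$ and whose three edge-marginals equal $x(\alpha), x(\beta), x(\gamma)$.

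Writing $\chi_J \in \{0,1\}^{E(T)}$ for the characteristic vector of $J \subseteq E(T)$, the existence of such a $y(T, \cdot)$ is precisely the inclusion
\begin{equation*}
(x(\alpha), x(\beta), x(\gamma)) \in \mathrm{conv}\{\chi_J : J \in \mathcal{E}_T\}.
\end{equation*}
I would therefore prove the polytope identity
\begin{equation*}
\mathrm{conv}\{\chi_J : J \in \mathcal{E}_T\} = \{(a, b, c) \in [0,1]^3 : a + b + c \le 2\},
\end{equation*}
from which the conclusion is immediate, since (\ref{eq:02}) and (\ref{eq:triangle}) say precisely that $(x(\alpha), x(\beta), x(\gamma))$ lies in the right-hand set.

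The polytope identity itself is elementary. The inclusion $\subseteq$ is clear because every $\chi_J$ with $J \subsetneq E(T)$ lies in $[0,1]^3$ and satisfies $\chi_J(\alpha) + \chi_J(\beta) + \chi_J(\gamma) = |J| \le 2$. For $\supseteq$, the right-hand polytope is obtained from the unit cube $[0,1]^3$ by adding the single cut $a+b+c \le 2$; this cut passes through the three cube vertices $(1,1,0), (1,0,1), (0,1,1)$ and removes only the corner $(1,1,1)$, so its extreme points are exactly the seven vectors $\chi_J$ for $J \in \mathcal{E}_T$.

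I do not anticipate any serious obstacle; the argument is a routine polytope identification together with the trivial decoupling across triangles. If one preferred to bypass the vertex enumeration, an explicit construction of $y$ also works: when $x(\alpha)+x(\beta)+x(\gamma) \le 1$, put all mass on $y_\emptyset$ and the three singletons; when it exceeds $1$, distribute the surplus among the pair-variables $y_{\alpha\beta}, y_{\alpha\gamma}, y_{\beta\gamma}$ and solve the resulting $3 \times 3$ system for the remaining nonnegative variables.
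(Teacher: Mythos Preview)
Your proposal is correct but takes a genuinely different route from the paper. The paper's proof is a direct explicit construction: after ordering the edges so that $x(\alpha) \ge x(\beta) \ge x(\gamma)$, it writes down formulas for $y(T,\cdot)$ in two cases, according as $x(\alpha) \ge x(\beta) + x(\gamma)$ or not, and checks (\ref{eq:04})--(\ref{eq:06}) by hand. Your argument is the polytope identification $\mathrm{conv}\{\chi_J : J \in \mathcal{E}_T\} = [0,1]^3 \cap \{a+b+c \le 2\}$, which is more conceptual: it explains geometrically why the box constraints (\ref{eq:02}) together with the single inequality (\ref{eq:triangle}) are exactly the right hypotheses. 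The paper's approach, on the other hand, yields concrete expressions for $y$ with no appeal to extreme-point structure. Both proofs are short. One small remark: your invocation of edge-disjointness is unnecessary here, since the constraints (\ref{eq:04})--(\ref{eq:06}) already decouple over $\mathcal{T}$ simply because the variables $y(T,\cdot)$ for distinct $T$ never appear together; the observation does no harm, though.
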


\begin{proof}
Let $T \in \mathcal{T}$ be a triangle with $E(T) = \{\alpha, \beta, \gamma \}$ and $x(\alpha) \ge x(\beta) \ge x(\gamma)$. 
For $J \in \mathcal{E}_T$, we define $y(T, J)$ as follows. 
\begin{itemize}
\item
If $x(\alpha) \ge x(\beta) + x(\gamma)$, then 
$y_{\alpha \beta} = x(\beta)$, $y_{\alpha \gamma} = x(\gamma)$, $y_\emptyset = 1-x(\alpha)$, 
$y_{\alpha} = x(\alpha) - x(\beta) - x(\gamma)$, and $y_\beta = y_\gamma = y_{\beta \gamma} = 0$. 
\item
If  $x(\alpha) < x(\beta) + x(\gamma)$, then 
$y_{\alpha \beta} = \frac{1}{2} (x(\alpha) + x(\beta) - x(\gamma))$, 
$y_{\alpha \gamma} = \frac{1}{2} (x(\alpha) + x(\gamma) - x(\beta))$, 
$y_{\beta \gamma} = \frac{1}{2} (x(\beta) + x(\gamma) - x(\alpha))$, 
$y_\emptyset = 1- \frac{1}{2} (x(\alpha) + x(\beta) + x(\gamma))$, and 
$y_\alpha = y_\beta = y_\gamma = 0$. 
\end{itemize}
Then, $y$  satisfies (\ref{eq:04})--(\ref{eq:06}). 
\end{proof}

By using this lemma, we show the following. 

\begin{lemma}\label{lem:pre04}
Let $x$ be an extreme point of ${\rm proj}_E (Q)$
and $y \in \mathbf{R}^Y$ be a vector with $(x, y) \in Q$. 
Then, one of the following holds. 
\begin{enumerate}
\item[(i)]
$x=x_M$ for some $\mathcal{T}$-free $b$-factor $M \subseteq E$. 
\item[(ii)]
(\ref{eq:triangle}) is tight for some $T \in \mathcal{T}$. 
\item[(iii)]
(\ref{eq:100}) is tight for some $(S, F_0, F_1) \in \mathcal{F}$ with $\mathcal{T}^+_S \not= \emptyset$, 
where we define $\mathcal{T}^+_S = \{T \in \mathcal{T} \mid E(T) \cap \delta(S) \cap F_1 \not= \emptyset\}$. 
\end{enumerate}
\end{lemma}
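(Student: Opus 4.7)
The plan is to argue by contradiction: assume $x$ is extreme in $\mathrm{proj}_E(Q)$ and that none of (i), (ii), (iii) holds, and then exhibit distinct $x_1,x_2 \in \mathrm{proj}_E(Q)$ with $x = \tfrac{1}{2}(x_1+x_2)$. First I would observe that $x$ lies in the ordinary $b$-factor polytope $P_{bf}$ defined by (\ref{eq:01})--(\ref{eq:03}): (\ref{eq:01}) and (\ref{eq:02}) are built into $Q$, and for the odd-cut constraints (\ref{eq:03}), note that $y \ge 0$ by (\ref{eq:06}) gives $q(T) \ge 0$, so (\ref{eq:100}) already implies (\ref{eq:03}). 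Next I would argue that $x$ must be non-integral: if $x \in \{0,1\}^E$, then (\ref{eq:01})--(\ref{eq:02}) would force $x = x_M$ for some $b$-factor $M$, and the failure of (ii) would give $x(E(T)) < 2$ for every $T$, hence $|M \cap E(T)| \le 1$, so $M$ would be $\mathcal{T}$-free, contradicting the failure of (i). Thus by Edmonds' integrality theorem $x$ is not extreme in $P_{bf}$, and I can select a nonzero $z \in \mathbf{R}^E$ with $x \pm z \in P_{bf}$.

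The heart of the proof is to lift $z$ to a two-sided perturbation of $(x, y)$ inside $Q$. Setting $y_\epsilon := y + \epsilon w$ and requiring $(x + \epsilon z, y_\epsilon) \in Q$ for all small $|\epsilon|$, the vector $w \in \mathbf{R}^Y$ must, triangle by triangle, satisfy $\sum_{J} w(T,J) = 0$ and $\sum_{J \ni e} w(T,J) = z(e)$ for each $e \in E(T)$ (from (\ref{eq:04})--(\ref{eq:05})), together with $w(T,J) = 0$ whenever $y(T,J) = 0$ (to preserve (\ref{eq:06})). All remaining defining constraints of $Q$ are preserved: (\ref{eq:01}), (\ref{eq:02}), and (\ref{eq:100}) for $(S,F_0,F_1)$ with $\mathcal{T}_S^+ = \emptyset$ (where every $q(T)$ vanishes and (\ref{eq:100}) collapses to (\ref{eq:03})) hold because $x \pm z \in P_{bf}$; the strict inequalities (\ref{eq:triangle}) and (\ref{eq:100}) for $\mathcal{T}_S^+ \neq \emptyset$ hold by continuity from the assumed failures of (ii) and (iii).

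The main obstacle will be the existence of such a $w$. For each $T$ the four equalities form a rank-$4$ linear system on $\mathbf{R}^{\mathcal{E}_T}$, and the support condition restricts the variables to $K_T := \{J : y(T,J) > 0\}$, which can be as small as three; hence an arbitrary $z|_{E(T)}$ may not lie in the image of the restricted lifting map. I would address this by choosing $z$ not arbitrarily from the tangent cone of $P_{bf}$ at $x$, but from its intersection with the linear subspaces of directions for which the lifting is solvable at every triangle simultaneously. The delicate step---and the one I expect to be the hardest---will be to rule out the degenerate case where this intersection is trivial: the plan is to show that if no nonzero $z$ admits a valid lifting, then the aggregated obstruction, combined with (\ref{eq:04})--(\ref{eq:05}) and a suitable choice of $(S, F_0, F_1)$, produces a tight (\ref{eq:100}) with $\mathcal{T}_S^+ \neq \emptyset$, contradicting the assumed failure of (iii).
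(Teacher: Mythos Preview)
Your overall strategy---argue that $x$ must already be an extreme point of the $b$-factor polytope $P_{bf}$, hence integral, hence $\mathcal{T}$-free by the failure of (ii)---matches the paper's. The divergence is in how you pass from ``$x$ extreme in $\mathrm{proj}_E(Q)$'' to ``$x$ extreme in $P_{bf}$'', and here your plan runs into a genuine obstacle.

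You attempt to lift a two-sided $P_{bf}$-direction $z$ to a direction $(z,w)$ for $Q$, subject to the support condition $w(T,J)=0$ whenever $y(T,J)=0$. You correctly note that this system can be overdetermined on a triangle, and your proposed fix---restrict $z$ to the subspace of liftable directions and then argue that if this subspace meets the tangent space of $P_{bf}$ only at zero one can manufacture a tight constraint~(\ref{eq:100}) with $\mathcal{T}^+_S\neq\emptyset$---is not substantiated. There is no visible mechanism linking ``no liftable $z$'' (a statement about the support pattern of $y$ and the local geometry of $P_{bf}$) to the tightness of a specific odd-cut inequality in $(x,y)$; this is the gap.

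The paper sidesteps the lifting problem entirely. The key tool you are missing is Lemma~\ref{lem:pre03}: for \emph{any} $x$ satisfying (\ref{eq:02}) and (\ref{eq:triangle}) one can produce some $y$ satisfying (\ref{eq:04})--(\ref{eq:06}). This gives at once
\[
\mathrm{proj}_E\bigl(\{(x,y):\text{(\ref{eq:01})--(\ref{eq:06})}\}\bigr)=\{x:\text{(\ref{eq:01})--(\ref{eq:triangle})}\}.
\]
One then argues that $x$ is extreme in this larger projection by a standard interpolation: if $x=\tfrac12(x_1+x_2)$ with $(x_i,y_i)$ in the polytope $\{(\ref{eq:01})\text{--}(\ref{eq:06})\}$, form $(1-t)(x,y)+t(x_i,y_i)$ for small $t>0$; convexity handles (\ref{eq:01})--(\ref{eq:06}), and the finitely many constraints (\ref{eq:100}) with $\mathcal{T}^+_S\neq\emptyset$, being strict at $(x,y)$ by the failure of (iii), survive by continuity. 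Hence $x\pm t\,\tfrac{\epsilon}{?}$---more precisely the interpolated $x$-components---lie in $\mathrm{proj}_E(Q)$, contradicting extremality. Repeating the same ``peel off the strict inequality'' step for (\ref{eq:triangle}) lands you at $P_{bf}$. The point is that you never need to perturb the \emph{given} $y$ in a support-preserving way: Lemma~\ref{lem:pre03} supplies a fresh $y_i$ for each $x_i$, and interpolation back toward the original $(x,y)$ restores membership in $Q$.
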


\begin{proof}
We prove (i) by assuming that (ii) and (iii) do not hold. 
Since (\ref{eq:100}) is not tight for any $(S, F_0, F_1) \in \mathcal{F}$ with $\mathcal{T}^+_S \not= \emptyset$, 
$x$ is an extreme point of 
$$
\{ x \in \mathbf{R}^E \mid 
\mbox{There exists $y \in \mathbf{R}^Y$ such that $(x, y)$ satisfies (\ref{eq:01})--(\ref{eq:06})} \}, 
$$
because (\ref{eq:03}) is a special case of (\ref{eq:100}) in which $\mathcal{T}^+_S = \emptyset$. 
By Lemma~\ref{lem:pre03}, 
this polytope is equal to 
$\{ x \in \mathbf{R}^E \mid \mbox{$x$ satisfies (\ref{eq:01})--(\ref{eq:triangle})} \}$. 
Since (\ref{eq:triangle}) is not tight for any $T \in \mathcal{T}$, 
$x$ is an extreme point of 
$\{ x \in \mathbf{R}^E \mid \mbox{$x$ satisfies (\ref{eq:01})--(\ref{eq:03})} \}$, 
which is the $b$-factor polytope. 
Thus, $x$ is a characteristic vector of a $b$-factor. 
Since $x$ satisfies (\ref{eq:triangle}), 
it holds that $x=x_M$ for some $\mathcal{T}$-free $b$-factor $M \subseteq E$. 
\end{proof}


\section{Proof of Proposition~\ref{prop:main}}
\label{sec:proof}

In this section, we prove Proposition~\ref{prop:main}
by induction on $|\mathcal{T}|$. 
If $|\mathcal{T}|=0$, then
$y$ does not exist and
(\ref{eq:100}) is equivalent to (\ref{eq:03}). 
Thus, ${\rm proj}_E (Q)$ is the $b$-factor polytope, which shows the base case of the induction. 

Fix an instance $(G, b, \mathcal{T})$ with $|\mathcal{T}| \ge 1$ and 
assume that Proposition~\ref{prop:main} holds for instances with smaller $|\mathcal{T}|$. 
Suppose that $Q \not= \emptyset$, which implies that $b(V)$ is even as $(V, \emptyset, \emptyset) \not\in \mathcal{F}$ by (\ref{eq:100}).  
Pick up $x \in {\rm proj}_E (Q)$ and 
let $y \in \mathbf{R}^Y$ be a vector with $(x, y) \in Q$. 
Our aim is to show that $x$ is contained in the $\mathcal{T}$-free $b$-factor polytope. 

In what follows in this section, we prove Proposition~\ref{prop:main} as follows. 
We apply Lemma~\ref{lem:pre04} to obtain one of (i), (ii), and (iii). 
If (i) holds, that is, $x=x_M$ for some $\mathcal{T}$-free $b$-factor $M \subseteq E$, then
$x$ is obviously in the $\mathcal{T}$-free $b$-factor polytope. 
If (ii) holds, that is, (\ref{eq:triangle}) is tight for some $T \in \mathcal{T}$, then
we replace $T$ with a certain graph and apply the induction, which will be discussed in Section~\ref{sec:case2}. 
If (iii) holds, that is, (\ref{eq:100}) is tight for some $(S, F_0, F_1) \in \mathcal{F}$ with $\mathcal{T}^+_S \not= \emptyset$, then
we divide $G$ into two graphs and apply the induction for each graph, which will be discussed in Section~\ref{sec:case3}.


\subsection{When (\ref{eq:triangle}) is Tight} 
\label{sec:case2}

In this subsection, we consider the case when
(\ref{eq:triangle}) is tight for some $T \in \mathcal{T}$. 
Fix a triangle $T \in \mathcal{T}$ with $x(E(T)) = 2$, 
where we denote $V(T) = \{v_1, v_2, v_3\}$, $E(T) = \{\alpha, \beta, \gamma\}$, $\alpha = v_1 v_2$, $\beta = v_2 v_3$, and $\gamma = v_3 v_1$ (Figure~\ref{fig:101}). 
Since (\ref{eq:triangle}) is tight, we obtain
\begin{align*}
2 = x(\alpha) + x(\beta) +x(\gamma) = 2 (y_{\alpha \beta} + y_{\alpha \gamma} + y_{\beta \gamma}) + y_\alpha + y_\beta + y_\gamma 
   = 2 - (y_\alpha + y_\beta + y_\gamma) - 2 y_\emptyset, 
\end{align*}
and hence $y_\alpha = y_\beta = y_\gamma =  y_\emptyset = 0$. 
Therefore, 
$x(\alpha) = y_{\alpha \beta} + y_{\alpha \gamma}$, 
$x(\beta) = y_{\alpha \beta} + y_{\beta \gamma}$,  
$x(\gamma) = y_{\alpha \gamma} + y_{\beta \gamma}$, and 
$y_{\alpha \beta} + y_{\alpha \gamma} + y_{\beta \gamma} = 1$. 

We construct a new instance of the $\mathcal{T}$-free $b$-factor problem as follows. 
Let $G'=(V', E')$ be the graph obtained from $G = (V, E)$ by 
removing $E(T)$ and adding a new vertex $r$ together with three new edges $e_1=r v_1$, $e_2 = r v_2$, and $e_3 = r v_3$ as in Figure~\ref{fig:101}. 
Define $b' \in  \mathbf{Z}_{\ge 0}^{V'}$ as
$b'(r)=1$,  
$b'(v)=b(v)-1$ for $v \in \{v_1, v_2, v_3\}$, and 
$b'(v)=b(v)$ for $v \in V \setminus \{v_1, v_2, v_3\}$. 
Define $x' \in \mathcal \mathbf{R}^{E'}$ as 
$x'(e_1)=y_{\alpha \gamma}$, 
$x'(e_2)=y_{\alpha \beta}$, 
$x'(e_3)=y_{\beta \gamma}$, and
$x'(e)=x(e)$ for $e \in E' \cap E$.  
Let $\mathcal{T}' = \mathcal{T} \setminus \{T\}$, and
let $Y'$ and $\mathcal{F}'$ be the objects for the obtained instance $(G', b', \mathcal{T}')$
that are defined in the same way as $Y$ and $\mathcal{F}$. 
Define $y'$ as the restriction of $y$ to $Y'$. 
We now show the following claim.

\begin{figure}
\begin{center}
\includegraphics[clip,width=7cm]{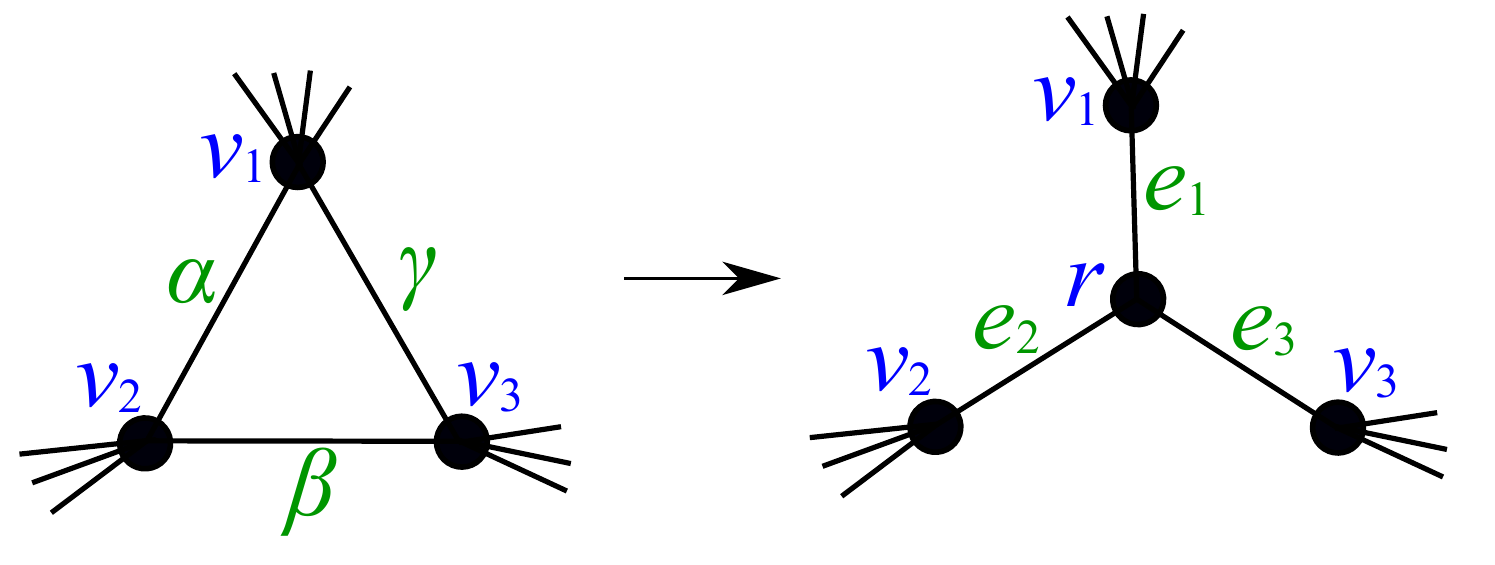}
\caption{Construction of $G'$}
\label{fig:101}
\end{center}
\end{figure}

\begin{claim}
\label{clm:30}
$(x', y')$ satisfies (\ref{eq:01}), (\ref{eq:02}), (\ref{eq:triangle})--(\ref{eq:06}), and (\ref{eq:100})
with respect to the new instance $(G', b', \mathcal{T}')$. 
\end{claim}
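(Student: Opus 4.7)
The plan is to verify each listed inequality for $(x', y')$ in $(G', b', \mathcal{T}')$, making systematic use of the identities
$y_\alpha = y_\beta = y_\gamma = y_\emptyset = 0$,
$y_{\alpha\beta} + y_{\alpha\gamma} + y_{\beta\gamma} = 1$,
$x(\alpha) = y_{\alpha\beta} + y_{\alpha\gamma}$,
$x(\beta) = y_{\alpha\beta} + y_{\beta\gamma}$, and
$x(\gamma) = y_{\alpha\gamma} + y_{\beta\gamma}$,
which were derived just before the claim from the tightness of (\ref{eq:triangle}) at $T$.

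First, the constraints (\ref{eq:02}), (\ref{eq:triangle}), and (\ref{eq:04})--(\ref{eq:06}) are routine. Each $x'(e_i)$ is a nonnegative $y$-coordinate, and since $x'(e_1) + x'(e_2) + x'(e_3) = 1$, each lies in $[0, 1]$; and by the edge-disjointness of $\mathcal{T}$, each $T'' \in \mathcal{T}' = \mathcal{T} \setminus \{T\}$ has $E(T'') \subseteq E' \cap E$, so its instances of (\ref{eq:triangle}) and (\ref{eq:04})--(\ref{eq:06}) are inherited verbatim from $(x, y)$. For (\ref{eq:01}), the degree is unchanged at $v \in V \setminus V(T)$. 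At $v_i \in V(T)$, the two incident $T$-edges together carry weight $1 + y_{p_i}$, where $p_i$ denotes the pair of $T$-edges common to $v_i$ (so $p_1 = \alpha\gamma$, $p_2 = \alpha\beta$, $p_3 = \beta\gamma$); replacing them by $e_i$ with $x'(e_i) = y_{p_i}$ drops the degree at $v_i$ by exactly $1$, matching $b'(v_i) = b(v_i) - 1$. At $r$, the degree is $y_{\alpha\beta} + y_{\alpha\gamma} + y_{\beta\gamma} = 1 = b'(r)$.

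The bulk of the work is (\ref{eq:100}). Given $(S', F_0', F_1') \in \mathcal{F}'$, I would first reduce to $r \notin S'$: since $b'(V') = b(V) - 2$ is even, replacing $S'$ by $V' \setminus S'$ while keeping $(F_0', F_1')$ preserves $\delta_{G'}(S')$, the parity of $b'(S') + |F_1'|$, and both sides of (\ref{eq:100}). Setting $S := S'$ and $A := S \cap V(T)$, I would then construct a companion $(S, F_0, F_1) \in \mathcal{F}$ for $(G, b, \mathcal{T})$ by keeping the $E$-part of $(F_0', F_1')$ and appending a suitable partition of $E(T) \cap \delta_G(S)$. When $|A| \in \{1, 2\}$, so $T \in \mathcal{T}_S$, the two crossing $T$-edges are split between $F_0$ and $F_1$ so as to satisfy the parity condition of $\mathcal{F}$ and to ensure that the contribution of the crossing $e_i$'s to the $G'$-LHS is at least the joint contribution of the crossing $T$-edges and $-2q(T)$ to the $G$-LHS; a direct case analysis based on the four cases in the definition of $q(T)$, together with the identities above, verifies this in each subcase, yielding $G'$-LHS $\geq G$-LHS $\geq 1$. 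When $|A| = 0$, no $e_i$'s cross and the two LHS's coincide. When $|A| = 3$, all three $e_i$'s cross; if $k_1 := |F_1' \cap \{e_1, e_2, e_3\}|$ is odd, the companion obtained by simply dropping the $e_i$'s lies in $\mathcal{F}$ and $G'$-LHS $= G$-LHS $+ C$ for some $C \geq 0$; if $k_1$ is even, the three $e_i$'s jointly contribute $C \geq 1$ to the $G'$-LHS by $x'(e_1) + x'(e_2) + x'(e_3) = 1$, and one restores parity by flipping one edge of $\delta_G(S)$ between $F_0$ and $F_1$, after which the bound $G$-LHS $\geq 2x(e^\ast) \geq 0$ on the pre-flip partition yields $G'$-LHS $\geq C \geq 1$; the degenerate subcase $\delta_G(S) = \emptyset$ is handled by the same bound $G'$-LHS $\geq C \geq 1$.

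The main obstacle is this detailed case analysis. Even after the companion $(S, F_0, F_1) \in \mathcal{F}$ is specified, simultaneously arranging the parity of $b(S) + |F_1|$ and aligning the $G'$-contribution of the crossing $e_i$'s with the $G$-contribution of the crossing $T$-edges and $-2q(T)$ requires going through all distributions of the two crossing $T$-edges between $F_0$ and $F_1$, invoking the appropriate case of $q(T)$, and handling the parity-correction flip in the $|A| = 3$ subcase. The identities above reduce each subcase to a short algebraic check, but the full enumeration is intricate.
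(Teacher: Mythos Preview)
Your approach is essentially the paper's: normalize the position of $r$ relative to $S'$ (the paper takes $r\in S'$, you take $r\notin S'$), then do a case analysis on which of $v_1,v_2,v_3$ lie in $S'$, in each case either building a companion $(S,F_0,F_1)\in\mathcal{F}$ in $G$ or bounding the left-hand side directly.

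The one point that is not quite right is your sub-case $|A|=3$ with $k_1$ even. The flipping argument you sketch does not account for the change in $q(T')$ when the flipped edge $e^\ast$ lies in some $T'\in\mathcal{T}_S$; in that situation the difference between the post-flip and pre-flip values of the left-hand side of (\ref{eq:100}) is not simply $1-2x(e^\ast)$, so the bound ``pre-flip $G$-LHS $\ge 2x(e^\ast)$'' is not justified as written. The fix, which is what the paper does in its Case~4, is to drop the flip entirely and argue directly: since $x'(e_1)+x'(e_2)+x'(e_3)=1$, the three edges $e_1,e_2,e_3$ alone contribute at least $1$ to the $G'$-LHS when $k_1$ is even, and every remaining block (a single edge of $\delta_G(S)$ outside $\bigcup_{T'\in\mathcal{T}'_S}E(T')$, or a triangle $T'\in\mathcal{T}'_S$ together with its two crossing edges) contributes nonnegatively to (\ref{eq:100}); hence $h'(S',F_0',F_1')\ge 1$ with no companion needed.
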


\begin{proof}
We can easily see that $(x', y')$ satisfies (\ref{eq:01}), (\ref{eq:02}), (\ref{eq:triangle})--(\ref{eq:06}). 
Consider (\ref{eq:100}) for 
$(S', F'_0, F'_1) \in \mathcal{F}'$. 
By changing the roles of $S'$ and $V' \setminus S'$ if necessary, 
we may assume that $r \in S'$. 
For $(S', F'_0, F'_1) \in \mathcal{F}'$ (resp.~$(S, F_0, F_1) \in \mathcal{F}$), 
we denote the left-hand side of (\ref{eq:100}) by $h'(S', F'_0, F'_1)$ (resp.~$h(S, F_0, F_1)$). 

Then, we obtain $h'(S', F'_0, F'_1) \ge 1$ for each $(S', F'_0, F'_1) \in \mathcal{F}'$
by the following case analysis and by the symmetry of $v_1$, $v_2$, and $v_3$. 
\begin{enumerate}
\item
Suppose that $v_1, v_2, v_3 \in S'$. 
Since $(S' \setminus \{r\}, F'_0, F'_1) \in \mathcal{F}$, 
we obtain 
$h'(S', F'_0, F'_1) = h(S' \setminus \{r\}, F'_0, F'_1) \ge 1$. 

\item
Suppose that $v_2, v_3 \in S'$ and $v_1 \not\in S'$. 
\begin{itemize}
\item
If $e_1 \in F'_0$, then define $(S, F_0, F_1) \in \mathcal{F}$ as 
$S =S' \setminus \{r\}$, $F_0 = (F'_0 \setminus \{e_1\}) \cup \{\alpha\}$, and $F_1 = F'_1 \cup\{\gamma\}$. 
Since $x(\alpha) + (1- x(\gamma)) - 2 q(T) = y_{\alpha \gamma} = x'(e_1)$, 
we obtain 
$h'(S', F'_0, F'_1) = h(S, F_0, F_1) \ge 1$. 
\item
If $e_1 \in F'_1$, then define $(S, F_0, F_1) \in \mathcal{F}$ as 
$S =S' \setminus \{r\}$, $F_0 = F'_0$, and $F_1 = (F'_1 \setminus \{e_1\}) \cup\{\alpha, \gamma\}$. 
Since $(1-x(\alpha)) + (1- x(\gamma)) - 2 q(T) = y_{\beta \gamma} + y_{\alpha \beta} = 1 - x'(e_1)$, 
we obtain 
$h'(S', F'_0, F'_1) = h(S, F_0, F_1) \ge 1$. 
\end{itemize}

\item
Suppose that $v_1 \in S'$ and $v_2, v_3 \not\in S'$. 
\begin{itemize}
\item
If $e_2, e_3 \in F'_0$, then define $(S, F_0, F_1) \in \mathcal{F}$ as 
$S =S' \setminus \{r\}$, $F_0 = F'_0 \setminus \{e_2, e_3\}$, and $F_1 = F'_1 \cup\{\alpha, \gamma\}$. 
Since $(1-x(\alpha)) + (1- x(\gamma)) - 2 q(T) = y_{\beta \gamma} + y_{\alpha \beta} = x'(e_2) + x'(e_3)$, 
we obtain 
$h'(S', F'_0, F'_1) = h(S, F_0, F_1) \ge 1$. 

\item
If $e_2 \in F'_0$ and $e_3 \in F'_1$, then define $(S, F_0, F_1) \in \mathcal{F}$ as
$S =S' \setminus \{r\}$, $F_0 = (F'_0 \setminus \{e_2\}) \cup \{\alpha\}$, and $F_1 = (F'_1 \setminus \{e_3\}) \cup\{\gamma\}$. 
Since $x(\alpha) + (1- x(\gamma)) - 2 q(T) = y_{\alpha \gamma} \le 1- x'(e_3)$, 
we obtain 
$h'(S', F'_0, F'_1) \ge h(S, F_0, F_1) \ge 1$. 

\item
If $e_2, e_3 \in F'_1$, then 
$h'(S', F'_0, F'_1) \ge 2 - x'(e_2) - x'(e_3) \ge 1$. 
\end{itemize}

\item
Suppose that $v_1, v_2, v_3 \not\in S'$. 
\begin{itemize}
\item
If $F'_1 \cap \delta_{G'}(r) = \emptyset$, then 
$h'(S', F'_0, F'_1) \ge x'(e_1) + x'(e_2) + x'(e_3) =1$.
\item
If $|F'_1 \cap \delta_{G'}(r)| \ge 2$, then 
$h'(S', F'_0, F'_1) \ge 2 - (x'(e_1) + x'(e_2) + x'(e_3)) =1$.
\item
If $|F'_1 \cap \delta_{G'}(r)| = 1$, then 
define $(S, F_0, F_1) \in \mathcal{F}$ as 
$S =S' \setminus \{r\}$, $F_0 = F'_0 \setminus \delta_{G'}(r)$, and $F_1 = F'_1 \setminus \delta_{G'}(r)$. 
Then, we obtain 
$h'(S', F'_0, F'_1) \ge h(S, F_0, F_1) \ge 1$. 
\end{itemize}
\end{enumerate}
\end{proof}

By this claim and by the induction hypothesis, 
$x'$ is in the $\mathcal{T}'$-free $b'$-factor polytope. 
That is, there exist $\mathcal{T}'$-free $b'$-factors $M'_1, \dots , M'_t$ in $G'$ and 
non-negative coefficients $\lambda_1, \dots , \lambda_t$ such that $\sum_{i=1}^t \lambda_i = 1$ and 
\begin{equation}
x' = \sum_{i=1}^t \lambda_i x_{M'_i}, \label{eq:20}
\end{equation}
where $x_{M'_i} \in \mathbf{R}^{E'}$ is the characteristic vector of $M'_i$ defined in the same way as (\ref{eq:chara}). 

For a $\mathcal{T}'$-free $b'$-factor $M' \subseteq E'$ in $G'$, 
we define a corresponding $\mathcal{T}$-free $b$-factor $\varphi(M') \subseteq E$ in $G$ 
as
$$
\varphi(M') = 
\begin{cases}
(M' \cap E) \cup \{\alpha, \gamma\} & \mbox{if $e_1 \in M'$}, \\
(M' \cap E) \cup \{\alpha, \beta\} & \mbox{if $e_2 \in M'$}, \\
(M' \cap E) \cup \{\beta, \gamma\} & \mbox{if $e_3 \in M'$}. 
\end{cases}
$$
By (\ref{eq:20}), we obtain 
$x(e) = \sum_{i=1}^t \lambda_i x_{\varphi(M'_i)}(e)$
for each $e \in E \cap E'$. 
By (\ref{eq:20}) again, it holds that 
$$
\sum_{i=1}^t \lambda_i x_{\varphi(M'_i)}(\alpha) 
=  \sum \{ \lambda_i  \mid e_1 \in M'_i  \} + \sum \{ \lambda_i  \mid e_2 \in M'_i  \}
= x'(e_1) + x'(e_2) = y_{\alpha \gamma} + y_{\alpha \beta}  = x(\alpha), 
$$
and similar equalities hold for $\beta$ and $\gamma$. 
Therefore, we obtain 
$x = \sum_{i=1}^t \lambda_i x_{\varphi(M'_i)}$, 
which shows that $x$ is in the $\mathcal{T}$-free $b$-factor polytope.


\subsection{When (\ref{eq:100}) is Tight}
\label{sec:case3}

In this subsection, we consider the case when (\ref{eq:100}) is tight 
for $(S^*, F^*_0, F^*_1) \in \mathcal{F}$ with $\mathcal{T}^+_{S^*} \not= \emptyset$, 
where $\mathcal{T}^+_{S^*} = \{T \in \mathcal{T} \mid E(T) \cap \delta(S^*) \cap F^*_1 \not= \emptyset\}$. 
In this case, we divide the original instance into two instances $(G_1, b_1, \mathcal{T}_1)$ and $(G_2, b_2, \mathcal{T}_2)$, apply the induction for each instance, 
and combine the two parts. 
We denote  
$\tilde{F}^*_0 = F^*_0 \setminus \bigcup_{T \in \mathcal{T}^+_{S^*}} E(T)$ and
$\tilde{F}^*_1 = F^*_1 \setminus \bigcup_{T \in \mathcal{T}^+_{S^*}} E(T)$. 


\subsubsection{Construction of $(G_j, b_j, \mathcal{T}_j)$}

We first construct $(G_1, b_1, \mathcal{T}_1)$ and its feasible LP solution $x_1$. 
Starting from the subgraph $G[S^*] = (S^*, E[S^*])$ 
induced by $S^*$,   
we add a new vertex $r$ corresponding to $V^* \setminus S^*$, set $b_1(r) =1$,  
and apply the following procedure. 
\begin{itemize}
\item
For each $f =uv \in \tilde{F}^*_0$ with $u \in S^*$, we add a new edge $e^f = u r$ (Figure~\ref{fig:103}). 
Let $x_1(e^f) = x(f)$.  
\item
For each $f =uv \in \tilde{F}^*_1$ with $u \in S^*$, 
we add a new vertex $p^f_u$ and new edges $e^f_u=u p^f_u$ and $e^f_r= p^f_u r$ (Figure~\ref{fig:104}). 
Let $b_1(p^f_u) = 1$, $x_1(e^f_u) = x(f)$, and $x_1(e^f_r) = 1 - x(f)$.  
\item
For each $T \in \mathcal{T}^+_{S^*}$ with $|E(T) \cap \delta_G(S^*) \cap F^*_1| = 2$ and $|V(T) \cap S^*| =2$, 
which we call {\it a triangle of type (A)},  
add new vertices $p_1, p_2$ and new edges $e_1, \dots , e_6$ as in Figure~\ref{fig:105}. 
Define $b_1(p_1) = b_1(p_2)=1$ and 
\begin{align*}
&x_1(e_1) = y_\emptyset +  y_\gamma, &  &x_1(e_2) = y_\emptyset +  y_\alpha,  &  &x_1(e_3) = y_{\alpha \beta},\\ 
&x_1(e_4) = y_{\beta \gamma}, &  &x_1(e_5) = 1-y_\emptyset - y_\gamma, &  & x_1(e_6) = 1-y_\emptyset - y_\alpha,
\end{align*}
where $\alpha, \beta$, and $\gamma$ are as in Figure~\ref{fig:105}.

\item
For each $T \in \mathcal{T}^+_{S^*}$ with $|E(T) \cap \delta_G(S^*) \cap F^*_1| = 2$ and $|V(T) \cap S^*| =1$,  
which we call {\it a triangle of type (A')},  
add a new vertex $p_3$ and new edges $e_1, e_2, e_3, e_4, e_7, e_8$, and $e_9$ as in Figure~\ref{fig:106}. 
Define $b_1(p_3) = 2$ and 
\begin{align*}
&x_1(e_1) = y_\emptyset +  y_\gamma, &  &x_1(e_2) = y_\emptyset +  y_\alpha,  &  &x_1(e_3) = y_{\alpha \beta}, & &x_1(e_4) = y_{\beta \gamma},\\ 
&x_1(e_7) = y_\beta, &  &x_1(e_8) = y_{\alpha \gamma}, &  & x_1(e_9) = 1-y_\emptyset - y_\beta,
\end{align*}
where $\alpha, \beta$, and $\gamma$ are as in Figure~\ref{fig:106}.

\item
For each $T \in \mathcal{T}^+_{S^*}$ with $|E(T) \cap \delta_G(S^*) \cap F^*_1| = 1$ and $|V(T) \cap S^*| =2$,  
which we call {\it a triangle of type (B)},  
add new vertices $p_1, p_2 , p_3$ and new edges $e_1, \dots , e_{9}$ as in Figure~\ref{fig:107}. 
Define $b_1(p_i) = 1$ for $i \in \{1, 2, 3\}$,  and 
\begin{align*}
&x_1(e_1) = y_\emptyset +  y_{\beta}, &  &x_1(e_2) = y_{\alpha \gamma},  &  &x_1(e_3) = y_\gamma,\\ 
&x_1(e_4) = y_{\alpha}, &  &x_1(e_5) = y_{\alpha \beta}, &  &x_1(e_6) = y_{\beta \gamma}, \\ 
&x_1(e_7) = y_{\emptyset}, &  &x_1(e_8) = 1- y_\emptyset - y_\gamma, &  &x_1(e_9) = 1-y_\emptyset - y_{\alpha}, 
\end{align*}
where $\alpha, \beta$, and $\gamma$ are as in Figure~\ref{fig:107}. 

\item
For each $T \in \mathcal{T}^+_{S^*}$ with $|E(T) \cap \delta_G(S^*) \cap F^*_1| = 1$ and $|V(T) \cap S^*| =1$,  
which we call {\it a triangle of type (B')},  
add a new vertex $p_4$ and new edges $e_1, e_2$, and $e_{10}$ as in Figure~\ref{fig:108}. 
Define $b_1(p_4) = 1$, and 
\begin{align*}
&x_1(e_1) = y_\emptyset +  y_{\beta}, &  &x_1(e_2) = y_{\alpha \gamma},  &  &x_1(e_{10}) = 1 - y_\emptyset - y_\beta, 
\end{align*}
where $\alpha, \beta$, and $\gamma$ are as in Figure~\ref{fig:108}. 
\end{itemize}
In order to make it clear that $p_i$ and $e_i$ are associated with $T\in \mathcal{T}^+_{S^*}$, 
we sometimes denote $p^{T}_i$ and $e^{T}_i$. 
Let $G_1=(V_1, E_1)$ be the obtained graph.
Define $b_1 \in \mathbf{Z}_{\ge 0}^{V_1}$ by 
$b_1(v)=b(v)$ for $v \in S^*$ and 
$b_1(v)$ is as above for $v \in V_1 \setminus S^*$. 
Define $x_1 \in \mathbf{R}^{E_1}$ by
$x_1(e)=x(e)$ for $e \in E[S^*]$ and 
$x_1(e)$ is as above for $e \in E_1 \setminus E[S^*]$.

For each $T \in \mathcal{T}_{S^*} \setminus \mathcal{T}^+_{S^*}$ with $|V(T) \cap S^*| =2$, say $V(T) \cap S^* = \{u, v\}$, 
let $\psi(T)$ be the corresponding triangle in $G_1$ whose vertex set is $\{u, v, r\}$. 
Let 
$$
\mathcal{T}_1 = \{T \in \mathcal T \mid V(T) \subseteq S^* \} \cup \{\psi(T) \mid T \in \mathcal{T}_{S^*} \setminus \mathcal{T}^+_{S^*} \mbox{ with } |V(T) \cap S^*| =2 \}, 
$$
and 
let $Y_1$ and $\mathcal{F}_1$ be the objects for the obtained instance $(G_1, b_1, \mathcal{T}_1)$
that are defined in the same way as $Y$ and $\mathcal{F}$. 
Define $y_1$ as the restriction of $y$ to $Y_1$, 
where we identify $f \in F^*_0$ with $e^f$ and 
identify $T \in \mathcal{T}_{S^*} \setminus \mathcal{T}^+_{S^*}$ with $\psi(T)$.

\begin{figure}
 \begin{minipage}{0.48\hsize}
\begin{center}
\includegraphics[clip,width=4cm]{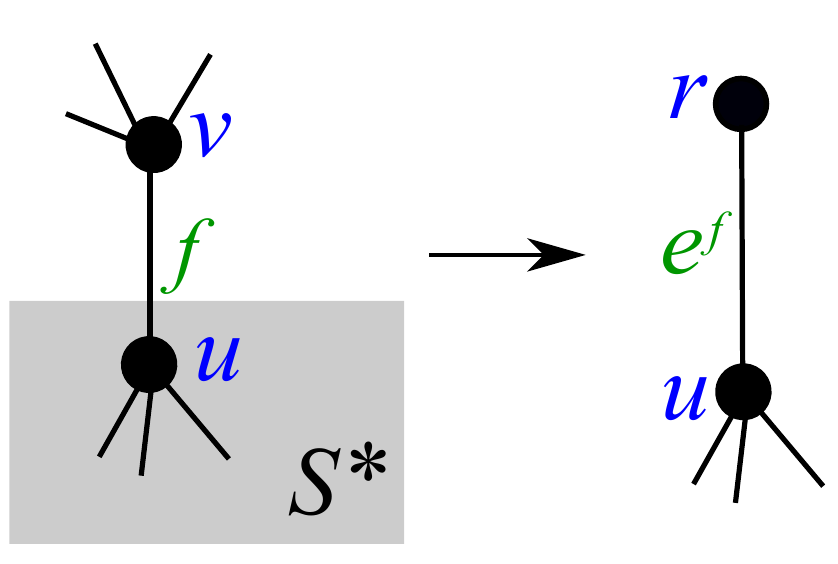}
\caption{An edge in $\tilde{F}^*_0$}
\label{fig:103}
\end{center}
 \end{minipage}
 \begin{minipage}{0.48\hsize}
\begin{center}
\includegraphics[clip,width=4cm]{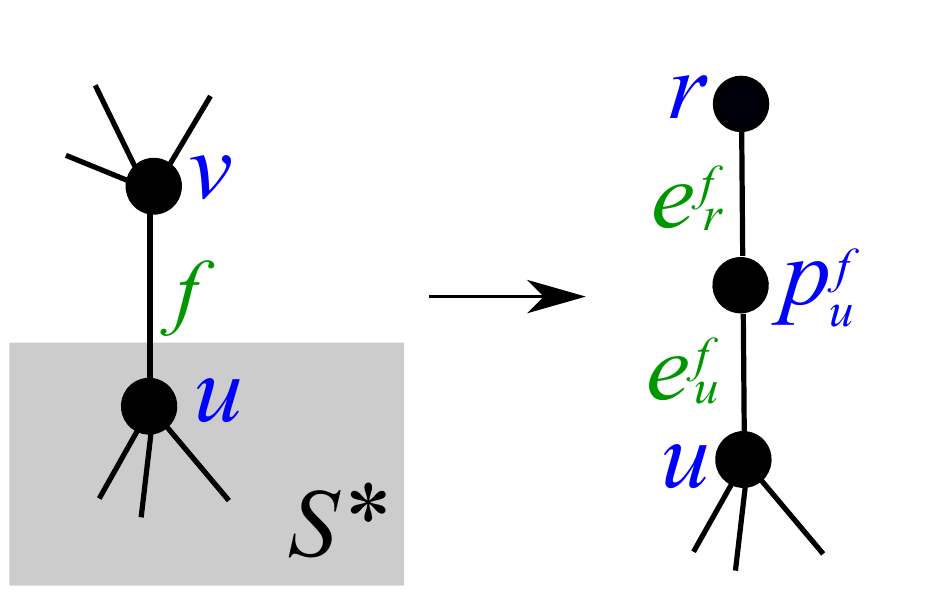}
\caption{An edge in $\tilde{F}^*_1$}
\label{fig:104}
\end{center}
 \end{minipage}
\end{figure}

\begin{figure}
 \begin{minipage}{0.60\hsize}
\begin{center}
\includegraphics[clip,width=7cm]{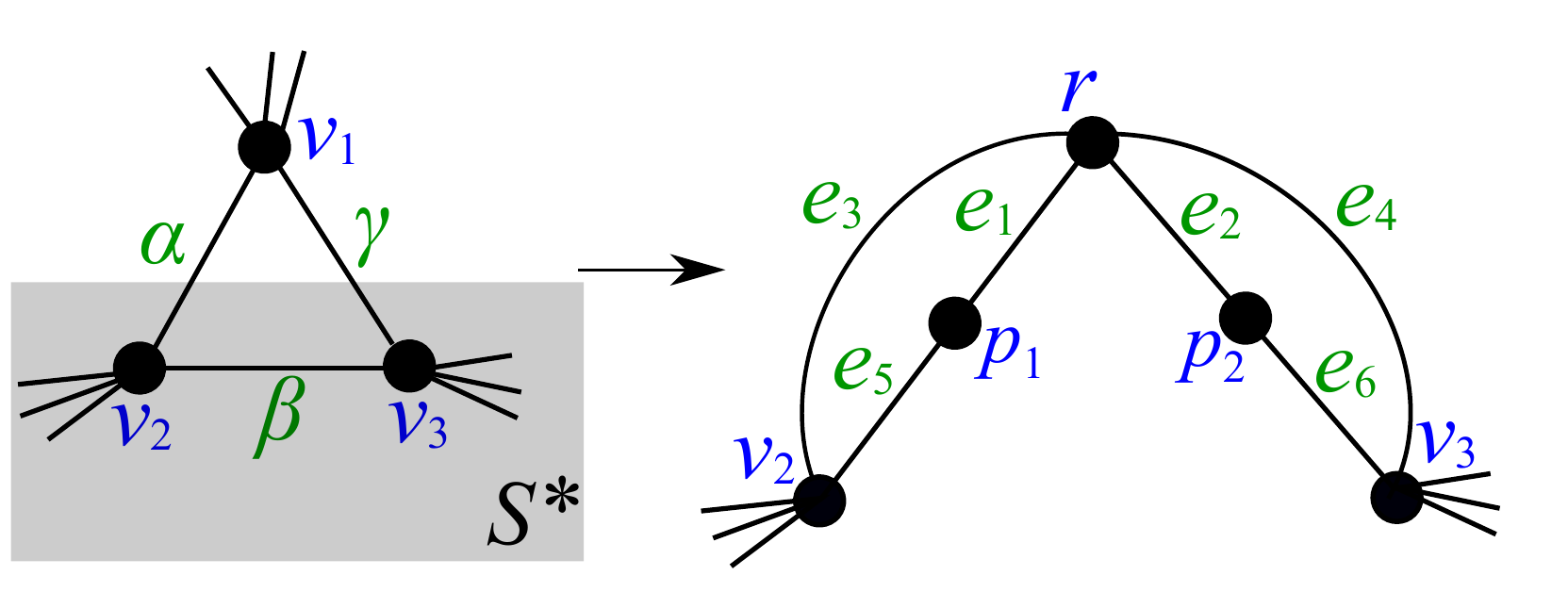}
\caption{A triangle of type (A)}
\label{fig:105}
\end{center}
 \end{minipage}
 \begin{minipage}{0.38\hsize}
\begin{center}
\includegraphics[clip,width=5.3cm]{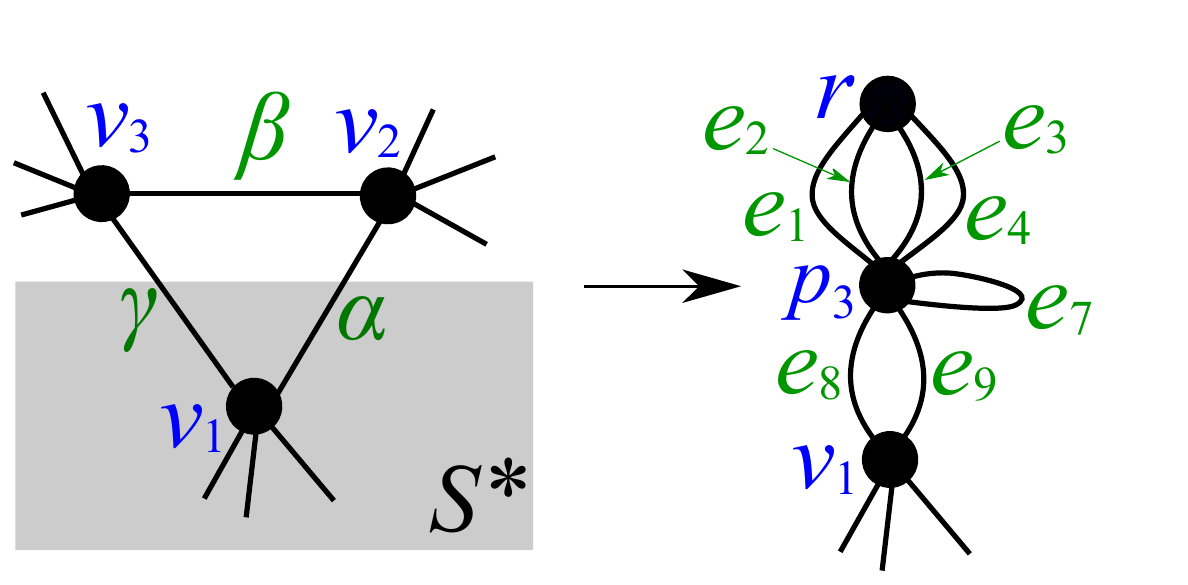}
\caption{A triangle of type (A')}
\label{fig:106}
\end{center}
 \end{minipage}
\end{figure}

\begin{figure}
 \begin{minipage}{0.60\hsize}
\begin{center}
\includegraphics[clip,width=7cm]{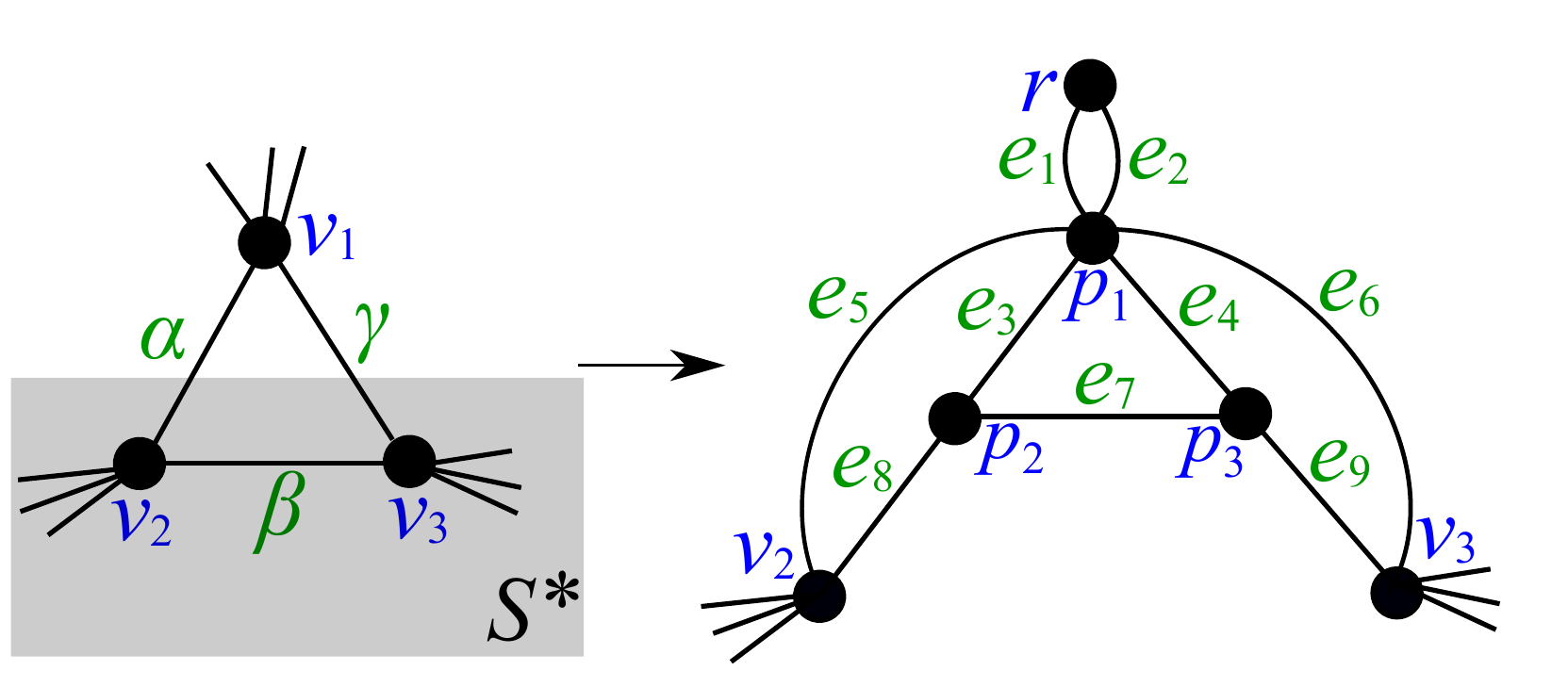}
\caption{A triangle of type (B)}
\label{fig:107}
\end{center}
 \end{minipage}
 \begin{minipage}{0.38\hsize}
\begin{center}
\includegraphics[clip,width=5cm]{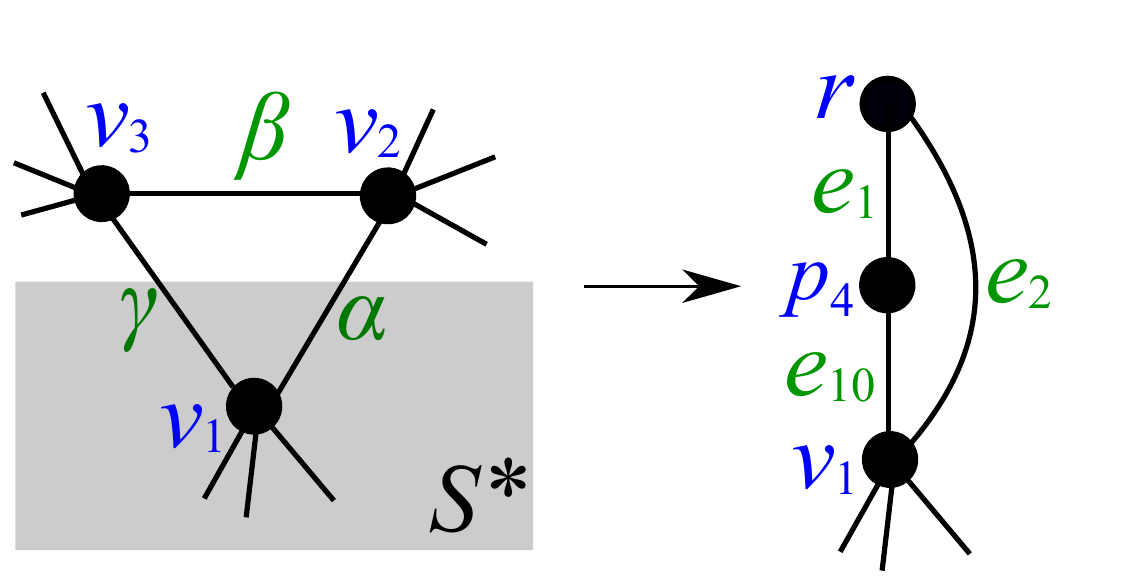}
\caption{A triangle of type (B')}
\label{fig:108}
\end{center}
 \end{minipage}
\end{figure}

Similarly, by changing the roles of $S^*$ and $V \setminus S^*$, 
we construct a graph $G_2=(V_2, E_2)$  and an instance $(G_2, b_2, \mathcal{T}_2)$, 
where the new vertex corresponding to  $S^*$ is denoted by $r'$.  
Define $x_2, y_2, Y_2$, and $\mathcal{F}_2$ in the same way as above. 
Note that a triangle $T \in \mathcal{T}^+_{S^*}$ is of type (A) (resp.~type (B)) for $(G_1, b_1, \mathcal{T}_1)$ 
if and only if it is of type (A') (resp.~type (B')) for $(G_2, b_2, \mathcal{T}_2)$. 

We use the following claim, whose proof is given in Appendix~\ref{sec:prf50}. 

\begin{claim}
\label{clm:key01}
For $j \in \{1, 2\}$, $(x_j, y_j)$ satisfies (\ref{eq:01}), (\ref{eq:02}), (\ref{eq:triangle})--(\ref{eq:06}), and (\ref{eq:100})
with respect to the new instance $(G_j, b_j, \mathcal{T}_j)$. 
\end{claim}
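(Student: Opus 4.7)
The plan is to handle only $j = 1$; the case $j = 2$ is completely symmetric, obtained by swapping the two sides of the cut and interchanging the roles of triangle types (A)$\leftrightarrow$(A') and (B)$\leftrightarrow$(B'). I would first dispose of the easy constraints. Since $y_1$ is the restriction of $y$ to $Y_1$, and since every triangle in $\mathcal{T}_1$ either lies inside $E[S^*]$ or is some $\psi(T)$ for $T \in \mathcal{T}_{S^*} \setminus \mathcal{T}^+_{S^*}$ (in which case the three edges of $\psi(T)$ carry the same $x_1$-values as the edges of $T$, because their crossing representatives were introduced via $e^f$ with $x_1(e^f) = x(f)$), the constraints (\ref{eq:04})--(\ref{eq:06}) and (\ref{eq:triangle}) for $(x_1, y_1)$ are immediate consequences of the same constraints for $(x, y)$. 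The box constraint (\ref{eq:02}) on each newly defined edge follows from non-negativity of the $y$-values together with the identity $\sum_{J \in \mathcal{E}_T} y(T, J) = 1$ provided by (\ref{eq:04}).

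Next, for the degree equation (\ref{eq:01}): at each original vertex $v \in S^*$, the edges of $G_1$ incident to $v$ are the edges of $G[S^*]$ incident to $v$ together with, for every $f \in \delta_G(S^*) \cap \dot\delta_G(v)$, exactly one new edge incident to $v$ whose $x_1$-value equals $x(f)$, so the degree equals $x(\dot\delta_G(v)) = b(v) = b_1(v)$. At each auxiliary vertex $p_u^f$ or $p_i^T$ the degree matches $b_1$ by a direct computation using (\ref{eq:04}) and (\ref{eq:05}); for instance, at $p_u^f$ one gets $x(f) + (1 - x(f)) = 1 = b_1(p_u^f)$. Finally, at $r$ the sum of the $x_1$-values of incident edges coincides with the left-hand side of (\ref{eq:100}) for $(S^*, F^*_0, F^*_1)$, which equals $1 = b_1(r)$ by the tightness assumption.

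The main obstacle is (\ref{eq:100}). Given $(S_1, F_0, F_1) \in \mathcal{F}_1$, I would assume (by complementation) that $r \notin S_1$ and build a companion triple $(S, \bar F_0, \bar F_1) \in \mathcal{F}$ for the original instance with $S = S_1 \cap V$, by translating the role of each gadget into a choice of sides in $G$. For each gadget (associated with an edge of $\tilde F^*_0$, an edge of $\tilde F^*_1$, or a triangle of type (A), (A'), (B) or (B')) there is a short case analysis on which auxiliary vertices lie in $S_1$ and how the gadget's boundary edges are distributed between $F_0$ and $F_1$. In each subcase one checks that the contribution of the gadget to the left-hand side of (\ref{eq:100}) in $(G_1, b_1, \mathcal{T}_1)$ is at least the contribution of the corresponding original edges of $T$ (and, when $T \in \mathcal{T}^+_{S^*}$ is split so that $S$ crosses it, the correction $-2q(T)$) in the translated inequality for $(G, b, \mathcal{T})$; the $x_1$-values were engineered precisely so that this comparison goes through in every subcase. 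The parity condition $b_1(S_1) + |F_1| \equiv b(S) + |\bar F_1| \pmod 2$ is preserved by the construction, so $(S, \bar F_0, \bar F_1) \in \mathcal{F}$, and invoking (\ref{eq:100}) for the original instance yields the desired inequality. The bulk of the work is the per-gadget case analysis, analogous in spirit to the case analysis in the proof of Claim \ref{clm:30} but longer because of the richer gadget repertoire.
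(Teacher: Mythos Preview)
Your outline matches the paper's proof in spirit, but there is a genuine gap in the per-gadget translation for (\ref{eq:100}). The problematic case is a triangle $T\in\mathcal{T}^+_{S^*}$ of type~(B) (so $v_2,v_3\in S^*$, $v_1\notin S^*$, and exactly one of $\alpha,\gamma$ lies in $F^*_1$) when both $v_2,v_3\in S_1$ and the local parity $b_1(\hat S)+|\hat F_1|$ is even. In that subcase the gadget's contribution to the left-hand side can be as small as $y_\alpha+y_\gamma+y_{\alpha\beta}+y_{\beta\gamma}$. On the other side, any parity-preserving assignment of $\alpha,\gamma$ to $\bar F_0,\bar F_1$ in the companion triple yields a contribution of either $x(\alpha)+x(\gamma)=y_\alpha+y_\gamma+y_{\alpha\beta}+y_{\beta\gamma}+2y_{\alpha\gamma}$ or $2-x(\alpha)-x(\gamma)-2y_\beta=y_\alpha+y_\gamma+y_{\alpha\beta}+y_{\beta\gamma}+2y_\emptyset$. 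Both exceed the gadget bound whenever $y_{\alpha\gamma}>0$ and $y_\emptyset>0$, so the inequality you need (gadget contribution $\ge$ translated contribution) fails, and the ``$x_1$-values were engineered precisely so that this comparison goes through in every subcase'' claim is not correct here.

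The paper repairs this with a global argument rather than a local one: it works with a \emph{minimizer} $(S',F'_0,F'_1)$ of the left-hand side and observes that the offending gadget contributes $1-x_1(e^{T}_1)-x_1(e^{T}_2)$, where $e^{T}_1,e^{T}_2$ are its two edges into $r$. Since $x_1(\delta_{G_1}(r))=1$ by the tightness of (\ref{eq:100}) at $(S^*,F^*_0,F^*_1)$, two such triangles (or one together with any edge of $\delta_{G_1}(r)\cap F'_1$ outside it) already force the total to be at least $1$. If exactly one such triangle $T$ remains and no other $r$-edge is in $F'_1$, the paper modifies the minimizer by absorbing $V'_T$ and $r$ into $S'$ and then complements; in the new minimizer this bad configuration disappears and the per-gadget translation goes through. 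Your direct approach, which treats an arbitrary $(S_1,F_0,F_1)$ without the minimizer hypothesis, cannot perform this modification step, so you need to incorporate that global/minimizer argument to close the gap.
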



\subsubsection{Pairing up $\mathcal{T}_j$-free $b_j$-factors}

Since $|\mathcal{T}_j| \le |\mathcal{T}| - |\mathcal{T}^+_{S^*}| < |\mathcal{T}|$ for $j \in \{1, 2\}$, 
by Claim~\ref{clm:key01} and by the induction hypothesis, 
$x_j$ is in the $\mathcal{T}_j$-free $b_j$-factor polytope. 
That is, there exists a set $\mathcal{M}_j$ of $\mathcal{T}_j$-free $b_j$-factors in $G_j$ and 
a non-negative coefficient $\lambda_{M}$ for each $M \in \mathcal{M}_j$ such that $\sum_{M \in \mathcal{M}_j} \lambda_M = 1$ and
$x_j = \sum_{M \in \mathcal{M}_j} \lambda_M x_{M}$, 
where $x_{M} \in \mathbf{R}^{E_j}$ is the characteristic vector of $M$.  

Let $j \in \{1, 2\}$ and consider $(G_j, b_j, \mathcal{T}_j)$. 
Since $x_j(e^T_1) \ge x_j(e^T_7)$ for each triangle $T \in \mathcal{T}^+_{S^*}$ of type (B), 
by swapping parallel edges $e^T_1$ and $e^T_2$ if necessary, 
we may assume that  
$\{e^T_2, e^T_7\} \not \subseteq M$ for each $M \in \mathcal{M}_j$ and for each $T \in \mathcal{T}^+_{S^*}$ of type (B).
In what follows, we construct a collection of $\mathcal{T}$-free $b$-factors in $G$ by combining $\mathcal{M}_1$ and $\mathcal{M}_2$. 

Since there is a one-to-one correspondence between $\delta_{G_1}(r)$ and $\delta_{G_2}(r')$, 
we identify them and denote $E_0$, that is, $E_0 = E_1 \cap E_2 =  \delta_{G_1}(r) = \delta_{G_2}(r')$. 
Note that $e^f_r \in E_1$ and $e^f_{r'} \in E_2$ are identified for each $f \in \tilde{F}^*_1$. 
Since $b_1(r) = b_2(r') = 1$, it holds that $|M_1 \cap E_0| = |M_2 \cap E_0| = 1$ for every $M_1 \in \mathcal{M}_1$ and for every $M_2 \in \mathcal{M}_2$. 
Define 
$$
\mathcal{M} = \{ (M_1, M_2) \mid M_1 \in \mathcal{M}_1, \ M_2 \in \mathcal{M}_2, \ M_1 \cap E_0 = M_2 \cap E_0\}. 
$$ 
Since $x_1(e) = x_2(e)$ for $e \in E_0$ by the definitions of $x_1$ and $x_2$, 
we can pair up a $b_1$-factor $M_1$ in $\mathcal{M}_1$ and a $b_2$-factor $M_2$ in $\mathcal{M}_2$ so that $(M_1, M_2) \in \mathcal{M}$. 
More precisely, 
we can assign a non-negative coefficient $\lambda_{(M_1, M_2)}$ for each pair $(M_1, M_2) \in \mathcal{M}$
such that 
\begin{align}
& \sum \{ \lambda_{(M_1, M_2)} \mid   (M_1, M_2) \in \mathcal{M} \} = 1, \label{eq:51} \\
& \sum \{ \lambda_{(M_1, M_2)} \mid   (M_1, M_2) \in \mathcal{M}, \ e' \in M_1 \} = x_1(e')  \qquad  (e' \in E_1), \label{eq:52} \\
& \sum \{ \lambda_{(M_1, M_2)} \mid   (M_1, M_2) \in \mathcal{M}, \ e' \in M_2 \} = x_2(e')  \qquad  (e' \in E_2). \label{eq:53}
\end{align}

Let $(M_1, M_2) \in \mathcal{M}$.  
For a triangle $T \in \mathcal{T}^+_{S^*}$ of type (A) or (A'), 
denote $M_T = (M_1 \cup M_2) \cap \{ e^T_1, \dots , e^T_{9}\}$ and define 
$\varphi(M_1, M_2, T) \subseteq E(T)$ as  
\begin{align*}
&\varphi(M_1, M_2, T) = \begin{cases}
\{\alpha\} & \mbox{if $M_T=\{e_2, e_5, e_8\}$ or $M_T=\{e_2, e_5, e_9\}$}, \\ 
\{\gamma \} & \mbox{if $M_T=\{e_1, e_6, e_8\}$ or $M_T=\{e_1, e_6, e_9\}$},  \\
\{\alpha, \beta\} & \mbox{if $M_T=\{e_3, e_5, e_6, e_8\}$ or $M_T=\{e_3, e_5, e_6, e_9\}$}, \\
\{\beta, \gamma\} & \mbox{if $M_T=\{e_4, e_5, e_6, e_8\}$ or $M_T=\{e_4, e_5, e_6, e_9\}$}, \\
\{\alpha, \gamma\} & \mbox{if $M_T = \{e_5, e_6, e_8, e_9\}$}, \\
\{\beta\} & \mbox{if $M_T = \{e_5, e_6, e_7\}$}, 
\end{cases}
\end{align*}
where the superscript $T$ is omitted here. 
Note that $M_T$ satisfies one of the above conditions,
because $M_j$ is a $b_j$-factor for $j \in \{1, 2\}$. 

For a triangle $T \in \mathcal{T}^+_{S^*}$ of type (B) or (B'), 
denote $M_T = (M_1 \cup M_2) \cap \{ e^T_1, \dots , e^T_{10}\}$ and define 
$\varphi(M_1, M_2, T) \subseteq E(T)$ as  
\begin{align*}
&\varphi(M_1, M_2, T) = \begin{cases}
\emptyset & \mbox{if $M_T = \{e_1, e_7\}$}, \\
\{\alpha\} & \mbox{if $M_T = \{e_4, e_8, e_{10} \}$ or $M_T = \{e_5, e_7, e_{10} \}$}, \\ 
\{\gamma \} & \mbox{if $M_T = \{e_3, e_9, e_{10} \}$ or $M_T = \{e_6, e_7, e_{10} \}$},  \\
\{\alpha, \beta\} & \mbox{if $M_T = \{e_5, e_8, e_9, e_{10} \}$}, \\
\{\beta, \gamma\} & \mbox{if $M_T = \{e_6, e_8, e_9, e_{10} \}$}, \\
\{\alpha, \gamma\} & \mbox{if $M_T = \{e_2, e_8, e_9, e_{10}\}$}, \\
\{\beta\} & \mbox{if $M_T = \{e_1, e_8, e_9\}$}, 
\end{cases}
\end{align*}
where the superscript $T$ is omitted here again. 
Note that $M_T$ satisfies one of the above conditions,
because we are assuming that $M_j$ is a $b_j$-factor 
with $\{e_2, e_7\} \not \subseteq M_j$ for $j \in \{1, 2\}$.

For $(M_1, M_2) \in \mathcal{M}$, 
define $M_1 \oplus M_2 \subseteq E$ as 
\begin{align*}
M_1 \oplus M_2 &= (M_1 \cap E[S^*]) \cup (M_2 \cap E[V \setminus S^*]) \cup \{ f \in \tilde{F}^*_0 \mid e^f \in M_1 \cap M_2 \} \\
                       & \quad \cup \{ f \in \tilde{F}^*_1 \mid e^f_r \not\in M_1 \cap M_2 \} 
                           \cup \bigcup \{ \varphi(M_1, M_2, T) \mid T \in \mathcal{T}^+_{S^*} \}. 
\end{align*}

We now use the following claims, whose proofs are postponed to Appendices~\ref{sec:key03} and~\ref{sec:key04}. 

\begin{claim}
\label{clm:key03}
For $(M_1, M_2) \in \mathcal{M}$, $M_1 \oplus M_2$ forms a $\mathcal{T}$-free $b$-factor. 
\end{claim}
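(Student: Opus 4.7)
The claim has two parts: that $M_1 \oplus M_2$ is $\mathcal{T}$-free and that it is a $b$-factor on $G$. I would verify them separately; both proceed by case analysis, and both rely crucially on the matching condition $M_1 \cap E_0 = M_2 \cap E_0$ that defines $\mathcal{M}$.

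For $\mathcal{T}$-freeness, fix $T \in \mathcal{T}$ and consider where $V(T)$ lies. If $V(T) \subseteq S^*$ then $T \in \mathcal{T}_1$; since $M_1$ is $\mathcal{T}_1$-free, some edge of $T$ is absent from $M_1$, and because the other pieces in the definition of $M_1 \oplus M_2$ are disjoint from $E[S^*]$, that edge is absent from $M_1 \oplus M_2$ as well. The case $V(T) \subseteq V \setminus S^*$ is symmetric. If $T \in \mathcal{T}_{S^*} \setminus \mathcal{T}^+_{S^*}$ with $|V(T) \cap S^*| = 2$, then both crossing edges of $T$ lie in $F^*_0$, and edge-disjointness of $\mathcal{T}$ forces them into $\tilde{F}^*_0$; thus $\psi(T) \in \mathcal{T}_1$, and the $\mathcal{T}_1$-freeness of $M_1$ transfers back to $T$ via the equivalence $e^f \in M_1 \Longleftrightarrow f \in M_1 \oplus M_2$ for $f \in \tilde{F}^*_0$, which uses $M_1 \cap E_0 = M_2 \cap E_0$. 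The case $|V(T) \cap S^*| = 1$ is symmetric. Finally, if $T \in \mathcal{T}^+_{S^*}$, the edges of $T$ contributed to $M_1 \oplus M_2$ are exactly $\varphi(M_1, M_2, T)$; every image listed in the definition of $\varphi$ has at most two edges, so $E(T) \not\subseteq M_1 \oplus M_2$.

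For the $b$-factor property, I would show that every $v \in V$ has degree $b(v)$ in $M_1 \oplus M_2$. If $v \in S^*$ lies on no triangle of $\mathcal{T}^+_{S^*}$, the edges of $v$ in $G$ are in natural bijection with those in $G_1$: internal edges of $E[S^*]$ map to themselves, a crossing edge $f \in \tilde{F}^*_0$ to $e^f$, and a crossing edge $f \in \tilde{F}^*_1$ to $e^f_v$. The identities $M_1 \cap E_0 = M_2 \cap E_0$ and $b_1(p^f_v) = 1$ make this bijection preserve membership (viewed in $M_1$ on the one side, in $M_1 \oplus M_2$ on the other), so the degree of $v$ in $M_1 \oplus M_2$ equals the degree of $v$ in $M_1$, namely $b_1(v) = b(v)$. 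The case $v \in V \setminus S^*$ off the triangle set is symmetric, using $M_2$. For $v \in V(T)$ with $T \in \mathcal{T}^+_{S^*}$, I inspect each listed value of $\varphi(M_1, M_2, T)$ and compare the $v$-degree contributed by $\varphi(M_1, M_2, T) \cap \dot\delta_G(v)$ to the $v$-degree of $M_j$ restricted to the gadget; added to the bijection-preserved contribution from the non-triangle edges, the total matches $b(v)$.

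The hard part will be the gadget verification at triangle vertices. In each of the four gadget shapes (types (A), (A'), (B), (B')) two things need to be checked: first, that the restriction $M_T$ of $M_1 \cup M_2$ to the gadget edges is always one of the patterns listed in the definition of $\varphi$; this uses the $b_j$-factor condition on the auxiliary vertices (with $b_j(p_i) \in \{1,2\}$ as specified), the identification of the shared edges in $E_0$, and, for type (B), the normalization $\{e^T_2, e^T_7\} \not\subseteq M$ arranged by the earlier parallel-edge swap. Second, that replacing the gadget edges by $\varphi(M_1, M_2, T) \subseteq E(T)$ preserves the degrees at $v_1, v_2, v_3$. This is a bookkeeping exercise spread across four figures, but each individual case is a direct arithmetic check once the correspondence between gadget edges and triangle edges is made explicit.
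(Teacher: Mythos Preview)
Your proposal is correct and follows essentially the same approach as the paper. The paper organizes the gadget degree check a bit more compactly by recording, for each triangle type, identities such as $|\varphi(M_1,M_2,T)\cap\{\alpha,\gamma\}|=|M_T\cap\{e_8,e_9\}|$, $|\varphi(M_1,M_2,T)\cap\{\alpha,\beta\}|=|M_T\cap\{e_3,e_5\}|$, etc., which encapsulate your case-by-case bookkeeping in one line per vertex; conversely, your treatment of $\mathcal{T}$-freeness is more explicit than the paper's one-sentence appeal to $\mathcal{T}_j$-freeness of $M_j$.
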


\begin{claim}
\label{clm:key04}
It holds that
\begin{equation*}
x = \sum_{(M_1, M_2) \in \mathcal{M}}  \lambda_{(M_1, M_2)} x_{M_1 \oplus M_2}, 
\end{equation*}
where $x_{M_1 \oplus M_2} \in \mathbf{R}^{E}$ is the characteristic vector of $M_1 \oplus M_2$. 
\end{claim}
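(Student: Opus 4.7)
\textbf{Proof proposal for Claim~\ref{clm:key04}.} My plan is to verify the claimed identity coordinate by coordinate in $\mathbf{R}^E$, splitting the edges of $G$ according to the clauses in the definition of $M_1 \oplus M_2$: edges inside $E[S^*]$, edges inside $E[V \setminus S^*]$, edges $f \in \tilde F^*_0$, edges $f \in \tilde F^*_1$, and edges lying on some triangle $T \in \mathcal{T}^+_{S^*}$. For $e \in E[S^*]$ the membership $e \in M_1 \oplus M_2$ reduces to $e \in M_1$, so (\ref{eq:52}) combined with $x_1(e) = x(e)$ gives the correct coefficient sum; the case $e \in E[V \setminus S^*]$ is symmetric via (\ref{eq:53}). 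For $f \in \tilde F^*_0$ the auxiliary edge $e^f$ lies in $E_0$, and the pairing condition $M_1 \cap E_0 = M_2 \cap E_0$ collapses the event $e^f \in M_1 \cap M_2$ to $e^f \in M_1$, so the sum equals $x_1(e^f) = x(f)$. The case $f \in \tilde F^*_1$ is the same argument applied to $e^f_r$, yielding $1 - x_1(e^f_r) = x(f)$.

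The heart of the argument will be the case $e \in E(T)$ for $T \in \mathcal{T}^+_{S^*}$. Here the plan is to prove the stronger per-configuration identity
\[
\sum \{\lambda_{(M_1, M_2)} : \varphi(M_1, M_2, T) = J\} \;=\; y(T, J) \qquad (J \in \mathcal{E}_T),
\]
from which the required $x(e) = \sum \{\lambda_{(M_1, M_2)} : e \in \varphi(M_1, M_2, T)\}$ follows by summing over $J \ni e$ and invoking (\ref{eq:05}). To establish the displayed identity I will fix $T$ and $J$, enumerate the admissible configurations of $M_T$ listed in the definition of $\varphi$, and for each configuration read off its $\lambda$-mass by applying (\ref{eq:52}) and (\ref{eq:53}) to the gadget edges of $G_1$ and $G_2$, glued through the pairing on the gadget edges lying in $E_0$. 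The gadget values $x_1(e^T_i)$ and $x_2(e^T_i)$ were defined in Section~\ref{sec:case3} precisely as linear combinations of $y_\emptyset, y_\alpha, y_\beta, y_\gamma, y_{\alpha\beta}, y_{\beta\gamma}, y_{\alpha\gamma}$ that cause these contributions to telescope to the target $y(T, J)$.

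The main obstacle will be bookkeeping: four gadget types ((A), (A'), (B), (B')) times up to seven values of $J \in \mathcal{E}_T$ times one or two admissible $M_T$ configurations per $J$ produce a long, if mechanical, verification. A subtler point that I will need to handle carefully is the justification that the cases in the definition of $\varphi(M_1, M_2, T)$ are exhaustive and mutually exclusive on every valid pair $(M_1, M_2) \in \mathcal{M}$; this relies on the degree constraints $b_j(p^T_i)$ at the gadget's auxiliary vertices together with the preparatory swap assumption $\{e^T_2, e^T_7\} \not\subseteq M$ for type-(B) gadgets, which was set up just before the $\varphi$ definition for exactly this purpose.
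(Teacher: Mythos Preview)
Your treatment of the edges outside $\bigcup_{T\in\mathcal{T}^+_{S^*}}E(T)$ is correct and matches the paper's argument. The gap is in your plan for triangle edges: the ``stronger per-configuration identity''
\[
\sum\bigl\{\lambda_{(M_1,M_2)} : \varphi(M_1,M_2,T)=J\bigr\}=y(T,J)\qquad(J\in\mathcal{E}_T)
\]
is false in general. For a type (A)/(A') triangle, inspect the table defining $\varphi$: the value $\varphi=\emptyset$ never occurs, so the left-hand side is $0$ for $J=\emptyset$ while $y_\emptyset$ need not vanish. More concretely, $\varphi=\{\alpha\}$ holds exactly when $e_2\in M_T$ (the edge $e_2$ appears in no other admissible configuration), so its $\lambda$-mass equals $x_1(e_2)=y_\emptyset+y_\alpha$, not $y_\alpha$; symmetrically one computes the mass on $\varphi=\{\alpha,\gamma\}$ to be $y_{\alpha\gamma}-y_\emptyset$. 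Thus your telescoping-to-$y(T,J)$ step cannot go through, and the plan to recover the edge identity by summing the per-$J$ identities over $J\ni e$ collapses.

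The paper avoids this by working directly at the edge level. For each $e\in\{\alpha,\beta,\gamma\}$ it expresses the event $\{e\in\varphi(M_1,M_2,T)\}$ (or its complement) as a disjoint union of single-edge events $\{e_i\in M_T\}$, each of whose $\lambda$-mass is a marginal $x_j(e_i)$ available from (\ref{eq:52})--(\ref{eq:53}). For instance, for type (A)/(A') one reads off from the case list that $\beta\in\varphi$ iff $e_3\in M_T$ or $e_4\in M_T$ or $e_7\in M_T$ (pairwise exclusive), giving mass $x_1(e_3)+x_1(e_4)+x_2(e_7)=y_{\alpha\beta}+y_{\beta\gamma}+y_\beta=x(\beta)$; and $\alpha\notin\varphi$ iff $e_1\in M_T$ or $e_4\in M_T$ or $e_7\in M_T$, giving $1-x(\alpha)$. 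The same style of computation handles $\gamma$ and the type (B)/(B') gadget. So the bookkeeping you anticipated is still needed, but it must be organized around single gadget-edge marginals rather than around full configurations $J$.
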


By (\ref{eq:51}) and by Claims~\ref{clm:key03} and~\ref{clm:key04}, 
it holds that $x$ is in the $\mathcal{T}$-free $b$-factor polytope. 
This completes the proof of Proposition~\ref{prop:main}.


\section{Algorithm}
\label{sec:algo}

In this section, we give a polynomial-time algorithm for the weighted $\mathcal{T}$-free $b$-factor problem and prove Theorem~\ref{thm:algo}. 
Our algorithm is based on the ellipsoid method using 
the fact that the $\mathcal{T}$-free $b$-factor polytope is equal to ${\rm proj}_E (P)$ 
(Theorem~\ref{thm:main01}). 
In order to apply the ellipsoid method, 
we need a polynomial-time algorithm for the separation problem. 
That is, for $(x, y) \in \mathbf{R}^E \times \mathbf{R}^Y$, 
we need a polynomial-time algorithm that concludes $(x, y) \in P$ or 
returns a violated inequality. 

Let $(x, y) \in \mathbf{R}^E \times \mathbf{R}^Y$. 
We can easily check whether $(x, y)$ satisfies  
(\ref{eq:01}), (\ref{eq:02}), and (\ref{eq:triangle})--(\ref{eq:06}) 
or not in polynomial time. 
In order to solve the separation problem for (\ref{eq:07}), 
we use the following theorem, 
which implies that 
the separation problem for (\ref{eq:03}) can be solved in polynomial time.

\begin{theorem}[Padberg-Rao~\cite{PR82} (see also~\cite{LRT08})]
\label{lem:oddmincut}
Suppose we are given a graph $G'=(V', E')$, $b' \in \mathbf{Z}_{\ge 0}^{V'}$, and $x' \in [0, 1]^{E'}$. 
Then, in polynomial time, we can compute $S' \subseteq V'$ and a partition $(F'_0, F'_1)$ of $\delta_{G'}(S')$ 
that minimize $\sum_{e \in F'_0} x'(e) + \sum_{e \in F'_1} (1-x'(e))$ subject to $b'(S') + |F'_1|$ is odd. 
\end{theorem}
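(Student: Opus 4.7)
The plan is to reduce the given separation problem to a minimum-weight $T$-cut problem in an auxiliary capacitated graph, and then invoke the Gomory--Hu tree algorithm for $T$-cuts, which runs in polynomial time. First, I would rewrite the objective $\sum_{e\in F'_0}x'(e)+\sum_{e\in F'_1}(1-x'(e))$ as a sum over the cut edges $e\in\delta_{G'}(S')$, where each cut edge contributes either $x'(e)$ or $1-x'(e)$ depending on whether it is placed in $F'_0$ or $F'_1$; the natural (unconstrained) choice is to pay $\min(x'(e),1-x'(e))$ for every cut edge. Then I would construct an auxiliary graph $H$ by subdividing each edge $e=uv$ of $G'$ with a new vertex $w_e$, setting $\mathrm{cap}(uw_e)=x'(e)$ and $\mathrm{cap}(w_ev)=1-x'(e)$. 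A straightforward argument shows that, for any $S_H\subseteq V(H)$ with $S_H\cap V'=S'$, the cost-minimizing placement of the $w_e$'s gives total cut weight $\sum_{e\in\delta_{G'}(S')}\min(x'(e),1-x'(e))$, while the chosen side of $w_e$ directly encodes the membership of $e$ in $F'_0$ versus $F'_1$.

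Next, I would translate the parity constraint $b'(S')+|F'_1|$ odd into a $T$-cut condition on $H$. Writing $d^{>}(v)=|\{e\in\delta_{G'}(v):x'(e)>\tfrac12\}|$ and $T=\{v\in V':b'(v)+d^{>}(v)\text{ is odd}\}$, a modular-arithmetic check based on $\sum_{v\in S'}d^{>}(v)\equiv|\delta_{G'}(S')\cap\{e:x'(e)>\tfrac12\}|\pmod 2$ shows that under the natural placement of each $w_e$ the induced $|F'_1|$ satisfies $b'(S')+|F'_1|\equiv|S'\cap T|\pmod 2$. Hence feasibility is equivalent to $|S_H\cap T|$ being odd, i.e.\ to $(S_H,V(H)\setminus S_H)$ being a $T$-cut of $H$. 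Non-natural placements of a $w_e$ correspond to paying an extra $|1-2x'(e)|$ while flipping the parity contribution of $e$, and one checks that jointly optimizing the placement of the $w_e$'s together with $S'$ automatically accommodates such flips within the $T$-cut framework.

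Finally, the minimum-weight $T$-cut on a capacitated graph is solvable in polynomial time: build a Gomory--Hu tree of $H$ and pick the cheapest tree edge whose removal splits $T$ into two parts of odd cardinality. Decoding the optimal $S_H$ via $S'=S_H\cap V'$ and reading off the $w_e$-placements recovers the desired triple $(S',F'_0,F'_1)$. The hard part is the parity bookkeeping of the middle step: verifying that the definition of $T$ via $b'(v)+d^{>}(v)$ correctly matches the target parity across both natural and flipped $w_e$-placements, and handling the corner cases $x'(e)=\tfrac12$ and $x'(e)\in\{0,1\}$ so that the minimum-weight $T$-cut in $H$ coincides exactly with the minimum of the target objective and the optimum can be decoded back to a valid $(S',F'_0,F'_1)$.
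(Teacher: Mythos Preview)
The paper does not prove this statement: it is quoted as a known result of Padberg and Rao (with the reference to Letchford--Reinelt--Theis for a streamlined treatment) and is used as a black box in Section~\ref{sec:algo}. So there is no ``paper's own proof'' to compare against.

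That said, your sketch is the classical Padberg--Rao argument and is essentially correct. The reduction to a minimum $T$-cut via edge subdivision, with capacities $x'(e)$ and $1-x'(e)$ on the two halves, and with $T$ defined by the parity of $b'(v)$ plus the number of incident edges with $x'(e)>\tfrac12$, is exactly the construction in \cite{PR82}. The one place where your write-up is slightly loose is the ``non-natural placements'' paragraph: you assert that flipping a $w_e$ to the non-cheap side both costs $|1-2x'(e)|$ extra and toggles the parity, and that this is automatically captured by the $T$-cut minimization. This is true, but the clean way to see it is to declare each subdivision vertex $w_e$ to be a $T$-vertex precisely when $x'(e)>\tfrac12$ (so that the ``natural'' side of $w_e$ is the one that keeps $|S_H\cap T|$ aligned with the desired parity), rather than folding this into the definition of $T$ on the original vertices via $d^{>}(v)$. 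With that labeling the correspondence between $T$-cuts of $H$ and feasible triples $(S',F'_0,F'_1)$ is a bijection preserving cost, and the corner cases $x'(e)\in\{0,\tfrac12,1\}$ cause no trouble. The Gomory--Hu step at the end is standard.
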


In what follows, we reduce the separation problem for (\ref{eq:07}) to 
that for (\ref{eq:03}) and utilize Theorem~\ref{lem:oddmincut}. 
Suppose that $(x, y) \in \mathbf{R}^E \times \mathbf{R}^Y$ satisfies  
(\ref{eq:01}), (\ref{eq:02}), and (\ref{eq:triangle})--(\ref{eq:06}). 
For each triangle $T \in \mathcal{T}$,  
we remove $E(T)$ and add a vertex $r_T$ together with three new edges 
$e_1 = r_T v_1$, $e_2 = r_T v_2$, and $e_3 = r_T v_3$ (Figure~\ref{fig:102}). 
Let $E'_T = \{e_1, e_2, e_3\}$ and
define 
\begin{align}
&x'(e_1) = x(\alpha) + x(\gamma) - 2 y_{\alpha \gamma}, &
&x'(e_2) = x(\alpha) + x(\beta) - 2 y_{\alpha \beta},  &
&x'(e_3) = x(\beta) + x(\gamma) - 2 y_{\beta \gamma}. \label{eq:x'}
\end{align}
Let $G'=(V', E')$ be the graph obtained from $G$ by applying this procedure for every $T \in \mathcal{T}$. 
Define $b' \in \mathbf{Z}_{\ge 0}^{V'}$ as $b'(v)=b(v)$ for $v \in V$ and $b'(v)=0$ for $v \in V' \setminus V$. 
By setting $x'(e)=x(e)$ for $e \in E' \cap E$ 
and by defining $x'(e)$ as (\ref{eq:x'}) for $e \in E' \setminus E$, 
we obtain $x' \in [0, 1]^{E'}$. 
We now show the following lemma.

\begin{figure}
\begin{center}
\includegraphics[clip,width=7cm]{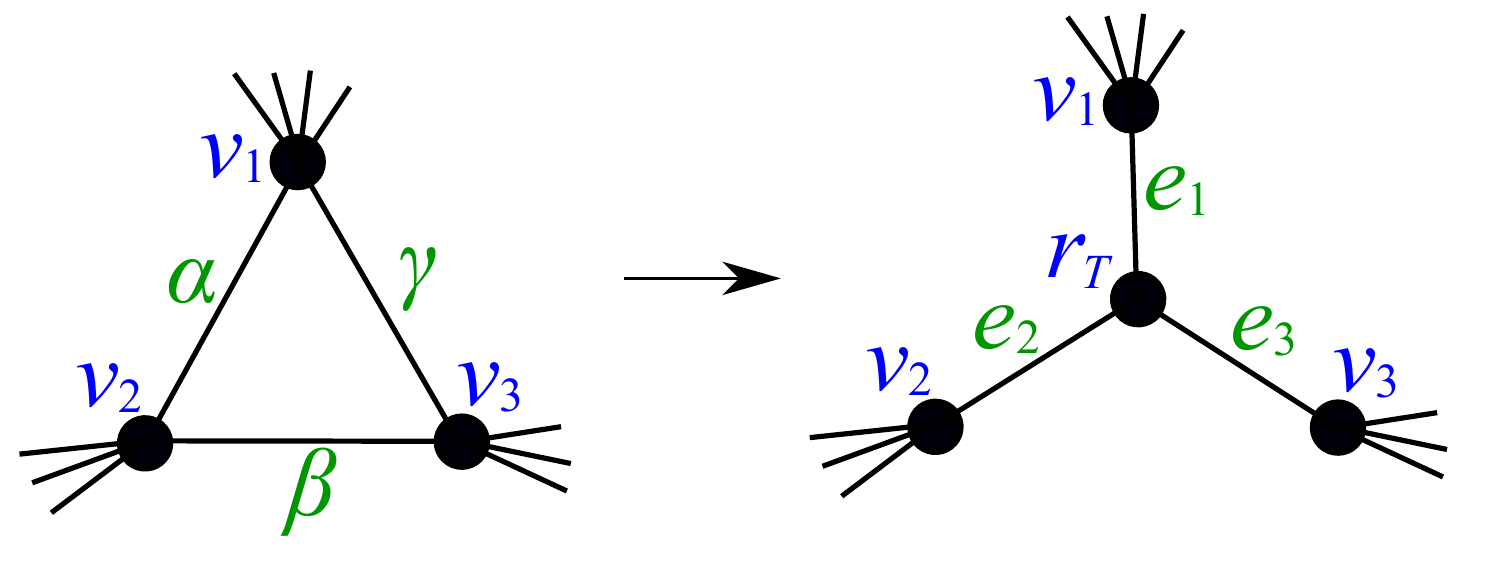}
\caption{Replacement of a triangle $T \in \mathcal{T}$}
\label{fig:102}
\end{center}
\end{figure}

\begin{lemma}
\label{lem:cutequiv}
Suppose that $(x, y) \in \mathbf{R}^E \times \mathbf{R}^Y$ satisfies  
(\ref{eq:01}), (\ref{eq:02}), and (\ref{eq:triangle})--(\ref{eq:06}). 
Define $G'=(V', E')$, $b'$, and $x'$ as above. 
Then, $(x, y)$ violates (\ref{eq:07}) for some $(S, F_0, F_1) \in \mathcal{F}$  
if and only if
there exist $S' \subseteq V'$ and a partition $(F'_0, F'_1)$ of $\delta_{G'}(S')$ 
such that $b'(S') + |F'_1|$ is odd and $\sum_{e \in F'_0} x'(e) + \sum_{e \in F'_1} (1-x'(e)) < 1$. 
\end{lemma}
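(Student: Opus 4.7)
The plan is to set up a local (per-triangle) correspondence between triples $(S, F_0, F_1) \in \mathcal{F}$ and odd-parity cuts $(S', F'_0, F'_1)$ of $G'$. The four cases in the definition of $q^*(T)$ in (\ref{eq:07}) should correspond to the four combinations of (which side of $S'$ contains $r_T$) with (parity of $|F'_1 \cap E'_T|$), and under this correspondence the two left-hand sides should agree in the forward direction and can only decrease in the backward direction.

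For the forward direction, given $(S, F_0, F_1) \in \mathcal{F}$ violating (\ref{eq:07}), I would set $S' := S \cup \{r_T : |V(T) \cap S| \ge 2\}$, so that $r_T$ sits on the majority side of $V(T)$. Non-triangle edges of $\delta_G(S)$ keep their class. For each $T \in \mathcal{T}_S$ exactly one edge $e_T \in E'_T$ enters $\delta_{G'}(S')$, namely the edge from $r_T$ to the minority vertex of $V(T)$; place $e_T \in F'_1$ in Cases 1, 2 of $q^*(T)$ (i.e., when $|F_1 \cap E(T)|$ is odd) and $e_T \in F'_0$ otherwise. Using (\ref{eq:04}) and (\ref{eq:05}) one checks case-by-case that $e_T$'s contribution to the new LHS matches the $T$-part of the LHS of (\ref{eq:07}) (for example, in Case 1 both equal $y_\emptyset + y_\gamma + y_{\alpha\beta}$). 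Hence the two LHS's are equal, while $b'(S') = b(S)$ and a triangle-by-triangle parity count gives $|F'_1| \equiv |F_1| \pmod 2$.

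For the backward direction, let $(S', F'_0, F'_1)$ have $b'(S') + |F'_1|$ odd and $\sum_{e \in F'_0} x'(e) + \sum_{e \in F'_1}(1-x'(e)) < 1$, and set $S := S' \cap V$. I would first canonicalize the cut so that for every triangle $T$ the vertex $r_T$ lies on the majority side of $V(T)$. The only obstruction is a triangle $T$ with $V(T)$ entirely on one side of $S$ but $r_T$ on the other: flipping $r_T$ removes all three edges of $E'_T$ from the cut and reduces their contribution to $0$, but shifts the parity of $|F'_1|$ by $k := |F'_1 \cap E'_T|$. The crucial computation, based on $\sum_{e \in E'_T} x'(e) = 2(1-y_\emptyset)$ and $x'(e) \le 1 - y_\emptyset$ for each $e \in E'_T$ (both consequences of (\ref{eq:04})--(\ref{eq:06})), shows that odd $k$ forces the $E'_T$-contribution to be at least $1$, contradicting the hypothesis; so $k$ is even and the flip preserves the odd-parity condition. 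After canonicalization, non-triangle cut edges are copied into $(F_0, F_1)$, and for each $T \in \mathcal{T}_S$ I would assign $E(T) \cap \delta(S)$ to the case of $q^*(T)$ whose parity of $|F_1 \cap E(T)|$ matches $k$; a per-triangle inequality with non-negative slack (such as $2y_\emptyset$ or $2y_\gamma + 2y_{\alpha\beta}$) then gives the $G$-contribution is at most the $G'$-contribution, so the resulting triple in $\mathcal{F}$ violates (\ref{eq:07}).

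The main obstacle is precisely this parity step in the backward direction: naively, an odd $k$ at a misaligned $r_T$ would block canonicalization, since one flip of $r_T$ would destroy the odd-parity hypothesis while the per-triangle contribution could in principle be small. It is exactly the sharper bound $x'(e) \le 1 - y_\emptyset$ --- which is why the auxiliary vector $x'$ is defined via $-2y_{\alpha\gamma}$ rather than some looser quantity --- that rules this scenario out and makes the reduction to the Padberg--Rao odd-cut problem go through.
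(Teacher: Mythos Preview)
Your proposal is correct and follows essentially the same route as the paper. The forward construction is identical; for the backward direction the paper phrases things via a minimizer with inclusion-minimal cut rather than by explicitly flipping $r_T$, but the decisive computation is the same --- namely that when $V(T)$ lies entirely on one side and $r_T$ on the other with $|\hat F_1|$ odd, the local contribution is already $\ge 1$ (using $\sum_{e\in E'_T} x'(e)=2(1-y_\emptyset)$ and $x'(e)\le 1-y_\emptyset$), while for split triangles each choice of $(\hat F_0,\hat F_1)$ with given parity dominates the canonical one up to slacks like $2y_\emptyset$ or $2y_\gamma+2y_{\alpha\beta}$, exactly as you indicate.
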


\begin{proof}
First, to show the ``only if'' part, 
assume that $(x, y)$ violates (\ref{eq:07}) for some $(S, F_0, F_1) \in \mathcal{F}$. 
Recall that $\mathcal{T}_S = \{ T \in \mathcal{T} \mid E(T) \cap \delta_G(S) \not= \emptyset\}$. 
Define $S' \subseteq V'$ by 
$$
S' = S \cup \{r_T \mid T \in \mathcal{T},\ |V(T) \cap S| \ge 2\}. 
$$
Then, for each $T \in \mathcal{T}_S$, $E'_T \cap \delta_{G'}(S')$ consists of a single edge, which we denote $e_T$. 
Define $F'_0$ and $F'_1$ as follows: 
\begin{align*}
F'_0 &= (F_0 \cap E') \cup \{e_T \mid T \in \mathcal{T}_S, \ \mbox{$|E(T) \cap F_1| = 0$ or $2$} \}, \\ 
F'_1 &= (F_1 \cap E') \cup \{e_T \mid T \in \mathcal{T}_S, \ \mbox{$|E(T) \cap F_1| = 1$} \}.  
\end{align*}
It is obvious that $(F'_0, F'_1)$ is a partition of $\delta_{G'}(S')$ 
and $b'(S') + |F'_1| \equiv b(S) + |F_1| \equiv 1 \pmod{2}$.  

To show that $\sum_{e \in F'_0} x'(e) + \sum_{e \in F'_1} (1-x'(e)) < 1$, 
we evaluate $x'(e_T)$ or $1-x'(e_T)$ for each $T \in \mathcal{T}_S$.  
Let $T \in \mathcal{T}_S$ be a triangle such that $E(T) = \{\alpha, \beta, \gamma\}$ and $E(T) \cap \delta_G(S) = \{\alpha, \beta\}$. 
Then, we obtain the following by the definition of $q^*(T)$.  
\begin{itemize}
\item
If $T \in \mathcal{T}_S$ and $\alpha, \beta \in F_0$, then 
$x(\alpha) + x(\beta) - 2 q^*(T) = x'(e_T)$. 
\item
If $T \in \mathcal{T}_S$ and $\alpha, \beta \in F_1$, then 
$(1-x(\alpha)) + (1-x(\beta)) - 2 q^*(T) = x'(e_T)$. 
\item
If $T \in \mathcal{T}_S$, $\alpha \in F_0$, and $\beta \in F_1$, then 
$x(\alpha) + (1-x(\beta)) - 2 q^*(T) = y_\emptyset + y_{\gamma} + y_{\alpha \beta} = 1-x'(e_T)$. 
\item
If $T \in \mathcal{T}_S$, $\beta \in F_0$, and $\alpha \in F_1$, then 
$(1-x(\alpha)) + x(\beta) - 2 q^*(T) = y_\emptyset + y_{\gamma} + y_{\alpha \beta} = 1-x'(e_T)$. 
\end{itemize}
With these observations, we obtain
\begin{align*}
 \sum_{e \in F'_0} x'(e) + \sum_{e \in F'_1} (1-x'(e))  
 = \sum_{e \in F_0} x(e) + \sum_{e \in F_1} (1- x(e)) - \sum_{T \in \mathcal{T}_S} 2 q^*(T) < 1, 
\end{align*}
which shows the ``only if'' part.

We next show the ``if'' part. 
For edge sets $F'_0, F'_1 \subseteq E'$, 
we denote $g(F'_0, F'_1) = \sum_{e \in F'_0} x'(e) + \sum_{e \in F'_1} (1-x'(e))$ to simplify the notation.  
Let $(S', F'_0, F'_1)$ be a minimizer of $g(F'_0, F'_1)$ subject to 
$(F'_0, F'_1)$ is a partition of $\delta_{G'}(S')$ and $b'(S') + |F'_1|$ is odd. 
Among minimizers, we choose $(S', F'_0, F'_1)$ so that $F'_0 \cup F'_1$ is inclusion-wise minimal. 
To derive a contradiction, assume that $g(F'_0, F'_1) < 1$. 
We show the following claim.

\begin{claim}
\label{clm:insidetriangle}
Let $T \in \mathcal{T}$ be a triangle as shown in Figure~\ref{fig:102} and 
denote $\hat{F}_0 = F'_0 \cap E'_T$ and $\hat{F}_1 = F'_1 \cap E'_T$. 
Then, we obtain the following. 
\begin{enumerate}
\item[(i)]
If $v_1, v_2, v_3 \not\in S'$, 
then $r_T \not\in S'$. 
\item[(ii)]
If $v_1, v_2, v_3 \in S'$, 
then $r_T \in S'$. 
\item[(iii)]
If $v_1 \in S'$, $v_2, v_3 \not\in S'$, and $|\hat{F}_1|$ is even, then
$g(\hat{F}_0, \hat{F}_1) = x'(e_1) = x(\alpha) + x(\gamma) - 2y_{\alpha \gamma}$. 
\item[(iv)]
If $v_1 \in S'$, $v_2, v_3 \not\in S'$, and $|\hat{F}_1|$ is odd, then
$g(\hat{F}_0, \hat{F}_1) = 1-x'(e_1) = y_{\emptyset} + y_{\beta} + y_{\alpha \gamma}$. 
\end{enumerate}
\end{claim}

\noindent
\textit{Proof of Claim\ref{clm:insidetriangle}. }
(i) Assume that $v_1, v_2, v_3 \not\in S'$ and $r_T \in S'$, 
which implies that $\hat{F}_0 \cup \hat{F}_1 = \{e_1, e_2, e_3\}$. 
Then, we derive a contradiction by the following case analysis and by the symmetry of $e_1, e_2$, and $e_3$.  
\begin{itemize}
\item
If $\hat{F}_0 = \{e_1, e_2\}$ and $\hat{F}_1 = \{e_3\}$, then 
\begin{align*}
g({F}'_0, {F}'_1) 
&\ge  g(\hat{F}_0, \hat{F}_1) \\
&= (x(\alpha)+x(\gamma) - 2y_{\alpha \gamma}) + (x(\alpha)+x(\beta)-2y_{\alpha \beta} ) 
+ (1-x(\beta)-x(\gamma)+2y_{\beta \gamma} ) \\
& = 1 + 2 y_\alpha + 2 y_{\beta \gamma} \ge 1,   
\end{align*}
which is a contradiction. 
\item
If $\hat{F}_0 = \emptyset$ and $\hat{F}_1 = \{e_1, e_2, e_3\}$, then
\begin{align*}
g({F}'_0, {F}'_1) 
&\ge g(\hat{F}_0, \hat{F}_1) \\
&= (1-x(\alpha)-x(\gamma) + 2y_{\alpha \gamma}) + (1-x(\alpha)-x(\beta)+2y_{\alpha \beta} ) + (1-x(\beta)-x(\gamma)+2y_{\beta \gamma} ) \\
&= 1 + 2(1-x(\alpha) - x(\beta) - x(\gamma) + y_{ \alpha \beta} + y_{\beta \gamma} + y_{ \alpha \gamma } ) \\
&= 1+2 y_\emptyset  
\ge 1,   
\end{align*}
which is a contradiction. 
\item
Suppose that $|\hat{F}_1|$ is even. 
Since $b'(S' \setminus \{r_T\}) + |F'_1 \setminus \delta_{G'}(r_T)|$ is odd
and $g(F'_0 \setminus \delta_{G'}(r_T), F'_1 \setminus \delta_{G'}(r_T)) \le g(F'_0, F'_1)$, 
$(S' \setminus \{r_T\}, F'_0 \setminus \delta_{G'}(r_T), F'_1 \setminus \delta_{G'}(r_T))$ is also a minimizer of $g$. 
This contradicts that  a minimizer $(S', F'_0, F'_1)$ is chosen so that $F'_0 \cup F'_1$ is inclusion-wise minimal. 
\end{itemize}

\medskip

(ii) Assume that $v_1, v_2, v_3 \in S'$ and $r_T \not\in S'$,  
which implies that $\hat{F}_0 \cup \hat{F}_1 = \{e_1, e_2, e_3\}$.  
Then, we derive a contradiction by the same argument as (i).

\medskip

(iii) 
Suppose that $v_1 \in S'$, $v_2, v_3 \not\in S'$, and $|\hat{F}_1|$ is even. 
Then, we have one of the following cases.  
\begin{itemize}
\item
If $\hat{F}_0 = \{e_1\}$ and $\hat{F}_1 = \emptyset$, then 
$$
g(\hat{F}_0, \hat{F}_1) = x'(e_1) = x(\alpha) + x(\gamma) - 2y_{\alpha \gamma}. 
$$
\item
If $\hat{F}_0 = \{e_2, e_3\}$ and $\hat{F}_1 = \emptyset$, then 
\begin{align*}
g(\hat{F}_0, \hat{F}_1) 
&= (x(\alpha) + x(\beta) - 2y_{\alpha \beta}) + (x(\beta) + x(\gamma) - 2y_{\beta \gamma})  \\
&= x(\alpha) + x(\gamma) + 2 ( x(\beta) - y_{\alpha \beta} - y_{\beta \gamma})  \\
&\ge x(\alpha)+x(\gamma) 
\ge x(\alpha) + x(\gamma) - 2y_{\alpha \gamma}. 
\end{align*}
\item
If $\hat{F}_0 = \emptyset$ and $\hat{F}_1 = \{e_2, e_3\}$, then
\begin{align*}
g(\hat{F}_0, \hat{F}_1) 
&= (1 - x(\alpha) - x(\beta) + 2y_{\alpha \beta}) + (1 - x(\beta) - x(\gamma) + 2y_{\beta \gamma}) \\
&= x(\alpha) + x(\gamma) - 2y_{\alpha \gamma}  + 2 (1-x(\alpha)-x(\beta)-x(\gamma)+ y_{\alpha \beta} + y_{\beta \gamma} + y_{\alpha \gamma}) \\
&\ge x(\alpha) + x(\gamma) - 2y_{\alpha \gamma}. 
\end{align*} 
\end{itemize}
Since $(S', F'_0, F'_1)$ is a minimizer of $g(F'_0, F'_1)$, 
$g(\hat{F}_0, \hat{F}_1) = x'(e_1) = x(\alpha) + x(\gamma) - 2y_{\alpha \gamma}$. 

\medskip

(iv) Suppose that $v_1 \in S'$, $v_2, v_3 \not\in S'$, and $|\hat{F}_1|$ is odd. 
Then, we have one of the following cases by changing the labels of $e_2$ and $e_3$ if necessary. 
\begin{itemize}
\item
If $\hat{F}_0 = \emptyset$ and $\hat{F}_1 = \{e_1\}$, then
$$
g(\hat{F}_0, \hat{F}_1) = 1-x'(e_1) = 1 - x(\alpha) - x(\gamma) + 2 y_{\alpha \gamma}. 
$$
\item
If $\hat{F}_0 = \{e_2\}$ and $\hat{F}_1 = \{e_3\}$, then
\begin{align*}
g(\hat{F}_0, \hat{F}_1) 
&= (x(\alpha) + x(\beta) - 2y_{\alpha \beta}) + (1 - x(\beta) - x(\gamma) + 2y_{\beta \gamma}) \\
&\ge 1-x(\alpha)-x(\gamma) + 2 ( x(\alpha) - y_{\alpha \beta}) \\ 
&\ge 1-x(\alpha)-x(\gamma) + 2y_{\alpha \gamma}.  
\end{align*}
\end{itemize}
Since $(S', F'_0, F'_1)$ is a minimizer of $g(F'_0, F'_1)$, 
$g(\hat{F}_0, \hat{F}_1) = 1-x'(e_1) = 1 - x(\alpha) - x(\gamma) + 2y_{\alpha \gamma} = y_{\emptyset} + y_{\beta} + y_{\alpha \gamma}$. 

\smallskip

\hspace{\fill}
(End of the Proof of Claim~\ref{clm:insidetriangle})

\bigskip

Note that each $T \in \mathcal{T}$ satisfies exactly one of (i)--(iv) of Claim~\ref{clm:insidetriangle}
by changing the labels of $v_1$, $v_2$, and $v_3$ if necessary.

In what follows, we construct $(S, F_0, F_1) \in \mathcal{F}$ for which 
 $(x, y)$ violates (\ref{eq:07}). 
We initialize $(S, F_0, F_1)$ as 
\begin{align*}
&S = S' \cap V, & &F_0 = F'_0 \cap E, & &F_1 = F'_1 \cap E, 
\end{align*}
and apply the following procedures for each triangle $T \in \mathcal{T}$. 
\begin{itemize}
\item
If $T$ satisfies the condition (i) or (ii) of Claim~\ref{clm:insidetriangle}, then we do nothing.  
\item
If $T$ satisfies the condition (iii) of Claim~\ref{clm:insidetriangle}, then we add $\alpha$ and $\gamma$ to $F_0$. 
\item
If $T$ satisfies the condition (iv) of Claim~\ref{clm:insidetriangle}, then we add $\alpha$ to $F_0$ and add $\gamma$ to $F_1$. 
\end{itemize}
Then, we obtain that  
$(F_0, F_1)$ is a partition of $\delta_G(S)$, 
$b(S) + |F_1| \equiv b'(S') + |F'_1| \equiv 1 \pmod{2}$, and
\begin{align*}
\sum_{e \in F_0} x(e) + \sum_{e \in F_1} (1- x(e)) - \sum_{T \in \mathcal{T}_S} 2 q^*(T) 
 =  \sum_{e \in F'_0} x'(e) + \sum_{e \in F'_1} (1-x'(e)) < 1
\end{align*}
by Claim~\ref{clm:insidetriangle}. 
This shows that $(x, y)$ violates (\ref{eq:07}) for $(S, F_0, F_1) \in \mathcal{F}$, 
which completes the proof of ``if'' part.  
\end{proof}

Since the proof of Lemma~\ref{lem:cutequiv} is constructive, 
given $S' \subseteq V'$ and $F'_0, F'_1 \subseteq E'$ such that 
$(F'_0, F'_1)$ is a partition of $\delta_{G'}(S')$, $b'(S') + |F'_1|$ is odd, and $\sum_{e \in F'_0} x'(e) + \sum_{e \in F'_1} (1-x'(e)) < 1$, 
we can construct 
$(S, F_0, F_1) \in \mathcal{F}$ for which 
 $(x, y)$ violates (\ref{eq:07}) in polynomial time. 
By combining this with Theorem~\ref{lem:oddmincut}, 
it holds that 
the separation problem for $P$ can be solved in polynomial time. 
Therefore, 
the ellipsoid method can maximize a linear function on $P$ in polynomial time (see e.g.~\cite{GLSbook}), 
and hence we can maximize $\sum_{e \in E} w(e) x(e)$ subject to $x \in {\rm proj}_E (P)$. 
By perturbing the objective function if necessary, 
we can obtain a maximizer $x^*$ that is an extreme point of ${\rm proj}_E (P)$. 
Since each extreme point of ${\rm proj}_E (P)$ corresponds to a $\mathcal{T}$-free $b$-factor by 
Theorem~\ref{thm:main01}, $x^*$ is a characteristic vector of a maximum weight $\mathcal{T}$-free $b$-factor. 
This completes the proof of Theorem~\ref{thm:algo}.

\section{Concluding Remarks}
\label{sec:conclusion}

This paper gives a first polynomial-time algorithm for the 
weighted $\mathcal{T}$-free $b$-matching problem 
where $\mathcal{T}$ is a set of edge-disjoint triangles.
A key ingredient is an extended formulation of the $\mathcal{T}$-free $b$-factor polytope
with exponentially many inequalities. 
As we mentioned in Section~\ref{sec:ef}, 
it is rare that the first polynomial-time algorithm was designed
with the aid of an extended formulation.
This approach has a potential to be used for 
other combinatorial optimization problems for which no polynomial-time algorithm is known. 

Some interesting problems remain open. 
Since the algorithm proposed in this paper relies on the ellipsoid method, it is natural to ask whether 
we can design a combinatorial polynomial-time algorithm. 
It is also open whether our approach can be applied
to the weighted $C_{\le 4}$-free $b$-matching problem in general graphs 
under the assumption that the forbidden cycles are edge-disjoint and 
the weight is vertex-induced on every square. 
In addition, the weighted $C_{\le 3}$-free $2$-matching problem and 
the $C_{\le 4}$-free $2$-matching problem are big open problems in this area.




\bibliographystyle{plainurl}

\bibliography{factor}

\appendix


\section{Proof of Claim~\ref{clm:key01}}
\label{sec:prf50}

By symmetry, it suffices to consider $(G_1, b_1, \mathcal{T}_1)$. 
Since the tightness of (\ref{eq:100}) for $(S^*, F^*_0, F^*_1)$ implies that $x_1(\delta_{G_1}(r)) = 1$, 
we can easily see that $(x_1, y_1)$ satisfies (\ref{eq:01}), (\ref{eq:02}), (\ref{eq:triangle})--(\ref{eq:06}). 
In what follows, we consider (\ref{eq:100}) for $(x_1, y_1)$ in $(G_1, b_1, \mathcal{T}_1)$. 
For edge sets $F'_0, F'_1 \subseteq E_1$, 
we denote $g(F'_0, F'_1) = \sum_{e \in F'_0} x_1(e) + \sum_{e \in F'_1} (1-x_1(e))$ to simplify the notation.  
For $(S', F'_0, F'_1) \in \mathcal{F}_1$, 
let $h(S', F'_0, F'_1)$ denote the left-hand side of (\ref{eq:100}). 
To derive a contradiction,  
let $(S', F'_0, F'_1) \in \mathcal{F}_1$ be a minimizer of $h(S', F'_0, F'_1)$ and 
assume that $h(S', F'_0, F'_1) < 1$.
By changing the roles of $S'$ and $V' \setminus S'$ if necessary, we may assume that $r \not\in S'$.

For $T \in \mathcal{T}^+_{S^*}$, 
let $v_1, v_2, v_3, \alpha, \beta$, and $\gamma$ be as in Figures~\ref{fig:105}--\ref{fig:108}. 
Let $G'_T = (V'_T, E'_T)$ be the subgraph of $G_1$ corresponding to $T$, that is, 
the subgraph induced by $\{r, p_1, p_2, v_2, v_3\}$ (Figure~\ref{fig:105}), 
$\{r, p_3, v_1 \}$ (Figure~\ref{fig:106}), $\{r, p_1, p_2, p_3, v_2, v_3\}$ (Figure~\ref{fig:107}), or 
$\{r, p_4, v_1 \}$ (Figure~\ref{fig:108}). 
Let $\hat{S} = S' \cap (V'_T \setminus \{v_1, v_2, v_3\})$, $\hat{F}_0 = F'_0 \cap E'_T$, and $\hat{F}_1 = F'_1 \cap E'_T$. 

We show the following properties (P1)--(P9) in Section~\ref{sec:P1}, and 
show that $(x_1, y_1)$ satisfies (\ref{eq:100}) by using these properties in Section~\ref{sec:Proof100}. 

\begin{description}
\item[(P1)]
If $T$ is of type (A) or (B) and $v_2, v_3 \not\in S'$, then $b_1(\hat{S}) + |\hat F_1|$ is even. 
\item[(P2)]
If $T$ is of type (A), $v_2, v_3 \in S'$, and $b_1(\hat{S}) + |\hat F_1|$ is even, then 
$g(\hat{F}_0, \hat{F}_1) \ge \min \{x(\alpha)+x(\gamma), 2-x(\alpha)-x(\gamma) - 2 y_\beta \}$. 
\item[(P3)]
If $T$ is of type (B), $v_2, v_3 \in S'$, and $b_1(\hat{S}) + |\hat F_1|$ is even, then 
$g(\hat{F}_0, \hat{F}_1) \ge y_{\alpha} + y_{\gamma} + y_{\alpha \beta} + y_{\beta \gamma}$. 
\item[(P4)]
If $T$ is of type (A) or (B), $v_2, v_3 \in S'$, and $b_1(\hat{S}) + |\hat F_1|$ is odd, then 
$g(\hat{F}_0, \hat{F}_1) \ge y_\emptyset + y_\beta + y_{\alpha \gamma}$. 
\item[(P5)]
If $T$ is of type (A) or (B), $v_2 \in S'$, $v_3 \not\in S'$, and $b_1(\hat{S}) + |\hat F_1|$ is even, then 
$g(\hat{F}_0, \hat{F}_1) \ge \min \{ x(\alpha)+x(\beta), 2 - x(\alpha) - x(\beta) - 2 y_{\gamma}\}$.  
\item[(P6)]
If $T$ is of type (A) or (B), $v_2 \in S'$, $v_3 \not\in S'$, and $b_1(\hat{S}) + |\hat F_1|$ is odd, then 
$g(\hat{F}_0, \hat{F}_1) \ge y_\emptyset + y_\gamma + y_{\alpha \beta}$.  
\item[(P7)]
If $T$ is of type (A') or type (B') and $v_1 \not\in S'$, then $b_1(\hat{S}) + |\hat F_1|$ is even. 
\item[(P8)]
If $T$ is of type (A') or type (B'), $v_1 \in S'$, and $b_1(\hat{S}) + |\hat F_1|$ is even, then 
$g(\hat{F}_0, \hat{F}_1) = \min \{x(\alpha)+x(\gamma), 2-x(\alpha)-x(\gamma) - 2 y_\beta \}$. 
\item[(P9)]
If $T$ is of type (A') or type (B'), $v_1 \in S'$, and $b_1(\hat{S}) + |\hat F_1|$ is odd, then 
$g(\hat{F}_0, \hat{F}_1) = y_\emptyset + y_\beta + y_{\alpha \gamma}$. 
\end{description}

Note that each $T \in \mathcal{T}^+_{S^*}$ satisfies exactly one of (P1)--(P9) 
by changing the labels of $v_2$ and $v_3$ if necessary.


\subsection{Proofs of (P1)--(P9)}
\label{sec:P1}

\subsubsection{When $T$ is of type (A)}
\label{sec:subcase(A)}

We first consider the case when $T$ is of type (A). 

\paragraph{Proof of (P1)}
Suppose that $T$ is of type (A) and $v_2, v_3 \not\in S'$. 
If $b_1(\hat S) + |\hat F_1|$ is odd, then 
either $p_1 \in \hat S$ and $|\hat F_1 \cap \delta_{G_1}(p_1)|$ is even or 
$p_2 \in \hat S$ and $|\hat F_1 \cap \delta_{G_1}(p_2)|$ is even. 
In the former case,
$h(S', F'_0, F'_1) \ge \min \{x_1(e_1) + x_1(e_5), 2 - x_1 (e_1)  - x_1(e_5) \} = 1$, which is a contradiction. 
The same argument can be applied to the latter case. 
Therefore, $b_1(\hat{S}) + |\hat F_1|$ is even. 

\paragraph{Proof of (P2)}
Suppose that $T$ is of type (A), $v_2, v_3 \in S'$, and $b_1(\hat{S}) + |\hat F_1|$ is even. 
If $p_1 \not\in S'$, then we define $(S'', F''_0, F''_1) \in \mathcal{F}_1$ as 
$(S'', F''_0, F''_1) = (S' \cup \{p_1\}, F'_0 \setminus \{e_5\}, F'_1\cup \{e_1\})$ if $e_5 \in F'_0$ and 
$(S'', F''_0, F''_1) = (S' \cup \{p_1\}, F'_0 \cup \{e_1\}, F'_1 \setminus \{e_5\})$ if $e_5 \in F'_1$. 
Since $h(S'', F''_0, F''_1) = h(S', F'_0, F'_1)$ holds, by replacing $(S', F'_0, F'_1)$ with $(S'', F''_0, F''_1)$, 
we may assume that $p_1 \in S'$. 
Similarly, we may assume that $p_2 \in S'$, which implies that
$\hat S = \{p_1, p_2\}$, $\hat F_0 \cup \hat F_1 = \{e_1, e_2, e_3, e_4\}$, and $|\hat F_1|$ is even. 
Then, 
$g(\hat F_0, \hat F_1) \ge \min\{ x(\alpha) + x(\gamma), 2-x(\alpha)-x(\gamma) - 2 y_\beta \}$ 
by the following case analysis. 

\begin{itemize}
\item
If $\hat F_1 = \emptyset$, then 
$g(\hat F_0, \hat F_1) = x_1(e_1) + x_1(e_2) + x_1(e_3) + x_1(e_4) = 2-x(\alpha)-x(\gamma) - 2 y_\beta$. 

\item
If $|\hat F_1| \ge 2$, then 
$g(\hat F_0, \hat F_1) \ge 2 - (x_1(e_1) + x_1(e_2) + x_1(e_3) + x_1(e_4))  =  x(\alpha) + x(\gamma) + 2 y_\beta \ge x(\alpha) + x(\gamma)$. 
\end{itemize}

\paragraph{Proof of (P4)}
Suppose that $T$ is of type (A), $v_2, v_3 \in S'$, and $b_1(\hat{S}) + |\hat F_1|$ is odd. 
In the same way as (P2), we may assume that 
$\hat S = \{p_1, p_2\}$, $\hat F_0 \cup \hat F_1 = \{e_1, e_2, e_3, e_4\}$, and $|\hat F_1|$ is odd. 
Then, 
$g(\hat F_0, \hat F_1) \ge y_\emptyset + y_\beta + y_{\alpha \gamma}$ 
by the following case analysis and by the symmetry of $v_2$ and $v_3$. 

\begin{itemize}
\item
If $|\hat F_1| = 3$, then $g(\hat F_0, \hat F_1) \ge 3 - (x_1(e_1) + x_1(e_2) + x_1(e_3) + x_1(e_4)) \ge 1 \ge y_\emptyset + y_\beta + y_{\alpha \gamma}$. 

\item
If $\hat F_1 = \{e_1\}$, then 
$g(\hat F_0, \hat F_1) \ge (1 - x_1(e_1)) + x_1(e_2) \ge y_\emptyset + y_\beta + y_{\alpha \gamma}$. 

\item
If $\hat F_1 = \{e_3\}$, then 
$g(\hat F_0, \hat F_1) \ge 1-x_1(e_3) \ge y_\emptyset + y_\beta + y_{\alpha \gamma}$. 
\end{itemize}

\paragraph{Proof of (P5)}
Suppose that $T$ is of type (A), $v_2 \in S'$, $v_3 \not\in S'$, and $b_1(\hat{S}) + |\hat F_1|$ is even. 
In the same way as (P2), we may assume that $p_1 \in S'$. 
If $p_2 \in S'$, then $b_1(p_2) + |F'_1 \cap \delta_{G_1}(p_2)|$ is even by the same calculation as (P1). 
Therefore, we may assume that $p_2 \not\in S'$, since otherwise
we can replace $(S', F'_0, F'_1)$ with $(S' \setminus \{p_2\}, F'_0 \setminus \delta_{G_1}(p_2), F'_1\setminus \delta_{G_1}(p_2))$ 
without increasing the value of $h(S', F'_0, F'_1)$. 
That is, we may assume that $\hat S = \{p_1\}$, $\hat F_0 \cup \hat F_1 = \{e_1, e_3\}$, and $|\hat F_1|$ is odd. 
Then, 
\begin{align*}
g(\hat F_0, \hat F_1) 
&\ge \min \{(1 - x_1(e_1)) + x_1(e_3),  x_1(e_1) + (1 - x_1(e_3)) \} \\
&= \min \{ x(\alpha)+x(\beta), 2 - x(\alpha) - x(\beta) \} \\ 
&\ge \min \{ x(\alpha)+x(\beta), 2 - x(\alpha) - x(\beta) - 2 y_{\gamma}\}. 
\end{align*}

\paragraph{Proof of (P6)}
Suppose that $T$ is of type (A), $v_2 \in S'$, $v_3 \not\in S'$, and $b_1(\hat{S}) + |\hat F_1|$ is odd. 
In the same way as (P5), 
we may assume that $\hat S = \{p_1\}$, $\hat F_0 \cup \hat F_1 = \{e_1, e_3\}$, and $|\hat F_1|$ is even. 
Then, 
\begin{align*}
g(\hat F_0, \hat F_1) 
&\ge \min \{ x_1(e_1) + x_1(e_3),  2 - x_1(e_1) - x_1(e_3) \} \\
&= \min \{ y_\emptyset + y_\gamma + y_{\alpha \beta}, 2 - (y_\emptyset + y_\gamma + y_{\alpha \beta})\} \\
&=  y_\emptyset + y_\gamma + y_{\alpha \beta}. 
\end{align*}

\subsubsection{When $T$ is of type (A')}
\label{sec:subcase(A')}

Second, we consider the case when $T$ is of type (A'). 

\paragraph{Proof of (P7)}
Suppose that $T$ is of type (A') and $v_1 \not\in S'$. 
If $b_1(\hat{S}) + |\hat F_1|$ is odd, then $\hat S =\{p_3\}$ and $|\hat F_1|$ is odd. 
This shows that $h(S', F'_0, F'_1) \ge g(\hat F_0, \hat F_1)  \ge 1$ by the following case analysis, which is a contradiction. 
\begin{itemize}
\item
If $\hat F_1 = \{e_1\}$, then $g(\hat F_0, \hat F_1) \ge (1-x_1(e_1)) + x_1(e_2) + x_1(e_9) \ge 1$. 
The same argument can be applied to the case of $\hat F_1 = \{e_2\}$ by the symmetry of $\alpha$ and $\gamma$. 

\item
If $\hat F_1 = \{e_i\}$ for some $i \in \{3, 4, 8\}$, then $g(\hat F_0, \hat F_1) \ge (1 - x_1(e_i)) + x_1(e_9) \ge 1$. 

\item
If $\hat F_1 = \{e_9\}$, then $g(\hat F_0, \hat F_1) 
= 1 + 2 y_\emptyset \ge 1$. 

\item
If $|\hat F_1| \ge 3$, then $g(\hat F_0, \hat F_1) \ge 3 - (x_1(e_1) + x_1(e_2) + x_1(e_3) + x_1(e_4) + x_1(e_8) + x_1(e_9)) \ge 1$. 
\end{itemize}
Therefore, $b_1(\hat S) + |\hat F_1|$ is even.

\paragraph{Proof of (P8)}
Suppose that $T$ is of type (A'), $v_1 \in S'$, and $b_1(\hat{S}) + |\hat F_1|$ is even. 
Then, 
$g(\hat F_0, \hat F_1) \ge \min\{ x(\alpha) + x(\gamma), 2-x(\alpha)-x(\gamma) - 2 y_\beta \}$ 
by the following case analysis. 
\begin{itemize}
\item
If $\hat F_0 = \{e_8, e_9\}$ and $\hat F_1 = \emptyset$, then 
$g(\hat F_0, \hat F_1) = x_1(e_8) + x_1(e_9) = x(\alpha) + x(\gamma)$. 

\item
If $\hat F_0 = \emptyset$ and $\hat F_1 = \{e_8, e_9\}$, then 
$g(\hat F_0, \hat F_1) = (1 - x_1(e_8)) + (1-x_1(e_9)) = 2-x(\alpha)-x(\gamma)  \ge 2-x(\alpha)-x(\gamma) - 2 y_\beta$. 

\item
If $\hat F_0 \cup \hat F_1 = \{e_1, e_2, e_3, e_4\}$, then 
$g(\hat F_0, \hat F_1) \ge \min\{ x(\alpha) + x(\gamma), 2-x(\alpha)-x(\gamma) - 2 y_\beta \}$ 
by the same calculation as (P2) in Section~\ref{sec:subcase(A)}. 
\end{itemize}

\paragraph{Proof of (P9)}
Suppose that $T$ is of type (A'), $v_1 \in S'$, and $b_1(\hat{S}) + |\hat F_1|$ is odd. 
Then, 
$g(\hat F_0, \hat F_1) \ge y_\emptyset + y_\beta + y_{\alpha \gamma}$ 
by the following case analysis. 
\begin{itemize}
\item
If $\hat F_0 = \{e_8 \}$ and $\hat F_1 = \{e_9 \}$, then 
$g(\hat F_0, \hat F_1) = x_1(e_8) + (1-x_1(e_9)) = y_\emptyset + y_\beta + y_{\alpha \gamma}$. 

\item
If $\hat F_0 = \{e_9 \}$ and $\hat F_1 = \{e_8 \}$, then 
$g(\hat F_0, \hat F_1) = (1 - x_1(e_8)) + x_1(e_9) \ge 1 \ge y_\emptyset + y_\beta + y_{\alpha \gamma}$. 

\item
If $\hat F_0 \cup \hat F_1 = \{e_1, e_2, e_3, e_4\}$, then 
$g(\hat F_0, \hat F_1) \ge y_\emptyset + y_\beta + y_{\alpha \gamma}$ 
by the same calculation as (P4) in Section~\ref{sec:subcase(A)}. 
\end{itemize}

\subsubsection{When $T$ is of type (B)}
\label{sec:subcase(B)}

Third, we consider the case when $T$ is of type (B). 
Let $G^+=(V^+, E^+)$ be the graph obtained from 
$G'_T = (V'_T, E'_T)$ in Figure~\ref{fig:107} by adding a new vertex $r^*$,
edges $e_{11}= r r^*$,  $e_{12}= v_2 r^*$, $e_{13}= v_3 r^*$, 
and self-loops $e_{14}, e_{15}, e_{16}$ that are incident to $v_2$, $v_3$, 
and $r^*$, respectively (Figure~\ref{fig:109}). 
We define $b_T: V^+ \to \mathbf{Z}_{\ge 0}$ as 
$b_T(v) = 1$ for $v \in \{r, p_1, p_2, p_3\}$ and $b_T(v) = 2$ for $v \in \{r^*, v_2, v_3\}$. 
We also define $x_T: E^+ \to \mathbf{Z}_{\ge 0}$ as 
$x_T(e) = x_1(e)$ for $e \in E'_T$ and
\begin{align*}
&x_T(e_{11}) = y_{\alpha} + y_{\gamma} + y_{\alpha \beta} + y_{\beta \gamma}, &  
&x_T(e_{12}) = y_{\alpha} + y_{\beta} + y_{\alpha \gamma} + y_{\beta \gamma}, &  
&x_T(e_{13}) = y_{\beta} + y_{\gamma} + y_{\alpha \beta} + y_{\alpha \gamma}, \\
&x_T(e_{14}) = y_{\emptyset} + y_{\gamma} , &  
&x_T(e_{15}) = y_{\emptyset} + y_{\alpha}, &  
&x_T(e_{16}) = y_{\emptyset}. 
\end{align*}
For $J \in \mathcal{E}_T$, define $b_T$-factors $M_J$ in $G^+$ as follows: 
\begin{align*}
& M_{\emptyset} = \{e_1, e_7, e_{14}, e_{15}, e_{16} \}, &
& M_{\alpha} = \{e_4, e_8, e_{11}, e_{12}, e_{15} \}, &
& M_{\beta} = \{e_1, e_8, e_9, e_{12}, e_{13} \}, \\
& M_{\gamma} = \{e_3, e_9, e_{11}, e_{13}, e_{14} \}, &
& M_{\alpha \beta} = \{e_5, e_8, e_9, e_{11}, e_{13} \}, &
& M_{\alpha \gamma} = \{e_2, e_8, e_9, e_{12}, e_{13} \}, \\
& M_{\beta \gamma} = \{e_6, e_8, e_9, e_{11}, e_{12}\}. & & & &
\end{align*}
Then,  we obtain 
$\sum_{J \in \mathcal{E}_T} y_1(J) =1$ and
$\sum_{J \in \mathcal{E}_T} y_1(J) x_{M_J}  = x_T$,
where $x_{M_J} \in \mathbf{R}^{E^+}$ is the characteristic vector of $M_J$.  
This shows that $x_T$ is in the $b_T$-factor polytope in $G^+$. 
Therefore, $x_T$ satisfies (\ref{eq:03}) with respect to $G^+$ and $b_T$. 
By using this fact, we show (P1), (P3), (P4), and (P6). 

\begin{figure}
\begin{center}
\includegraphics[clip,width=5.0cm]{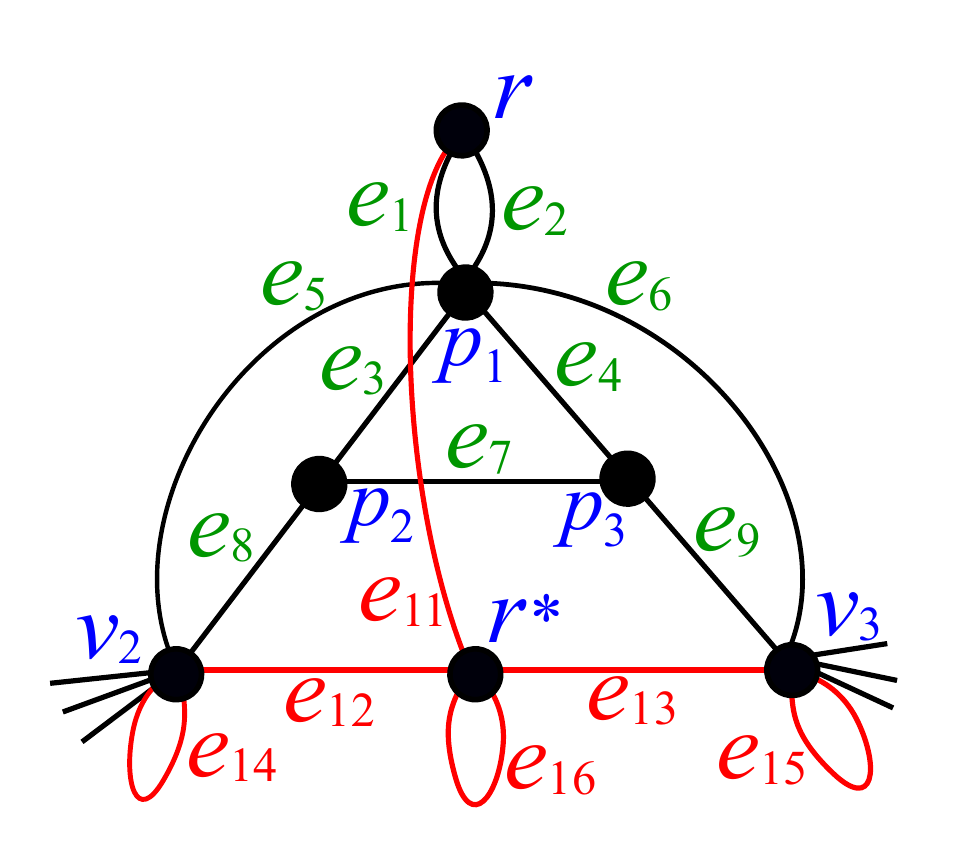}
\caption{Construction of $G^+$}
\label{fig:109}
\end{center}
\end{figure}

\paragraph{Proof of (P1)}
Suppose that $T$ is of type (B) and $v_2, v_3 \not\in S'$. 
If $b_1(\hat S) + |\hat F_1|$ is odd, then $b_T(\hat S) + |\hat F_1|$ is also odd. 
Since $x_T$ satisfies (\ref{eq:03}) with respect to $G^+$ and $b_T$, 
we obtain $g(\hat{F}_0, \hat{F}_1) \ge 1$. 
This shows that $h(S', F'_0, F'_1) \ge 1$, which is a contradiction. 
Therefore, $b_1(\hat S) + |\hat F_1|$ is even. 

\paragraph{Proof of (P3)}
Suppose that $T$ is of type (B), $v_2, v_3 \in S'$, and $b_1(\hat{S}) + |\hat F_1|$ is even. 
Since $b_T(\hat S \cup \{r^*, v_2, v_3\}) + |\hat F_1 \cup \{e_{11}\}|$ is odd and $x_T$ satisfies (\ref{eq:03}), we obtain  
$g(\hat F_0, \hat F_1) + (1-x_T(e_{11})) \ge 1$. 
Therefore, 
$g(\hat F_0, \hat F_1) \ge x_T(e_{11}) = y_\alpha + y_\gamma + y_{\alpha \beta} + y_{\beta \gamma}$.

\paragraph{Proof of (P4)}
Suppose that $T$ is of type (B), $v_2, v_3 \in S'$, and $b_1(\hat{S}) + |\hat F_1|$ is odd. 
Since $b_T(\hat S \cup \{r^*, v_2, v_3\}) + |\hat F_1|$ is odd and $x_T$ satisfies (\ref{eq:03}), we obtain  
$g(\hat F_0, \hat F_1) + x_T(e_{11}) \ge 1$. 
Therefore, 
$g(\hat F_0, \hat F_1) \ge 1 - x_T(e_{11}) = y_\emptyset + y_\beta + y_{\alpha \gamma}$.  

\paragraph{Proof of (P6)}
Suppose that $T$ is of type (B), $v_2 \in S'$, $v_3 \not\in S'$, and $b_1(\hat{S}) + |\hat F_1|$ is odd. 
Since $b_T(\hat S \cup \{v_2 \}) + |\hat F_1|$ is odd and $x_T$ satisfies (\ref{eq:03}), we obtain  
$g(\hat F_0, \hat F_1) + x_T(e_{12}) \ge 1$. 
Therefore, 
$g(\hat F_0, \hat F_1) \ge 1 - x_T(e_{12}) = y_\emptyset + y_\gamma + y_{\alpha \beta}$.  

\bigskip

In what follows, we show (P5) by a case analysis.

\paragraph{Proof of (P5)}
Suppose that $T$ is of type (B), $v_2 \in S'$, $v_3 \not\in S'$, and $b_1(\hat{S}) + |\hat F_1|$ is even. 
If $\hat S \cap \{p_1, p_3\} \not= \emptyset$ and $p_2 \not\in \hat S$, then 
we can add $p_2$ to $S'$ without decreasing the value of $h(S', F'_0, F'_1)$. 
Therefore, we can show 
$g(\hat F_0, \hat F_1) \ge \{x(\alpha) + x(\beta), 2 - x(\alpha) - x(\beta) - 2 y_\gamma \}$ 
by the following case analysis. 
\begin{itemize}
\item
Suppose that $\hat S = \{p_1, p_2, p_3\}$, which implies that $\hat F_0 \cup \hat F_1 = \{e_1, e_2, e_6, e_9\}$ and $|\hat F_1|$ is odd. 
\begin{itemize}
\item
If $\hat F_1 = \{e_i\}$ for $i \in \{1, 2, 6\}$, then 
$g(\hat F_0, \hat F_1) \ge (1- x_1(e_i)) + x_1(e_9) \ge x(\alpha) + x(\beta)$. 
\item
If $\hat F_1 = \{e_9\}$, then 
$g(\hat F_0, \hat F_1) =  y_\alpha + y_\beta + y_{\alpha \gamma} + y_{\beta \gamma} + 2 y_\emptyset =  2 - x(\alpha) - x(\beta) - 2 y_\gamma$. 
\item
If $|\hat F_1| = 3$, then 
$g(\hat F_0, \hat F_1) \ge 3 - (x_1(e_1) + x_1(e_2) + x_1(e_6) + x_1(e_9)) \ge x(\alpha) + x(\beta)$. 
\end{itemize}

\item
Suppose that $\hat S = \{p_1, p_2\}$, which implies that $\hat F_0 \cup \hat F_1 = \{e_1, e_2, e_4, e_6, e_7\}$ and $|\hat F_1|$ is even. 
\begin{itemize}
\item
If $\hat F_1 = \emptyset$, then  
$g(\hat F_0, \hat F_1) = x_1(e_1) + x_1(e_2) + x_1(e_4) + x_1(e_6) + x_1(e_7) =  2 - x(\alpha) - x(\beta) - 2 y_\gamma$. 
\item
If $|\hat F_1| \ge 2$, then 
$g(\hat F_0, \hat F_1) \ge 2 - (x_1(e_1) + x_1(e_2) + x_1(e_4) + x_1(e_6) + x_1(e_7)) \ge x(\alpha) + x(\beta)$. 
\end{itemize}

\item
Suppose that $\hat S = \{p_2\}$, which implies that $\hat F_0 \cup \hat F_1 = \{e_3, e_5, e_7\}$ and $|\hat F_1|$ is odd.  
\begin{itemize}
\item
If $\hat F_1 = \{e_i\}$ for $i \in \{3, 7\}$, then 
$g(\hat F_0, \hat F_1) \ge (1- x_1(e_i)) + x_1(e_5) \ge x(\alpha) + x(\beta)$. 
\item
If $\hat F_1 = \{e_5\}$, then 
$g(\hat F_0, \hat F_1) \ge (1-x_1(e_5)) + x_1(e_7) \ge 2 - x(\alpha) - x(\beta) - 2 y_\gamma$. 
\item
If $\hat F_1 = \{e_3, e_5, e_7\}$, then
$g(\hat F_0, \hat F_1) = 3 - (x_1(e_3) + x_1(e_5) + x_1(e_7)) \ge x(\alpha) + x(\beta)$. 
\end{itemize}

\item
Suppose that $\hat S = \{p_2, p_3\}$, which implies that $\hat F_0 \cup \hat F_1 = \{e_3, e_4, e_5, e_9\}$ and $|\hat F_1|$ is even.  
\begin{itemize}
\item
If $\hat F_1 = \emptyset$, then 
$g(\hat F_0, \hat F_1) = x_1(e_3) + x_1(e_4) + x_1(e_5) + x_1(e_9) =  x(\alpha) + x(\beta) + 2 y_{\gamma} \ge x(\alpha) + x(\beta)$. 
\item
If $|\hat F_1| \ge 2$, then 
$g(\hat F_0, \hat F_1) \ge 2 - (x_1(e_3) + x_1(e_4) + x_1(e_5) + x_1(e_9)) =  2- x(\alpha) - x(\beta) - 2 y_{\gamma}$. 
\end{itemize}

\item 
If $\hat S = \emptyset$, then $\hat F_0 \cup \hat F_1 = \{e_5, e_8\}$ and $|\hat F_1|$ is even.  
Therefore, 
$g(\hat F_0, \hat F_1) \ge \min \{x_1(e_5) + x_1(e_8), 2-x_1(e_5)-x_1(e_8) \} 
\ge \min\{x(\alpha) + x(\beta), 2 - x(\alpha) - x(\beta) - 2 y_\gamma \}$.  
\end{itemize}


\subsubsection{When $T$ is of type (B')}
\label{sec:subcase(B')}

Finally, we consider the case when $T$ is of type (B'). 

\paragraph{Proof of (P7)}
Suppose that $T$ is of type (B') and $v_1 \not\in S'$. 
If $b_1(\hat{S}) + |\hat F_1|$ is odd, then 
$\hat S =\{p_4\}$ and 
$h(S', F'_0, F'_1) \ge \min \{x_1(e_1) + x_1(e_{10}), 2 - x_1(e_1) - x_1(e_{10})\} = 1$, 
which is a contradiction. 
Therefore, $b_1(\hat{S}) + |\hat F_1|$ is even. 


\paragraph{Proof of (P8)}
Suppose that $T$ is of type (B'), $v_1 \in S'$, and $b_1(\hat{S}) + |\hat F_1|$ is even. 
If $p_4 \not\in S'$, then we define $(S'', F''_0, F''_1) \in \mathcal{F}_1$ as 
$(S'', F''_0, F''_1) = (S' \cup \{p_4\}, F'_0 \setminus \{e_{10}\}, F'_1\cup \{e_1\})$ if $e_{10} \in F'_0$ and 
$(S'', F''_0, F''_1) = (S' \cup \{p_4\}, F'_0 \cup \{e_1\}, F'_1 \setminus \{e_{10}\})$ if $e_{10} \in F'_1$. 
Since $h(S'', F''_0, F''_1) = h(S', F'_0, F'_1)$, by replacing $(S', F'_0, F'_1)$ with $(S'', F''_0, F''_1)$, 
we may assume that $p_4 \in S'$. 
Then, since $\hat F_0 \cup \hat F_1 = \{e_1, e_2\}$ and $|\hat F_1|$ is odd, we obtain
\begin{align*}
g(\hat F_0, \hat F_1) 
&\ge \min\{ (1-x_1(e_1)) + x_1(e_2), x_1(e_1) + (1-x_1(e_2)) \} \\
&\ge \min\{ x(\alpha) + x(\gamma), 2 - x(\alpha) - x(\gamma) - 2y_\beta \}.  
\end{align*}


\paragraph{Proof of (P9)}
Suppose that $T$ is of type (B'), $v_1 \in S'$, and $b_1(\hat{S}) + |\hat F_1|$ is odd. 
In the same way as (P8), 
we may assume that $\hat S = \{p_4\}$, $\hat F_0 \cup \hat F_1 = \{e_1, e_2\}$, and $|\hat F_1|$ is even. 
Then, 
\begin{align*}
g(\hat F_0, \hat F_1) 
&\ge \min\{ x_1(e_1) + x_1(e_2), (1-x_1(e_1)) + (1-x_1(e_2)) \} \\
&= \min\{ y_\emptyset + y_\beta + y_{\alpha \gamma}, 2 - (y_\emptyset + y_\beta + y_{\alpha \gamma})\} \\
&=  y_\emptyset + y_\beta + y_{\alpha \gamma}.  
\end{align*}

\subsection{Condition (\ref{eq:100})}
\label{sec:Proof100}

Recall that $r \not\in S'$ is assumed and note that $x_1(\delta_{G_1}(r)) = 1$. 
Let $\mathcal{T}_{(P3)} \subseteq \mathcal{T}^+_{S^*}$ be the set of triangles satisfying the conditions in (P3), 
i.e., the set of triangles of type (B) such that $v_2, v_3 \in S'$ and $b_1(\hat{S}) + |\hat F_1|$ is even. 
Since $y_{\alpha} + y_{\gamma} + y_{\alpha \beta} + y_{\beta \gamma} = 1 - x_1(e^{T}_1) - x_1(e^{T}_2)$ holds for each triangle $T \in \mathcal{T}^+_{S^*}$ of type (B), 
if there exist two triangles $T, T' \in \mathcal{T}_{(P3)}$, then 
$h(S', F'_0,  F'_1) \ge (1- x_1(e^{T}_1) - x_1(e^{T}_2)) + (1- x_1(e^{T'}_1) - x_1(e^{T'}_2)) \ge 2 - x_1(\delta_{G_1}(r)) = 1$, which is a contradiction. 
Similarly, if there exists a triangle $T \in \mathcal{T}_{(P3)}$ and an edge $e \in (\delta_{G_1}(r) \setminus E'_{T}) \cap F'_1$, 
then $h(S', F'_0,  F'_1) \ge (1- x_1(e^{T}_1) - x_1(e^{T}_2)) + (1- x_1(e)) \ge 2 - x_1(\delta_{G_1}(r)) = 1$, which is a contradiction. 
Therefore,  either $\mathcal{T}_{(P3)} = \emptyset$ holds or 
$\mathcal{T}_{(P3)}$ consists of exactly one triangle, say $T$, and $(\delta_{G_1}(r) \setminus E'_{T}) \cap F'_1 = \emptyset$.

Assume that $\mathcal{T}_{(P3)} = \{T\}$ and $(\delta_{G_1}(r) \setminus E'_{T}) \cap F'_1 = \emptyset$. 
Define $(S'', F''_0, F''_1) \in \mathcal{F}_1$ as 
\begin{align*}
& S'' = S' \cup V'_{T}, & &  F''_0 = (F'_0  \triangle \delta_{G_1}(r)) \setminus E'_{T}, &  & F''_1 = F'_1 \setminus E'_{T}, 
\end{align*}
where $\triangle$ denotes the symmetric difference. 
Note that $(F''_0, F''_1)$ is a partition of $\delta_{G_1} (S'')$, 
$b_1(S'') + |F''_1| = (b_1 (S') + b_1(\hat{S}) ) + (|F'_1| - |\hat F_1|) \equiv 1 \pmod{2}$, and 
$h(S', F'_0, F'_1) - h(S'', F''_0, F''_1) \ge (1 - x_1(e^{T}_1) - x_1(e^{T}_2)) - x_1( \delta_{G_1}(r) \setminus \{ x_1(e^{T}_1), x_1(e^{T}_2)\}) = 0$. 
By these observations, $(S'', F''_0, F''_1) \in \mathcal{F}_1$ is also a minimizer of $h$. 
This shows that $(V'' \setminus S'', F''_0, F''_1) \in \mathcal{F}_1$ is a minimizer of $h$ such that 
$r \in V'' \setminus S''$. 
Furthermore, if a triangle $T' \in \mathcal{T}^+_{S^*}$ satisfies the conditions in (P3) with respect to $(V'' \setminus S'', F''_0, F''_1)$, 
then $T'$ is a triangle of type (B) such that $v_2, v_3 \not\in S'$ and $b_1(\hat{S}) + |\hat F_1|$ is odd with respect to $(S', F'_0, F'_1)$, 
which contradicts (P1). 
Therefore, by replacing $(S', F'_0,  F'_1)$ with $(V'' \setminus S'', F''_0, F''_1)$, 
we may assume that $\mathcal{T}_{(P3)} = \emptyset$. 

In what follows, we construct $(S, F_0, F_1) \in \mathcal{F}$ for which 
 $(x, y)$ violates (\ref{eq:100}) to derive a contradiction. 
We initialize $(S, F_0, F_1)$ as 
\begin{align*}
& S = S' \cap V, & & F_0 = F'_0 \cap E, & &  F_1 = F'_1 \cap E,  
\end{align*}
and apply the following procedures for each triangle $T \in \mathcal{T}^+_{S^*}$. 
\begin{itemize}
\item
Suppose that $T$ satisfies the condition in (P1) or (P7). 
In this case, we do nothing. 
\item
Suppose that $T$ satisfies the condition in (P2) or (P8). 
If $g(\hat F_0, \hat F_1) \ge x(\alpha) + x(\gamma)$, then 
add $\alpha$ and $\gamma$ to $F_0$. 
Otherwise, since $g(\hat F_0, \hat F_1) \ge 2-x(\alpha)-x(\gamma) - 2 y_\beta$, 
add $\alpha$ and $\gamma$ to $F_1$.
\item
Suppose that $T$ satisfies the condition in (P4) or (P9). 
In this case, add $\alpha$ to $F_0$ and add $\gamma$ to $F_1$. 
\item
Suppose that $T$ satisfies the condition in (P5). 
If $g(\hat F_0, \hat F_1) \ge x(\alpha) + x(\beta)$, then 
add $\alpha$ and $\beta$ to $F_0$. 
Otherwise, since $g(\hat F_0, \hat F_1) \ge 2-x(\alpha)-x(\beta) - 2 y_\gamma$, 
add $\alpha$ and $\beta$ to $F_1$.
\item
Suppose that $T$ satisfies the condition in (P6). 
In this case, add $\alpha$ to $F_0$ and add $\beta$ to $F_1$. 
\end{itemize}
Note that exactly one of the above procedures is applied for each $T \in \mathcal{T}^+_{S^*}$, 
because $\mathcal{T}_{(P3)} = \emptyset$. 

Then, we see that $(S, F_0, F_1) \in \mathcal{F}$ holds and 
the left-hand side of (\ref{eq:100}) with respect to $(S, F_0, F_1)$ is 
at most $h(S', F'_0, F'_1)$ by (P1)--(P9). 
Since $h(S', F'_0, F'_1) < 1$ is assumed,  
 $(x, y)$ violates (\ref{eq:100}) for 
$(S, F_0, F_1) \in \mathcal{F}$, 
which is a contradiction.  \qed


\section{Proof of Claim~\ref{clm:key03}}
\label{sec:key03}

We can easily see that replacing $(M_1 \cup M_2)  \cap \{e^f \mid f \in \tilde{F}^*_0\}$ with $\{ f \in \tilde{F}^*_0 \mid {e}^f \in M_1 \cap M_2 \}$ does not 
affect the degrees of vertices in $V$. 
Since $M_1 \cup M_2$ contains exactly one of $\{e^f_u, e^f_v\}$ or ${e}^f_r (= {e}^f_{r'})$ for $f = uv \in \tilde{F}^*_1$, 
replacing $(M_1 \cup M_2)  \cap \{e^f_u, e^f_r, e^f_v \mid f = uv \in \tilde{F}^*_1\}$ with 
$\{ f \in \tilde{F}^*_1 \mid e^f_r \not\in M_1 \cap M_2 \}$ does not 
affect the degrees of vertices in $V$. 

For every $T \in \mathcal{T}^+_{S^*}$ of type (A) or (A'), 
since 
\begin{align*}
&|\varphi(M_1, M_2, T) \cap \{\alpha, \gamma\}| = |M_T \cap \{e_8, e_9\}|,  \\
&|\varphi(M_1, M_2, T) \cap \{\alpha, \beta\}| = |M_T \cap \{e_3, e_5\}|, \mbox{ and}   \\
&|\varphi(M_1, M_2, T) \cap \{\beta, \gamma\}| = |M_T \cap \{e_4, e_6\}|   
\end{align*}
hold by the definition of $\varphi(M_1, M_2, T)$, replacing $M_T$ with $\varphi(M_1, M_2, T)$ 
does not affect the degrees of vertices in $V$. 

Furthermore, for every $T \in \mathcal{T}^+_{S^*}$ of type (B) or (B'), 
since 
\begin{align*}
&|\varphi(M_1, M_2, T) \cap \{\alpha, \gamma\}| = |M_T \cap \{e_2, e_{10}\}|,  \\
&|\varphi(M_1, M_2, T) \cap \{\alpha, \beta\}| = |M_T \cap \{e_5, e_8\}|, \mbox{ and}   \\
&|\varphi(M_1, M_2, T) \cap \{\beta, \gamma\}| = |M_T \cap \{e_6, e_9\}|   
\end{align*}
hold by the definition of $\varphi(M_1, M_2, T)$, replacing $M_T$ with $\varphi(M_1, M_2, T)$ 
does not affect the degrees of vertices in $V$. 

Since $b(v) = b_1(v)$ for $v \in S^*$ and $b(v) = b_2(v)$ for $v \in V^* \setminus S^*$, 
this shows that 
$M_1 \oplus M_2$ forms a $b$-factor. 
Since $M_j$ is $\mathcal{T}_j$-free for $j \in \{1, 2\}$, 
$M_1 \oplus M_2$ is a $\mathcal{T}$-free $b$-factor. \qed


\section{Proof of Claim~\ref{clm:key04}}
\label{sec:key04}

By the definitions of $x_1, x_2$, and $M_1 \oplus M_2$, (\ref{eq:52}) and (\ref{eq:53}) show that
\begin{equation}
x(e) = \sum_{(M_1, M_2) \in \mathcal{M}}  \lambda_{(M_1, M_2)} x_{M_1 \oplus M_2}(e)  \label{eq:80}
\end{equation} 
for $e \in E \setminus \bigcup_{T \in \mathcal{T}^+_{S^*}} E(T)$.

Let $T \in \mathcal{T}^+_{S^*}$ be a triangle of type (A) for $(G_1, b_1, \mathcal{T}_1)$
and let $\alpha, \beta$, and $\gamma$ be as in Figures~\ref{fig:105} and~\ref{fig:106}. 
By the definition of $\varphi(M_1, M_2, T)$, we obtain
\begin{align*}
&\sum_{(M_1, M_2) \in \mathcal{M}}  \lambda_{(M_1, M_2)} x_{M_1 \oplus M_2}(\beta) \\
&\quad = \sum \{ \lambda_{(M_1, M_2)} \mid  \mbox{$\varphi(M_1, M_2, T) = \{\alpha, \beta\}, \{\beta, \gamma\}$, or $\{\beta \}$} \}  \\
&\quad = \sum \{ \lambda_{(M_1, M_2)} \mid  e_3 \in M_T \} + \sum \{ \lambda_{(M_1, M_2)} \mid  e_4 \in M_T \} + \sum \{ \lambda_{(M_1, M_2)} \mid  e_7 \in M_T \} \\
&\quad = x_1(e_3) + x_1(e_4) + x_2(e_7) \\
&\quad = y_{\alpha \beta} + y_{\beta \gamma} + y_\beta = x(\beta).  
\end{align*}
We also obtain
\begin{align*}
&\sum_{(M_1, M_2) \in \mathcal{M}}  \lambda_{(M_1, M_2)} x_{M_1 \oplus M_2}(\alpha)  \\
&\quad = \sum \{ \lambda_{(M_1, M_2)} \mid  \varphi(M_1, M_2, T) \not= \{\gamma \}, \{\beta, \gamma\}, \{\beta\} \}  \\
&\quad = 1 - \sum \{ \lambda_{(M_1, M_2)} \mid  e_1 \in M_T \} - \sum \{ \lambda_{(M_1, M_2)} \mid  e_4 \in M_T \} - \sum \{ \lambda_{(M_1, M_2)} \mid  e_7 \in M_T \} \\
&\quad = 1 - x_1(e_1) - x_1(e_4) - x_2(e_7) \\ 
&\quad = 1- y_{\emptyset} - y_{\gamma} - y_{\beta \gamma} - y_\beta = x(\alpha).   
\end{align*}
Since a similar equality holds for $\gamma$ by symmetry, (\ref{eq:80}) holds for $e \in \{\alpha, \beta, \gamma\}$. 
Since $T$ is a triangle of type (A') for $(G_1, b_1, \mathcal{T}_1)$ if and only if it is of type (A) for $(G_2, b_2, \mathcal{T}_2)$, 
the same argument can be applied when $T$ is a triangle of type (A') for $(G_1, b_1, \mathcal{T}_1)$.

Let $T \in \mathcal{T}^+_{S^*}$ be a triangle of type (B) for $(G_1, b_1, \mathcal{T}_1)$
and let $\alpha, \beta$, and $\gamma$ be as in Figures~\ref{fig:107} and~\ref{fig:108}. 
By the definition of $\varphi(M_1, M_2, T)$, we obtain
\begin{align*}
&\sum_{(M_1, M_2) \in \mathcal{M}}  \lambda_{(M_1, M_2)} x_{M_1 \oplus M_2}(\beta)  \\
&\quad = \sum \{ \lambda_{(M_1, M_2)} \mid \varphi(M_1, M_2, T) \not= \emptyset, \{\alpha \}, \{\gamma \}, \{\alpha, \gamma \} \}  \\
&\quad = 1 - \sum \{ \lambda_{(M_1, M_2)} \mid  e_2 \in M_T \} - \sum \{ \lambda_{(M_1, M_2)} \mid  e_3 \in M_T \}  \\ 
&\qquad \qquad - \sum \{ \lambda_{(M_1, M_2)} \mid  e_4 \in M_T \} - \sum \{ \lambda_{(M_1, M_2)} \mid  e_7 \in M_T \} \notag \\
&\quad = 1 - x_1(e_2) - x_1(e_3) - x_1(e_4) - x_1(e_7) \\
&\quad = 1 - y_{\alpha \gamma} - y_{\gamma} - y_{\alpha} - y_{\emptyset} = x(\beta).  
\end{align*}
We also obtain
\begin{align*}
&\sum_{(M_1, M_2) \in \mathcal{M}}  \lambda_{(M_1, M_2)} x_{M_1 \oplus M_2}(\alpha) \\
&\quad = \sum \{ \lambda_{(M_1, M_2)} \mid  \mbox{$\varphi(M_1, M_2, T) = \{\alpha \}, \{\alpha, \beta\}$, or $\{\alpha, \gamma \}$} \} \\
&\quad = \sum \{ \lambda_{(M_1, M_2)} \mid  e_2 \in M_T \} + \sum \{ \lambda_{(M_1, M_2)} \mid  e_4 \in M_T \} + \sum \{ \lambda_{(M_1, M_2)} \mid  e_5 \in M_T \} \\
&\quad = x_1(e_2) + x_1(e_4) + x_1(e_5) \\ 
&\quad = y_{\alpha \gamma} + y_{\alpha} + y_{\alpha \beta} = x(\alpha).  
\end{align*}
Since a similar equality holds for $\gamma$ by symmetry, (\ref{eq:80}) holds for $e \in \{\alpha, \beta, \gamma\}$. 
The same argument can be applied when $T$ is a triangle of type (B') for $(G_1, b_1, \mathcal{T}_1)$. 

Therefore, (\ref{eq:80}) holds for every $e \in E$, which complete the proof of the claim.  \qed

\end{document}